\numberwithin{equation}{section}
\newcommand{\nul}{\mathbf{0}}
\newcommand{\qtx}[1]{\quad\text{#1}\quad}
\newcommand{\beq}{\begin{equation}}
\newcommand{\eeq}{\end{equation}}
\newcommand{\bl}{\begin{lemma}}
\newcommand{\el}{\end{lemma}}
\newcommand{\bc}{\begin{coro}}
\newcommand{\ec}{\end{coro}}
\newcommand{\bp}{\begin{prop}}
\newcommand{\ep}{\end{prop}}
\newcommand{\bd}{\begin{defini}}
\newcommand{\ed}{\end{defini}}
\DeclareMathOperator{\im}{\mathrm{Im}}  
\DeclareMathOperator{\re}{\mathrm{Re}}
\newtheorem{theorem}{Theorem}[section]
\newtheorem{lemma}[theorem]{Lemma}
\newtheorem{coro}[theorem]{Corollary}
\newtheorem{prop}[theorem]{Proposition}
\newtheorem{remark}[theorem]{Remark} 
\newtheorem{defini}[theorem]{Definition}
\newtheorem*{assumptions}{Assumptions}
\newcommand{\lb}{\lambda}
\newcommand{\ze}{\zeta}
\newcommand{\bvp}{{\boldsymbol{\varphi}}}
\newcommand{\Dd}{\mathcal{D}}
\newcommand{\I}{\mathcal{I}}
\newcommand{\Hh}{\mathcal{H}}
\newcommand{\Ff}{\mathcal{F}}
\newcommand{\Kk}{\mathcal{K}}
\newcommand{\Tt}{\mathcal{T}}
\newcommand{\cB}{\mathcal{B}}
\newcommand{\Bb}{\mathcal{B}}
\newcommand{\Aa}{\mathcal{A}}
\newcommand{\Mm}{\mathcal{M}}
\newcommand{\Nn}{\mathcal{N}}
\newcommand{\Ll}{\mathcal{L}}
\newcommand{\Tr}{{\rm Tr}}
\newcommand{\Cc}{\mathcal{C}}
\newcommand{\Pp}{\mathcal{P}}
\newcommand{\Ee}{\mathcal{E}}
\newcommand{\Ss}{\mathcal{S}}
\newcommand{\Oo}{\mathcal{O}}
\newcommand{\RR}{\mathbb{R}}
\newcommand{\NN}{\mathbb{N}}
\newcommand{\ZZ}{\mathbb{Z}}
\newcommand{\CC}{\mathbb{C}}
\newcommand{\VV}{\mathbb{V}}
\newcommand{\TT}{\mathbb{T}}
\newcommand{\E}{\mathbb E}
\newcommand{\G}{\mathbb G}
\newcommand{\aaa}{\mathbf{a}}
\newcommand{\bbb}{\mathbf{b}}
\newcommand{\Sym}{{\rm Sym}}
\newcommand{\diag}{{\rm diag}}
\newcommand{\sgn}{{\rm sgn}}
\newcommand{\hn}{|\!|\!|}
\newcommand{\hnn}{|\!|\!|\!|}
\newcommand{\PE}{\mbox{$\Pp\Ee$}}
\newcommand{\vze}{\vec{\zeta}}
\newcommand{\vxi}{\vec{\xi}}
\newcommand{\one}{{\bf 1}}
\newcommand{\blb}{{\boldsymbol\theta}}
\newcommand{\PP}{\mathfrak{P}}
\newcommand{\bpart}{\boldsymbol{\partial}}
\newcommand{\brc}{\mathbf{c}}
\newcommand{\III}{\boldsymbol{\I}}
\newcommand{\pmat}[1]{\begin{pmatrix} #1  \end{pmatrix}}
\newcommand{\smat}[1]{\left( \begin{smallmatrix} #1  \end{smallmatrix} \right)}
\newcommand{\wt}{\widetilde}
\begin{document}

\title[A.C. Spectrum on the Fibonacci and similar tree-strips]
{Absolutely Continuous Spectrum for random Schr\"odinger operators on the Fibonacci and similar tree-strips}

\author{Christian Sadel}
\address{Mathematics Department, University of British Columbia, Vancouver, BC, V6T 1Z2,  Canada; and Institute of Science and Technology Austria, 3400 Klosterneuburg}
\email{Christian.Sadel@ist.ac.at}

\thanks{This research was supported by NSERC Discovery grant 92997-2010 RGPIN and 
by the People Programme (Marie Curie Actions) of the EU 7th Framework Programme FP7/2007-2013, REA grant 291734.}

\subjclass[2010]{Primary 82B44, Secondary 47B80, 60H25}  
\keywords{random Schr\"odinger operators, Anderson model, Fibonacci tree, extended states, 
absolutely continuous spectrum.}


\begin{abstract}
We will consider cross products of finite graphs with a class of trees that have arbitrarily but finitely long line segments, such as the
Fibonacci tree. Such cross products are called tree-strips.
We prove that for small disorder random Schr\"odinger operators on such tree-strips have purely absolutely continuous spectrum in
a certain set.
\end{abstract}

\maketitle 

\section{Introduction} 

It will be most convenient to describe the trees considered in this work by a substitution rule given by a substitution matrix 
$(S_{pq})_{p,q=0}^L \in \ZZ_+^{(L+1)\times(L+1)}$ with positive integer entries.
The trees are constructed starting from a root and the matrix gives the rule how to substitute a vertex by its children going to the next generation. 
Here, 'generation' describes the graph distance from the root, the 'children' of a vertex
are all connected neighbors whose graph distance to the root is increased by one, the other neighbor will be called 'parent'.

The precise substitution rule is the following: Each vertex $x$ of the tree has a label $l(x)\in\{0,\ldots,L\}$, one starts with the root with a certain label, then each vertex of label $p$ has exactly $S_{pq}$ children of label $q$.
The (isomorphy class of the) tree is then determined by the matrix $S$ and the label of the root.

We will consider the trees for the substitution matrices $S^{K,L}=[S^{K,L}_{p,q}]_{p,q=0}^L$ with  
$S^{K,L}_{0,0}=K$, $S^{K,L}_{p,p+1}=1$, $S^{K,L}_{L,0}=1$, and all other entries $0$,
where $K\geq 1$ and $L\geq 1$ are integers,
\begin{equation}\label{eq-def-S_L,K}
S^{K,L}\,=\,\pmat{K & 1 & \\ & & \ddots \\ & & & 1 \\ 1 & 0 & \cdots & 0}\;, \qtx{e.g. for $L=1$ ,}
S^{K,1}=\pmat{K & 1 \\ 1 & 0}\,.
\end{equation}
 We denote the tree with root label $p\in\{0,\ldots,L\}$ and substitution matrix $S^{K,L}$ by $\TT^{K,L}_p$ and then define the forest $\TT^{K,L}$ to be the disjoint, disconnected union of the $\TT^{K,L}_p$, i.e. $\TT^{K,L}=\bigcup_{p=0}^L \TT^{K,L}_p$.
Another way to think of the tree $\TT^{K,L}_0$ is to start with the rooted Bethe lattice
where each vertex has $K+1$ children (i.e. the root has $K+1$ neighbors and any other vertex has $K+2$ neighbors) 
and then for each vertex one takes one of the $K+1$ forward edges (going to the next generation) and puts $L$ additional vertices on them (cf. Figure~\ref{fig0}).

\begin{figure}[ht]
\begin{center}
 \includegraphics[width=9cm]{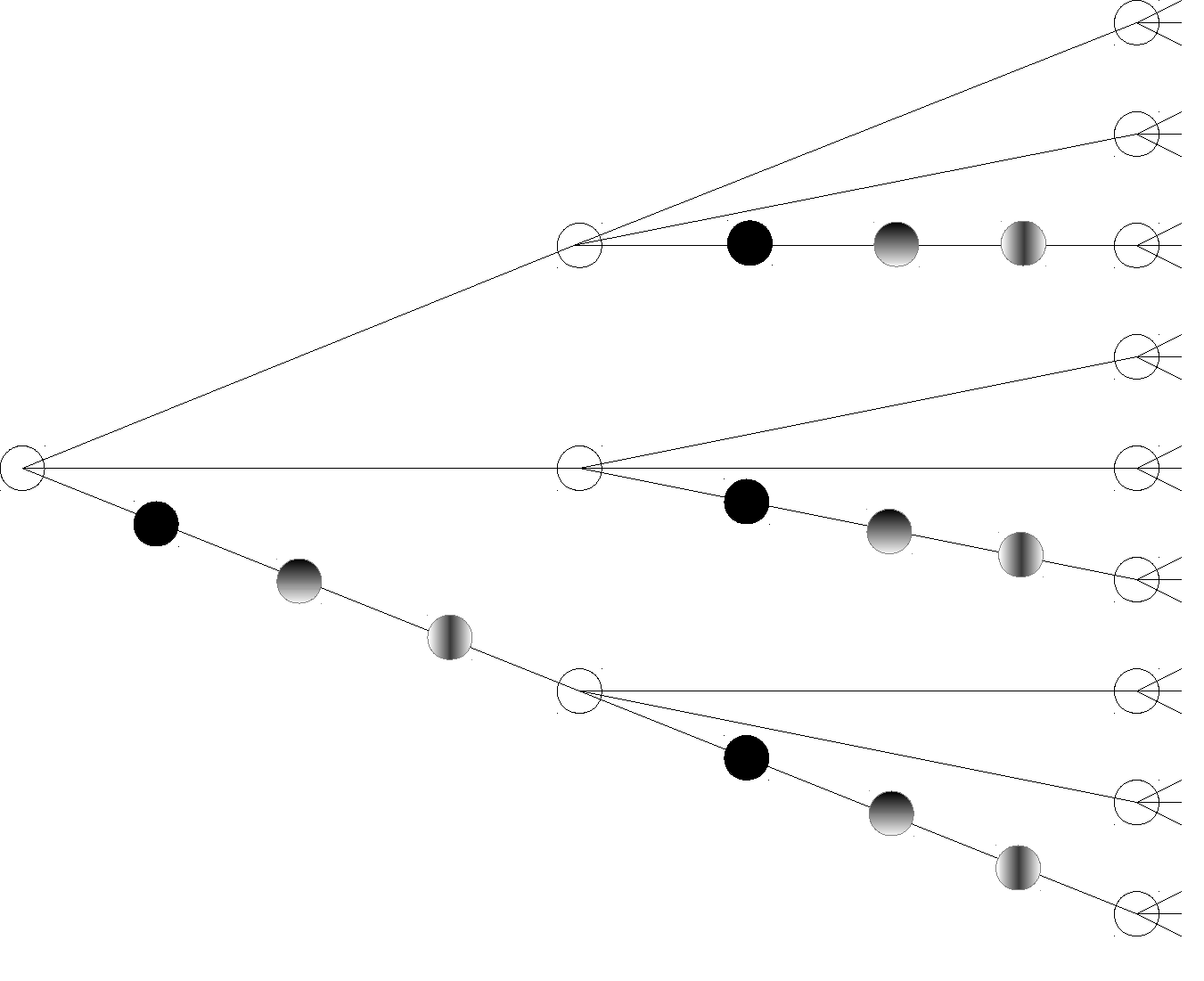}
\end{center}
\caption{The tree $\TT^{2,3}_0$, $K=2,$ $L=3$.
The open circles are vertices of label $0$, the full filled circles are vertices of label $1$, each vertex of label $1$ is followed by one vertex of label $2$ and one of label $3$. These labels are indicated by different shadings of the circles.}\label{fig0}
\end{figure}

The Fibonacci trees are the trees associated to the substitution matrix 
$S^{1,1} =\smat{ 0 & 1 \\ 1 & 1 }$, i.e. each vertex $x$ of the tree has either the label $l(x)=0$ or $l(x)=1$, each vertex with label $0$ has a child with label $0$ and one with label $1$ and each child with label $1$ has one child with label $0$.
So each vertex of label $1$, except for possibly the root, has 2 neighbors (one parent and 1 child), and each vertex of label $0$, except
for possibly the root, has 3 neighbors (one parent and 2 children).
(see Figure~\ref{fig1}).

These trees are called Fibonacci trees for the following reason.
Let $\#_n(\TT^{1,1}_{j})$ denote the number of vertices in the $n$-th generation of the tree $\TT^{1,1}_j$, the root being the first generation.
Moreover, let $f_n$ denote the $n$-th Fibonacci number starting with $f_1=1,\;f_2=1$. Then, one has
$\#_n(\TT^{1,1}_{1})=f_n$ and $\#_n(\TT^{1,1}_{0})=f_{n+1}$.

\begin{figure}[ht]
\begin{center}
 \includegraphics[width=7cm]{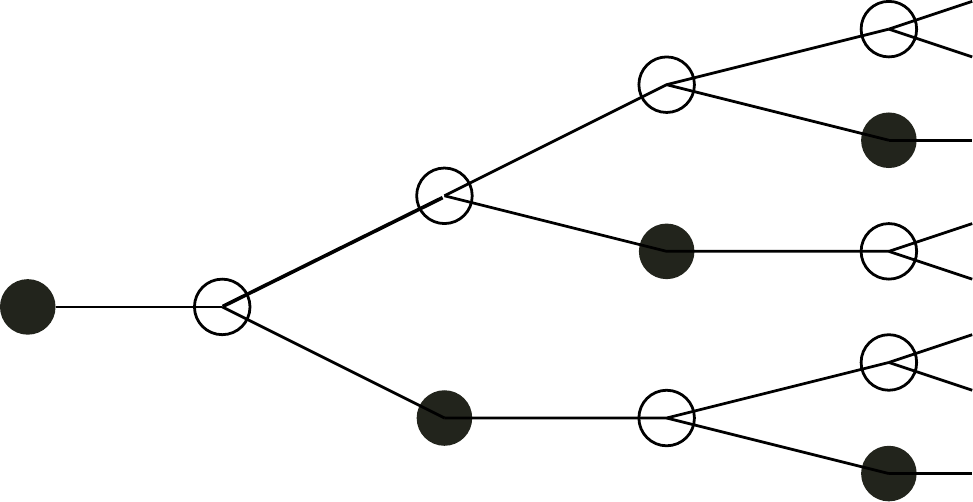}
\end{center}
\caption{Fibonacci tree $\TT^{1,1}_1$. Vertices of label 1 are filled circles and vertices of label 0 are non-filled circles.
The trees $\TT^{1,L}_1$ look similar where each filled circle has to be replaced by
a chain (line-segment) of $L$ vertices.
}\label{fig1}
\end{figure}

By $d(x,y)$ we denote the graph distance of $x,y \in \TT^{K,L}$, where
$d(x,y)=\infty$ if $x$ and $y$ are elements of different connected components $\TT^{K,L}_p$.

A tree-strip is the cross product of a tree with a finite set $\I=\{1,\ldots,m\}$.
On $\ell^2(\TT^{K,L}\times\I)\cong \ell^2(\TT^{K,L},\CC^m)=\{u:\TT^{K,L}\to\CC^m,\,\sum_{x\in\TT^{K,L}} \|u(x)\|^2 < \infty\}$ 
which is also canonically equivalent to 
$\bigoplus_{p=0}^L \ell^2(\TT^{K,L}_p,\CC^m)$ and
$\ell^2(\TT^{K,L})\otimes\CC^m$, we define the random operators
\begin{equation} \label{eq-H_lb}
(H_{\lb} u)(x)=\left(\sum_{y:d(x,y)=1} u(y)\right) + Au(x) + \lb V(x) u(x)\,.
\end{equation}
Here, $A\in \Sym(m)$ represents the 'free vertical operator'
and the matrices $V(x)\in\Sym(m)$ for $x\in\TT^{K,L}$ are independent identically distributed random variables, 
distributed according to some probability measure $\nu$ on $\Sym(m)$ and scaled by the coupling constant $\lambda$.
$\Sym(m)$ denotes the set of real symmetric $m\times m$ matrices.
These operators might be either thought of to model one particle on the product $\TT^{K,L}\times\I$
or to model one particle on $\TT^{K,L}$
with internal degrees of freedom and random hopping between these internal degrees, described by $A$ and $V(x)$.
Clearly, $H_\lb=\bigoplus_{p=0}^L H^{(p)}_\lambda$, where $H^{(p)}_{\lb}$ is the restriction of $H_\lb$ to $\ell^2(\TT^{K,L}_p,\CC^m)$ and can be seen
as  random Schr\"odinger operator on the tree strip $\TT^{K,L}_p\times \I$.

If $\I=\G$ is a finite graph, then $\TT^{K,L} \times \G$ can be interpreted as the product graph where
$(x,k), (y,j) \in\TT^{K,L} \times \G$ are connected by an edge, if either $x=y$ and $k, j\in\G$ are connected by an edge,
or $k=j$ and $x,y \in \TT^{K,L}$ are connected by an edge. 
If $A$ is chosen to be the adjacency matrix of $\G$ and $V(x)$ diagonal with i.i.d. entries, then $H_0$ is the adjacency operator on this product graph and $H_\lb$ corresponds to the Anderson model on this product graph.
 
For the Anderson model on $\ZZ^d$ or $\RR^d$ in any dimension $d$, Anderson localization is proved at spectral edges and for high disorder
\cite{FS,FMSS,DLS,SW,CKM,DK,Kl2,AM,A,Wang,Klo}. It is also known to hold for one dimensional \cite{GMP,KuS,CKM}
and quasi-one dimensional models like strips \cite{Lac,KLS}
and finite dimensional trees \cite{Breu}, except if a built in symmetry prevents localization as e.g. in \cite{SS}.
In dimensions $d\geq 3$ the Anderson model is expected to have some absolutely continuous spectrum (short a.c. spectrum) for low disorder whereas for $d=2$ one expects localization. These conjectures remain open problems.
The existence of a.c. spectrum has only been proved for the Anderson model on trees and other
tree-like graphs of infinite dimension with exponentially growing boundary
\cite{Kl3, Kl4, Kl6, ASW, FHS, FHS2, H, AW, KLW, KLW2, FHH, KS, Sad, Sha}.
This work adds some more examples to this list.
It appears that the hyperbolic nature of such graphs leads to conservation of
a.c. spectrum and ballistic dynamical behavior \cite{Kl5, KS2, AW2} and these results
should hold for much more general hyperbolic graphs. Therefore, it may be worth it to further
generalize the results and identify the technical problems occurring in this process.
Also, a recent review emphasized the importance of trees of finite cone type
\cite{KLW3} and the trees $\TT^{K,L}_p$ belong to this class.
If a tree has an assigned root $0$ then the $n$-th generation is the set of vertices $x$ with graph distance $d(x,0)=n$.
The cone of descendants of $x$ is then defined as the set of vertices $y$, where the shortest path to the root goes through $x$, i.e. 
the set of $y$ such that $d(0,y)=d(0,x)+d(x,y)$.
The phrase 'finite cone type' refers to the fact, that there are only finitely many different (isomorphy classes of) 
cones of descendants. Using the isomorphy class of the cone as label, each tree of finite cone type can be associated to a substitution matrix, and each tree associated to a substitution matrix is a tree of finite cone type.
For the case considered here, the trees $\TT^{K,L}_p$ for $p\in\{0,\ldots,L\}$ 
are exactly the different isomorphy classes of cones of descendants.

In my previous work \cite{Sad} I already considered random Schr\"odinger operators on tree-strips of finite cone type. However, none of these trees $\TT^{K,L}_p$ were covered there.
One of the main assumptions needed in \cite{Sad} was that every vertex has at least 2 children which played a significant role at various places. 
This means the trees could not have any line segment, that is a vertex or chain of vertices not being the root which has only two neighbors, one parent and one child.
The trees $\TT_p^{K,L}$ have line segments of length $L$. In some sense these are the simplest trees with that property. 
The main argument in \cite{Sad} is adapted from \cite{Kl3, KS} and uses a fixed point equation and
the Implicit Function Theorem performed in some Banach spaces
that are associated to supersymmetric functions.
As we will see, for the tree-strips considered here this technique still works, but there are quite a few technical subtleties 
that are pointed out in this work. 
The Implicit Function Theorem has to be applied in a slightly different Banach space
($\Hh_\infty \times \Hh^L$ instead of $\Hh_\infty^{L+1}$).
In general, $\Hh_p$, $1\leq p\leq \infty$ is in some sense the intersection of a supersymmetric $L^2$ and $L^\infty$ space.
For the set up of the fixed point equation in \cite{Sad} it was important to always have a product of at least two such functions which then is a $L^2$ and $L^1$ function, so that the Fourier transform is mapping them back to a $L^2$ and $L^\infty$ function. The line segments in $\TT^{K,L}_p$ lead to the fact that we do not have such products of functions so that a Fourier transform is just giving an $L^2$ but not necessarily an $L^\infty$ function.
The change of this Banach spaces also leads to adjustments in the inductive Proposition~\ref{prop-xize} and the final arguments giving a continuous extension of the Green's matrix to real energies which is given by certain integrals (cf. \eqref{eq-EG-x-D} and \eqref{eq-EGG-x-D}).
For this it is important that the terms inside the integral extend continuously in a supersymmetric $L^1$, something that can still be achieved here.
In former works \cite{KS, Sad} these terms even extended in $\Hh_1$, which is not anymore the case here.
For the analysis of the Frechet derivative, compactness of a certain operator is needed which 
also demands some additional work in this case (cf. Proposition~\ref{lem-op-compact}).
It relies on the identities mentioned in Appendix~\ref{app-FMF}.
Some of the different used arguments need a stronger assumption on the distribution of the
matrix valued potential $V(x)$, namely it has to be compactly supported.
Another new aspect in this work is the use of the identity given in Proposition~\ref{prop-gr} to obtain that a certain Frechet derivative is invertible.
This method would not work for all the trees considered in \cite{Sad} where an extremely technical described set of energies where the Frechet derivative
is not invertible, had to be removed.

For the Fibonacci tree one can explicitly calculate the spectrum for the
adjacency operator (cf. Proposition~\ref{prop-spectrum}), therefore the main theorem is less technical for this case.

Moreover, in \cite[Theorem 1.2]{Sad}  some
set of energies had to be excluded to get the almost sure a.c. spectrum. 
This set was given by a very technical condition which was shown to remove a nowhere
dense set in a certain case.
In Lemma~\ref{lem-not-1} 
we show that this condition is never satisfied for the trees considered here,
hence we do not have to remove certain energies.
The argument is based on an identity satisfied by the Green's functions and shown in Appendix~\ref{app-gamma}.

\vspace{.2cm}

Considering \eqref{eq-H_lb},
let us remark that there is an orthogonal matrix $O\in{\rm O}(m)$ such that $O^\top A O$ is diagonal.
Then $(\one\otimes O)$ is unitary and one obtains the equivalent family of operators
\begin{equation}
 \left[ (\one\otimes O)^* H_\lambda (\one\otimes O) \right] u(x)=
\left(\sum_{y:d(x,y)=1} u(y)\right) + O^\top A O u(x) + \lambda O^\top V(x) O u(x)\;.\notag
\end{equation}
Hence, without loss of generality, we can assume that 
$A$ is a diagonal matrix and we will do so in the proofs.
In particular, the non-random operator $H_0$ is unitarily equivalent to a direct sum of shifted adjacency operators on $\TT^{K,L}$,
$H_0=\Delta \otimes \one + \one \otimes A \cong \bigoplus_{j=1}^m \Delta+a_j$ 
on $\ell^2(\TT^{K,L})\otimes\CC^m$ where the $a_j$ are the eigenvalues of $A$ and $\Delta$
describes the adjacency operator on $\ell^2(\TT^{K,L})$ given by
\begin{equation}
 (\Delta v)(x)=\sum_{y:d(x,y)=1} v(y)\;,\qquad v\in\ell^2(\TT^{K,L})\;.
\end{equation}

Our interest lies in the spectral type of $H_\lb$.
In order to state the main theorems we have to consider the adjacency operator first.
From now on we have a fixed $K,L$ and will omit these indices in many future defined quantities. The root of $\TT^{K,L}_p$ will be called $0^{(p)}$. For $x\in\TT^{K,L}$ we let $|x\rangle\in\ell^2(\TT^{K,L})$ denote the element given by $|x\rangle(y)=\delta_{x,y}=\begin{cases} 1 & :\;x=y \\ 0 & :\;x\neq y \end{cases}$. 
For some operator $H$, $\langle x| H|y\rangle$ denotes the scalar product between $|x\rangle$ and $H|y\rangle$, where we use the physics convention that the scalar product is anti-linear in the first and linear in the second component.
For $p\in\{0,\ldots,L\}$ and $\im(z)>0$ we define the Green's functions
\begin{equation}\label{eq-def-Gamma}
 \Gamma^{(p)}_z:= \;\langle 0^{(p)}|(\Delta-z)^{-1}|0^{(p)}\rangle\;.
\end{equation}
For $E\in \RR$ we further define $\Gamma^{(p)}_E$ by the limit
\begin{equation}\label{eq-def-Gamma-E}
 \Gamma^{(p)}_E:=\,\lim_{\eta\downarrow 0} \Gamma^{(p)}_{E+i\eta}\quad\text{if the limit exists in $\CC$.} 
\end{equation}
Let us define the following sets of energies $E\in\RR$, 
\begin{equation}\label{eq-def-I_p^K,L}
 I_p^{K,L}:=\{E\in \RR\,:\,\Gamma^{(p)}_E\;\text{exists, and}\;
 \im(\Gamma^{(p)}_E) > 0\;\}\,.
\end{equation}
Furthermore, for $d\in \NN$ let $\Delta_d$ denote the $d\times d$
adjacency matrix for the finite line with $d$ vertices and $d-1$ edges, i.e.
\begin{equation}\label{eq-def-Delta-d}
 \Delta_d=\pmat{0 & 1 \\ 1 & \ddots & \ddots \\ & \ddots & \ddots & 1 \\ & & 1 & 0} \in 
 {\rm Mat}(d\times d,\RR)
 \qtx{with the convention} \Delta_1 = 0\;.
\end{equation}
 We set
\begin{equation}\label{eq-def-EL}
\Ee_L:=\bigcup_{d=1}^L \sigma(\Delta_d) \qtx{and}
I^{K,L}:= I_0^{K,L} \;\setminus \; \Ee_L
\end{equation}
where $\sigma(\Delta_d)$ denotes the set of eigenvalues of the matrix $\Delta_d$ and therefore,
$\Ee_L$ is a finite set.
\begin{prop} \label{prop-spectrum} We have:
$ $ \begin{enumerate}[{\rm (i)}]
 \item For all $p\in\{0,\ldots,L\}$,  $I_p^{K,L}=I^{K,L}_0$.  
 $I^{K,L}_0$ and $I^{K,L}$ are non-empty unions of finitely many open intervals
 and for all $p$, $\Gamma^{(p)}_E$ depends analytically on $E\in I_0^{K,L}$.
 \item  The absolutely continuous spectrum of the adjacency operator $\Delta$ on $\ell^2(\TT^{K,L})$ is given by the closure
 $$
 \sigma_{ac}(\Delta)\,=\,\overline{I^{K,L}}\,=\,\overline{I^{K,L}_0}\;.
 $$
 \item For any fixed $L$ and $E_0$ there is a $K_0$ such that for $K>K_0$ the closure of
 $I^{K,L}$ includes the interval $[-E_0,E_0]$, i.e.
 \begin{equation}
  [-E_0\,,\,E_0]\,\subset\, \overline{I^{K,L}}\qtx{for all} K>K_0=K_0(L,E_0)\;.
 \end{equation}

 \item For the Fibonacci trees ($K=L=1$) we have
 \begin{equation} 
 I^{1,1}=\left(-\frac32 \sqrt{3}\,,\,0\right)\,\cup\,\left(0\,,\,\frac32\sqrt{3} \right)\;.
 \end{equation}
 \end{enumerate}
\end{prop}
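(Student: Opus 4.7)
The plan is to derive an algebraic equation for $\Gamma^{(0)}_z$ and extract the four assertions from its Herglotz branch and its discriminant. First I apply the Schur complement at the root $0^{(p)}$ of $\TT^{K,L}_p$: each forward subtree at a child of label $q$ is an independent copy of $\TT^{K,L}_q$, yielding
\begin{align*}
 \Gamma^{(0)}_z &= -1/\bigl(z+K\Gamma^{(0)}_z+\Gamma^{(1)}_z\bigr), \\
 \Gamma^{(p)}_z &= -1/\bigl(z+\Gamma^{(p+1)}_z\bigr) \quad (1\le p\le L-1), \\
 \Gamma^{(L)}_z &= -1/\bigl(z+\Gamma^{(0)}_z\bigr).
\end{align*}
Iterating the last $L$ relations gives $\Gamma^{(1)}_z=M_z^L(\Gamma^{(0)}_z)$ where $M_z$ is the M\"obius transformation $w\mapsto -1/(z+w)$ represented by $\smat{0 & -1 \\ 1 & z}$; the entries of $M_z^k$ are polynomials in $z$ of Chebyshev type, and $\Ee_L$ is precisely the set of $z$ where the iteration degenerates (zeros of the denominators of the $M_z^k$, $k\le L$). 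Substituting into the first equation and clearing denominators produces a polynomial identity $P(z,\Gamma^{(0)}_z)=0$, cubic in $\Gamma^{(0)}$ with coefficients polynomial in $z$.

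For (i), the algebraic equation $P(z,w)=0$ together with the Herglotz property $\im\Gamma^{(0)}_z>0$ on $\CC^+$ singles out a unique analytic branch of $\Gamma^{(0)}$, which extends across $\RR$ away from the finite set of real zeros of the leading coefficient and the $w$-discriminant of $P$. On the complement $\Gamma^{(0)}_E$ is real-analytic and $I_0^{K,L}$ coincides with the open locus where $P(E,\cdot)$ has non-real roots, hence a finite union of open intervals; non-emptiness is delivered by (iv) at $(K,L)=(1,1)$ and by the asymptotic of (iii) for large $K$. The identity $\im M_E(w)=\im w/|E+w|^2$ for real $E$ shows that positive imaginary part propagates through the iteration $\Gamma^{(L-k+1)}_E=M_E^k(\Gamma^{(0)}_E)$ and never hits a pole, since a complex number with positive imaginary part cannot equal the real number $-E$; hence $I_p^{K,L}=I_0^{K,L}$ for all $p$ and $\Gamma^{(p)}_E$ inherits analyticity on $I_0^{K,L}$.

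For (ii), the recursion reads $\langle 0^{(p)}|(\Delta-z)^{-1}|0^{(p)}\rangle=\Gamma^{(p)}_z$, so the spectral measure at $|0^{(p)}\rangle$ for $\Delta|_{\ell^2(\TT^{K,L}_p)}$ has a.c. density $\pi^{-1}\im\Gamma^{(p)}_E$ supported on $\overline{I_p^{K,L}}=\overline{I_0^{K,L}}$. At any other vertex $x$ of label $p$ a second Schur complement gives $\langle x|(\Delta-z)^{-1}|x\rangle=-1/(z+\Gamma^{(p)}_z+B_z(x))$, with Herglotz backward self-energy $B_z(x)$ rational in $z$ and in the $\Gamma^{(q)}_z$; off the finite branch set its boundary imaginary part is non-negative, so $\im\langle x|(\Delta-z)^{-1}|x\rangle$ is strictly positive on $I^{K,L}$ and vanishes on its complement. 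Cyclicity of $\{|x\rangle\}_{x\in\TT^{K,L}_p}$ and the finiteness of $\Ee_L$ then yield $\sigma_{ac}(\Delta)=\overline{I^{K,L}}=\overline{I_0^{K,L}}$.

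For (iv) with $K=L=1$, elimination of $\Gamma^{(1)}_z$ produces $w^3+2zw^2+z^2w+z=0$, whose $w$-discriminant equals $z^2(4z^2-27)$; this is negative exactly on $(-\tfrac{3}{2}\sqrt 3,\tfrac{3}{2}\sqrt 3)\setminus\{0\}$, matching $I^{1,1}$ after the already-excluded $\Ee_1=\{0\}$ is removed. For (iii) I rescale $g_z=\sqrt K\,\Gamma^{(0)}_z$: for $E\in[-E_0,E_0]\setminus\Ee_L$ and $K\to\infty$ one has $\Gamma^{(0)}_z\to 0$, so $\Gamma^{(1)}_z=M_z^L(\Gamma^{(0)}_z)$ tends to a finite real value and the equation reduces to $g^2+1=O(K^{-1/2})$; the Herglotz branch then satisfies $\Gamma^{(0)}_E\sim i/\sqrt K$, giving $\im\Gamma^{(0)}_E>0$ on $[-E_0,E_0]\setminus\Ee_L$ for $K$ large and hence $[-E_0,E_0]\subset\overline{I^{K,L}}$. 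The main obstacle I anticipate is careful branch selection at the real branch points of $\Gamma^{(0)}$ and verifying that the backward self-energies $B_z(x)$ in (ii) remain regular there; both reduce to the explicit identification of $\Ee_L$ as the degeneracy set of the M\"obius iterates, which is already visible in the derivation of $P$.
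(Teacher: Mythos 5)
Your overall strategy is the same as the paper's: derive the recursion \eqref{eq-rec-Greens}, compress the M\"obius chain to a cubic $P(z,\Gamma^{(0)}_z)=0$ matching \eqref{eq-cubic-eq}, and read everything off the discriminant and the Herglotz branch. Parts (i), (ii), (iv) essentially follow the paper, with a couple of terse steps worth flagging: in (i) you only argue the forward inclusion $I_0^{K,L}\subset I_p^{K,L}$ via $\Gamma^{(L-k+1)}_E=M_E^k(\Gamma^{(0)}_E)$; the reverse inclusion $I_p^{K,L}\subset I_0^{K,L}$ is also needed (it follows equally easily since $M_E^{-1}(w)=-E-1/w$ also preserves the upper half plane, but you should say so). In (ii) the displayed formula $\langle x|(\Delta-z)^{-1}|x\rangle=-1/(z+\Gamma^{(p)}_z+B_z(x))$ is wrong as written: the forward contribution inside the denominator is the \emph{sum} $\sum_{y\,\text{child of }x}\Gamma^{(l(y))}_z$, not $\Gamma^{(p)}_z$ itself (with $\Gamma^{(p)}_z=-1/(z+\sum_y\Gamma^{(l(y))}_z)$); this does not affect the Herglotz conclusion but is not the correct Schur decomposition.

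The genuine gap is in (iii). Your rescaling $g_z=\sqrt K\,\Gamma^{(0)}_z$ is a different and, when valid, cleaner route than the paper's discriminant estimate. But after dividing the rescaled cubic by $c_{L+1}(E)$ the ``error'' in $g^2+1=O(K^{-1/2})$ contains the factor $c_L(E)/c_{L+1}(E)$, which blows up near the zeros of $c_{L+1}$, i.e.\ near $\sigma(\Delta_L)\subset\Ee_L$. Thus your argument yields $\im\Gamma^{(0)}_E>0$ for $K>K_0$ only for $E$ in a compact subset of $[-E_0,E_0]$ bounded away from $\sigma(\Delta_L)$; it does not give it on all of $[-E_0,E_0]\setminus\Ee_L$ with a single $K_0$, and consequently does not yield $[-E_0,E_0]\subset\overline{I^{K,L}}$. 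The paper closes exactly this gap by a separate, $K$-independent estimate near the zeros of $c_{L+1}$: there the bracket in \eqref{eq-discr} tends to $-27K^2c_L^2<0$ (using $c_L\neq 0$ at those points via \eqref{eq-c_k-det}), so $D<0$ in a punctured neighborhood of each such zero for \emph{all} $K$. You need this (or an equivalent local argument) to complete (iii).

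Finally, a smaller point in (i): the identification of $I_0^{K,L}$ with $\{D<0\}$, i.e.\ that the boundary limit $\Gamma^{(0)}_E=\lim_{\eta\downarrow 0}\Gamma^{(0)}_{E+i\eta}$ actually picks out the complex root of the cubic rather than the real one, uses the uniqueness of the Herglotz solution (hyperbolic contraction as in \cite{KLW,Kel}); you assert this branch selection without comment, and it is worth a line.
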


Letting $a_1\leq a_2 \leq\ldots\leq a_m$ be the eigenvalues of $A$, one obtains that the
a.c. spectrum of $H_0$ is given by the union of the bands
$\bigcup_{j=1}^m (\overline{I^{K,L}}+a_j)$.
However, as in previous work using this method  \cite{KS,Sad} we have to restrict ourselves to the intersections of such bands and we need to assume that this intersection is not empty.
Therefore, define
\begin{equation}
 I_A^{K,L}:=\bigcap_{j=1}^m\,I^{K,L}+a_j\;.
\end{equation}

For the regular tree-strip, the technique with resonances does not have this weakness (cf. \cite{Sha}) and in fact gives existence of a.c. spectrum in a set that corresponds to the 
$\ell^1$ spectrum of the free operator (intersected with the real line\footnote{The $\ell^1$ spectrum of the adjacency operator is in general not a subset of the real line}).
However, \cite{Sha} needs a full random matrix potential $V(x)$ as e.g. from the GOE ensemble and therefore does not handle the Anderson model on tree-strips.
Extending this method to a non regular tree or tree-strip such as the Fibonacci tree would be interesting in order to confirm that
the $\ell^1$ spectrum of the adjacency operator determines the mobility edge for small $\lambda$.
Besides we will also need to assume that the random potential is
almost surely bounded:

\begin{assumptions} 
The following assumptions turn out to be crucial for the results.
\begin{enumerate}
\item[{\rm (V)}] The distribution $\nu$ of $V(x)$ is compactly supported in $\Sym(m)$.
\item[{\rm (A)}] 
Assume that the eigenvalues of $A$ are such that the set $I_A^{K,L}$ is not empty in which case it is a union of finitely many open intervals. \\
By Proposition~\ref{prop-spectrum}~(iii) for any fixed $L$ and $A$ there is a $K_0$ such that for $K>K_0$, $I^{K,L}_A$ is not empty, so the condition is fulfilled.

In the Fibonacci case $K=L=1$ this assumption reduces to
$$
a_{\rm max} - a_{\rm min} < 3\sqrt{3}\;,
$$
where $a_{\rm max}$ is the biggest and $a_{\rm min}$ the smallest eigenvalue of $A$, and then
\begin{equation}
 I^{1,1}_A=\left(-\frac32\sqrt{3}+a_{\rm max}\,,\,\frac32\sqrt{3}+a_{\rm min} \right)
 \;\setminus\;\{a_1,a_2,\ldots,a_m \}\;.
\end{equation}
\end{enumerate}
\end{assumptions}

In order to consider the spectrum of $H_\lambda$ we introduce the matrix-valued spectral measures at the vertices of the forest $\TT^{K,L}$. 
For $x\in\TT^{K,L},\,j\in\I=\{1,\ldots,m\}$
let $|x,j\rangle$ denote the
element in $\ell^2(\TT,\CC^m)$ satisfying $|x,j\rangle (y) = \delta_{x,y} e_j$ where $e_j$ is the $j$-th canonical basis vector in $\CC^m$.
Similar as before, $\langle x,j| H |y,k\rangle$ denotes the scalar product between
$|x,j\rangle$ and $H|y,k\rangle$ with the convention that the scalar product is linear in the second and anti-linear in the first component.
Then, for $x\in \TT^{K,L}$ we define the random, positive matrix valued measure $\mu_x$ on $\RR$ by
\begin{equation}
 \int f(E)\,d\mu_x(E) = \left[ \,\langle x,j |f(H_\lambda)|x,k\rangle\, \right]_{j,k\in\I}\;
\end{equation}
for all compactly supported, continuous functions $f$ on $\RR$.

\begin{theorem} \label{th:main} 
Let the assumptions {\rm (A)} and {\rm (V)} be satisfied.
Moreover, for $p\neq 0$ let $1^{(p)}_0$ be the first vertex (smallest distance to root) with 
label $0$, i.e. $1^{(p)}_0$ is the unique vertex with $d(0^{(p)},1^{(p)}_0)=L-p+1$.
Then, there is an open neighborhood $U$ of $\{0\}\times I^{K,L}_{A}$ 
in $\RR^2$ such that the following holds:
\begin{enumerate}[{\rm (i)}]
\item The spectrum of $H_\lambda$ is almost surely
purely absolutely continuous in $U_\lambda=\{E:(\lambda,E)\in U\}$.
\item For every $x\in\TT^{K,L}_0$ and any $x\in\TT^{K,L}_p$ with $p=1,\ldots,L$ and
$d(x,0^{(p)})> L-p$, 
the density of the absolutely continuous average spectral measure $\E(\mu_{x})$ in $U_\lambda$
depends continuously on $(\lambda,E)\in U$.
The condition $d(x,0^{(p)})>L-p$ means that either $x=1^{(p)}_0$ or that $x$ is a
descendant of $1^{(p)}_0$ in $\TT^{K,L}_p$, $p\neq 0$.
\item The density of $\E(\mu_{0^{(0)}})$ and $\E(\mu_{1^{(p)}_0})$ for $p=1,\ldots,L$ 
are positive definite in $U_\lambda$.
This implies that $H_\lb$ and also all the parts $H_\lb^{(p)}$ on $\ell^2(\TT^{K,L}_p \times \I)$ have spectrum in $U_\lambda$ with positive probability.
\end{enumerate}
\end{theorem}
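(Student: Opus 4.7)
The plan is to follow the scheme of Klein and previous work \cite{Kl3, KS, Sad}: represent the averaged Green's matrix of $H_\lambda$ via a supersymmetric integral against a tuple of functions $\bvp=(\xi^{(0)},\ldots,\xi^{(L)})$, where $\xi^{(p)}$ is associated to the label $p$, derive a fixed point equation $\bvp = F(\lambda,E,\bvp)$ in a suitable Banach space of supersymmetric functions, and then apply the Implicit Function Theorem to continue the explicit $\lambda=0$ solution to small $\lambda$ and to real energies $E\in I^{K,L}_A$. Boundedness of $\bvp$ in the relevant norm implies bounded imaginary parts for the Green's matrix up to the real axis, from which absolute continuity will follow; positivity of these imaginary parts produces actual presence of spectrum.

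The first step is to choose the correct function spaces. In contrast to \cite{Sad}, the line segments in $\TT^{K,L}_p$ force the use of the mixed space $\Hh_\infty\times\Hh^L$ instead of $\Hh_\infty^{L+1}$: since for $p\geq 1$ the cone of a label-$p$ vertex has only a single child, one cannot always produce the product of two factors that was used in \cite{Sad} to Fourier-transform back into $\Hh_\infty$. Writing out the Green's function recursion label by label produces the map $F$; at $\lambda=0$ it reduces to a scalar algebraic system whose solution is given by the $\Gamma^{(p)}_E$ of Proposition~\ref{prop-spectrum}, yielding an explicit analytic $\bvp_0(E)$ on $E\in I^{K,L}_0$.

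The heart of the argument, and the main obstacle, is to prove invertibility of the Fr\'echet derivative $I-D_{\bvp}F(0,E,\bvp_0(E))$ for every $E\in I^{K,L}_A$. This breaks into two parts. First, one shows that the linearisation has the form $I-T$ with $T$ compact; this requires the refined estimates of Proposition~\ref{lem-op-compact}, which in turn rely on the supersymmetric identities of Appendix~\ref{app-FMF}. Second, one must rule out the eigenvalue $1$ of $T$: following the new strategy announced in the introduction, this is done via the identity of Proposition~\ref{prop-gr} combined with Lemma~\ref{lem-not-1}. The exclusion of the finite set $\Ee_L$ from $I^{K,L}$ is precisely what guarantees the hypothesis of Lemma~\ref{lem-not-1}, and the identity of Appendix~\ref{app-gamma} enters at this point. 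It is this step that avoids the nowhere-dense exceptional set of \cite{Sad}.

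Once $\bvp$ has been constructed on a neighbourhood $U$ of $\{0\}\times I^{K,L}_A$, with continuity up to real energies furnished generation by generation by the inductive Proposition~\ref{prop-xize}, the averaged spectral measure at a vertex $x$ is recovered from integrals of the form \eqref{eq-EG-x-D} and \eqref{eq-EGG-x-D}. For the vertices listed in (ii) the integrands extend continuously in $(\lambda,E)$ as elements of a supersymmetric $L^1$ space, giving a continuous density for $\E(\mu_x)$ on $U$ and hence (ii). At $x=0^{(0)}$ or $x=1^{(p)}_0$ the $\lambda=0$ density is controlled by $\im\,\Gamma^{(p)}_E>0$ on $I^{K,L}$, so by continuity it remains positive definite on a neighbourhood, yielding (iii). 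Finally, the bounded continuous density of $\E(\mu_x)$ combined with standard rank-one perturbation arguments of Simon--Wolff type promotes this to almost sure pure absolute continuity of $\mu_x$, and hence of $H_\lambda$, on $U_\lambda$, giving (i).
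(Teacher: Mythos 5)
Your scheme is essentially the paper's: write the Green's-matrix recursion as a supersymmetric fixed-point equation, place it in $\Hh_\infty\times\Hh^L$ (and $\Kk_\infty\times\Kk^L$), apply the Implicit Function Theorem at $(0,E,\vze_{0,E})$ using Lemma~\ref{lem-not-1} and the compactness results to exclude the eigenvalue $1$, and propagate the extension generation by generation via Proposition~\ref{prop-xize}. Your derivations of (ii) and (iii) from the continuous extension of $\E(G^{[x]})$ and the explicit $\lambda=0$ computation are correct and match the paper.

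The gap is in part (i). You deduce almost sure pure absolute continuity from the \emph{continuous bounded density of the averaged measure} $\E(\mu_x)$ plus a Simon--Wolff rank-one appeal. This does not work: a bounded density for $\E(\mu_x)$ gives no a.s.\ information about the individual $\mu_x$ (e.g.\ $\mu_x=\delta_Y$ with $Y$ having an absolutely continuous law yields an a.c.\ averaged measure while every realization is pure point), and Simon--Wolff theory is a device for \emph{proving} pure point spectrum for rank-one perturbations, not for upgrading averaged absolute continuity to almost sure absolute continuity. What is actually needed, and what the paper proves and uses, is the \emph{second-moment} bound: Theorem~\ref{th:main2} provides a continuous extension of $(\lambda,E,\eta)\mapsto\E\bigl(|G_\lambda^{[x]}(E+i\eta)|^2\bigr)$ up to $\eta=0$. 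Integrating $\Tr(|G^{[x]}_\lambda|^2)$ over a compact interval $[a,b]\subset U_\lambda$ and applying Fatou's lemma yields
\[
\liminf_{\eta\downarrow 0}\int_a^b\Tr\bigl(|G^{[x]}_\lambda(E+i\eta)|^2\bigr)\,dE<\infty
\]
with probability one, which (by the standard criterion, cf.\ \cite[Theorem~4.1]{Kl6} or \cite[Theorem~2.6]{Kel}) forces $\mu_x$ to be a.s.\ absolutely continuous with $L^2$ density on $(a,b)$. Finally, one must add the observation that the cyclic subspaces generated by $|x,j\rangle$ over the admissible set of vertices $x$ span $\ell^2(\TT^{K,L}\times\I)$, and then exhaust $U_\lambda$ by countably many intervals; your write-up omits both this spanning argument and the second-moment/Fatou step, so the jump from (ii)+(iii) to (i) is unjustified as stated.
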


\begin{remark} $ $
\label{rem-noremove}
Except for the removal of $\Ee_L$, the set $I^{K,L}_A$ here corresponds to $I_{A,S}=I_{A,S^{K,L}}$ in \cite{Sad}. The handling of the line segments requires to exclude the finitely many energies in $\Ee_L$.
In \cite{Sad} we also needed to remove some more energies defined by a very technical condition to get some smaller set $\hat I_{A,S}$.
This set occurred because the main argument is based on the Implicit Function Theorem and one needs a certain Frechet derivative
to be invertible. Here, this Frechet derivative will always be invertible for all 
$E\in I^{K,L}_A$ by Lemma~\ref{lem-not-1}
which relies on the identity given in Proposition~\ref{prop-Gamma}.
This identity holds more general for the Green's functions on trees of finite cone type as shown in Appendix~\ref{app-gamma}. Using this identity and essentially the same line of arguments as in Lemma~\ref{lem-not-1}, one can actually show that $I_{A,S}=\hat I_{A,S}$ as defined in \cite{Sad} for any 
substitution matrix $S\in \ZZ_+^{k\times k}$ that has at least one positive entry on the diagonal and where the determinants of all diagonal minors of $-S$ are non-positive.
Here a diagonal minor of a square matrix is a matrix obtained 
by deleting finitely many rows and the {\bf same} columns, i.e. one deletes the
$i_1,\ldots, i_l$ row and column to get a $(k-l)\times (k-l)$ matrix. 
The $1\times 1$ diagonal minors are exactly the diagonal elements.
For $2\times 2$ substitution matrices this condition simply means that there is one positive diagonal element and the determinant is negative.
\end{remark}

The important objects we work with are the matrix Green's functions given by
\begin{equation}\label{eq-def-G^x}
 G_\lambda^{[x]}(z):=\left[\, \langle x,j|(H-z)^{-1} | x,k\rangle\,\right]_{j,k\in\I}\,\in\,{\CC^{m\times m}}
\end{equation}
for $\im(z)>0$.
The most important ingredient to obtain Theorem~\ref{th:main} is the following.

\begin{theorem}\label{th:main2}
Under assumptions {\rm (V)} and {\rm (A)} 
there exists an open neighborhood $U$ of $\{0\}\times I^{K,L}_{A}$ in $\RR^2$ 
such that for all vertices $x\in\TT^{K,L}_0$ and all $x\in\TT^{K,L}_p$ with 
$d(0^{(p)},x)>L-p$ the functions
\begin{align} 
(\lambda,E,\eta) &\mapsto \E\left(G_\lambda^{[x]}(E+i\eta)\right)\;, \notag \\
(\lambda,E,\eta) &\mapsto \E\left(\left|G_\lambda^{[x]}(E+i\eta)\right|^2\right)\;, \notag
\end{align}
defined for $\eta>0$, have continuous extensions to $U\times [0,\infty)$.
\end{theorem}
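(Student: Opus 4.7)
The plan is to adapt the supersymmetric replica / fixed-point approach of \cite{KS, Sad} to the line-segment geometry of $\TT^{K,L}_p$. Writing $z = E + i\eta$, one introduces for each label $p \in \{0,\dots,L\}$ a supersymmetric function $\phi_p(\lb, z)$ on a super vector space built from $\CC^m$ (bosonic coordinates together with Grassmann variables) whose integrals encode the averaged Green's matrix $\E(G_\lb^{[x]}(z))$ and the averaged squared modulus $\E(|G_\lb^{[x]}(z)|^2)$ at any vertex $x$ whose cone of descendants is a copy of $\TT^{K,L}_p$. Using self-similarity, the forward Green's matrix at such a vertex is a Schur complement in the Green's matrices at its children, which are independent and distributed as at the roots of the appropriate subtrees. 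Taking expectations and diagonalizing the $\lb V(x)$-randomness by supersymmetric Fourier analysis yields a fixed-point equation
\begin{equation}
\bvp \;=\; \Ff(\lb, z; \bvp), \qquad \bvp = (\phi_0, \phi_1, \dots, \phi_L),
\end{equation}
whose block structure follows the substitution matrix $S^{K,L}$: the zeroth component couples $K$ copies of $\phi_0$ and one copy of $\phi_1$, each $\phi_p$ for $1 \leq p \leq L-1$ depends on $\phi_{p+1}$ alone, and $\phi_L$ depends on $\phi_0$ alone.

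The next step is to choose the Banach space carefully. A vertex of label $0$ has $K+1 \geq 2$ children so the corresponding Fourier-side product has at least two factors and lands in the supersymmetric $L^\infty$-type space $\Hh_\infty$. By contrast, a vertex on a line segment (label $p \in \{1,\dots,L\}$) has a single child, so only one factor appears and the output lies in the weaker space $\Hh = \Hh_2$. This forces the mixed setup $\bvp \in \Hh_\infty \times \Hh^L$ announced in the introduction, and is exactly where the framework of \cite{Sad} does not apply. At $\lb = 0$ the fixed-point equation has the explicit $z$-analytic solution built from the adjacency Green's functions $\Ga^{(p)}_z$ of \eqref{eq-def-Gamma}, which by Proposition~\ref{prop-spectrum} extend continuously to $\eta = 0$ throughout $I_0^{K,L}$. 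The analytic Implicit Function Theorem then produces the desired continuous family $\bvp(\lb, E+i\eta)$ on an open neighborhood $U$ of $\{0\}\times I_A^{K,L}$ intersected with $\eta \geq 0$, provided the Frechet derivative $I - D_{\bvp}\Ff(0, E; \cdot)$ is invertible on $\Hh_\infty \times \Hh^L$.

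The main obstacle is precisely this invertibility, and it splits into two independent difficulties. The first is compactness of $D_{\bvp}\Ff(0, E; \cdot)$, which in the mixed space $\Hh_\infty \times \Hh^L$ is not automatic; this is Proposition~\ref{lem-op-compact}, whose proof leverages the supersymmetric identities collected in Appendix~\ref{app-FMF} together with the compact-support hypothesis (V), the latter giving the decay and smoothness of the Fourier transform of the distribution of $\lb V(x)$ that are needed to bring the problem into a compact-operator setting. The second, once compactness is secured so that the Fredholm alternative applies, is to rule out the eigenvalue $1$. This is the role of Lemma~\ref{lem-not-1}, which uses the Green's-function identity of Proposition~\ref{prop-Gamma} (established in Appendix~\ref{app-gamma}) to show that $1$ is never an eigenvalue for $E \in I^{K,L}_A$; in particular, unlike in \cite{Sad}, no further technical subset of energies needs to be removed. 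The removal of the finite set $\Ee_L$ itself is natural here: the line-segment resolvents $(\Delta_d - E)^{-1}$ entering the recursion develop real poles precisely at the points of $\sigma(\Delta_d)$.

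Finally, Theorem~\ref{th:main2} is extracted from the continuous family $\bvp$ via the integral representations \eqref{eq-EG-x-D} and \eqref{eq-EGG-x-D} of $\E(G^{[x]}_\lb(z))$ and $\E(|G^{[x]}_\lb(z)|^2)$, whose integrands are built from $\bvp$ along the cone of $x$ by the inductive Proposition~\ref{prop-xize}. The condition $d(0^{(p)}, x) > L - p$ ensures that the cone at $x$ starts at a vertex of label $0$, so the integrand is a product of at least two supersymmetric factors and therefore lies in the supersymmetric $L^1$ space; continuity of the integrals in $(\lb, E, \eta)$ up to $\eta = 0$ then follows from continuity of $\bvp$ and dominated convergence. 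This last step is the new technical subtlety alluded to in the introduction: the individual factors no longer lie in $\Hh_1$ as in \cite{KS, Sad}, but their product still does, and that is just enough to close the argument.
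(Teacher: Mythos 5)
Your outline tracks the paper's own route quite closely: fixed-point equation for $(\vze,\vxi)$ in the mixed spaces $\Hh_\infty\times\Hh^L$, $\Kk_\infty\times\Kk^L$; Implicit Function Theorem (Proposition~\ref{prop-extend-1}) enabled by the compactness argument (via Proposition~\ref{lem-op-compact}) and the invertibility from Lemma~\ref{lem-not-1}/Proposition~\ref{prop-Gamma}; then extraction of $\E(G^{[x]}_\lb)$ and $\E(|G^{[x]}_\lb|^2)$ from \eqref{eq-EG-x-D}, \eqref{eq-EGG-x-D} through $\Ll^1$-continuity of the products, using the inductive Proposition~\ref{prop-xize}. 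That is the paper's argument in the right order.

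There are, however, two inaccuracies you should fix. First, the role of Assumption~(V) is misplaced. Proposition~\ref{lem-op-compact} and the compactness of $C_E^{2(L+1)}$, $\Cc_E^{2(L+1)}$ do not use~(V) at all: the multipliers there are the deterministic Gaussian factors $\ze_{E-A}$ (the $\lambda=0$ data), and what is needed is invertibility of the matrices $\Aa_1,\ldots,\Aa_L$ in \eqref{eq-def-Aa_j}, which is precisely where the removal of $\Ee_L$ (i.e.\ $E-a_j\notin\bigcup_{d\le L}\sigma(\Delta_d)$) enters. Assumption~(V) is instead indispensable in Proposition~\ref{prop-extend-2}: for $p\neq 0$ and $x$ on the initial line segment of $\TT^{K,L}_p$ (that is $d(0^{(p)},x)\le L-p$), the backward Green's matrix $G^{(x|y)}_{\lb,z}$ comes from a \emph{finite} subtree of at most $L$ vertices, where no self-averaging or fixed-point structure is available; one needs the resolvent of a finite-dimensional random matrix to extend continuously to $\eta=0$ and to have uniformly bounded $\hn\cdot\hn_\infty$, $\hnn\cdot\hnn_\infty$ norms, and that is where compact support of $\nu$ is used. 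Also, it is not $D_{\bvp}F(0,E;\cdot)$ itself that is compact but its power $C_E^{2(L+1)}$; that power is what makes the spectrum discrete away from $0$ and lets the Fredholm alternative apply.

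Second, your sketch skips Proposition~\ref{prop-extend-2} altogether, but it is a necessary intermediate step. The induction in Proposition~\ref{prop-xize} runs over the distance from $0^{(0)}$ or from $1_0^{(p)}$; the base case for $p\neq 0$ writes $\ze^{(1_0^{(p)}|y)}=TB_{\lb,z}\bigl(\ze^{(x_0|1_0^{(p)})}(\ze^{(0)})^{K-1}\ze^{(l(y))}\bigr)$ where $x_0$ is the parent of $1_0^{(p)}$ and sits on the line segment, and one needs Proposition~\ref{prop-extend-2} to know that $\ze^{(x_0|1_0^{(p)})}_{\lb,E+i\eta}$ has a continuous pointwise $\eta\downarrow 0$ limit with uniform $\hn\cdot\hn_\infty$ control before Dominated Convergence can yield $\Ll^2$-continuity. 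Similarly, in the final step for $x=1_0^{(p)}$, the product $\prod_{y:d(x,y)=1}\ze^{(y|x)}_{\lb,z}$ contains this backward factor and one needs its uniform $\Ll^\infty$-type bound to land in $\Ll^1$. Also, the condition $d(0^{(p)},x)>L-p$ does not say the cone at $x$ starts at a label-$0$ vertex (indeed $x$ may have any label); it says $x$ is $1_0^{(p)}$ or a descendant thereof, so the backward factor $\ze^{(x_0|x)}$ can be reached by the induction, and the product of neighbor factors has at least two $\Ll^2$ factors plus, if $l(x)=0$, some $\Ll^\infty$ factors, whence $\Ll^1$.
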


Let us show now that Theorem~\ref{th:main2} implies Theorem~\ref{th:main}.
\renewcommand{\proofname}{Proof of Theorem~\ref{th:main}}
\begin{proof}
Part (ii) follows immediately as
$\E(G_\lambda^{[x]}(z))$ is the Stieltjes transform of $\E(\mu_{x})$ and hence
$\E (\mu_{x})(dE)=*-\lim_{\eta\downarrow 0} \frac1\pi \im \E(G_{\lambda,E+i\eta})\,dE$.
Here $*-\lim$ denotes a limit in the weak $*$ topology on bounded measures.
For part (iii) we remark that for $\lb=0$ one can explicitly calculate the matrix Green's functions and see that for energies $E\in I^{K,L}_A$ the limit of the 
imaginary parts are positive definite (cf. Remark~\ref{rem-I_A}). By continuity around $\lb=0$ this remains true for a possibly smaller neighborhood $U\supset \{0\}\times I^{K,L}_A$.

To get (i) note that for any compact interval $[a,b]\subset U_\lb$ one has
by Fatou's lemma
$$\E \Big(\liminf_{\im(z) \downarrow 0}\int_{a}^b \Tr\big(\big|G^{[x]}_{\lb}\, (z)\big|^2\big)\,dE\Big)
\leq \liminf_{\im(z) \downarrow 0} \int_{a}^b \E \Big(\Tr\big(\big|G^{[x]}_{\lb}\, (z)\big|^2\big)\Big) \,dE
\,<\,\infty
$$
Thus, 
\beq\label{limprobone}
\liminf\limits_{\im(z) \downarrow 0}\int_{a}^b \Tr\left(\left|G^{[x]}_{\lb}\, (z)\right|^2\right)\,dE \,<\,\infty \quad
\text{with probability one}. 
\eeq
As $G_\lb^{[x]}(z)$ is the Stieltjes transform of $\mu_{x}$ this implies
that, almost surely, $\mu_{x}$ is absolutely continuous with respect to the Lebesgue measure in 
$(a,b)\subset U_\lambda$ and the density is a positive matrix valued $L^2$ function for all $x\in\TT^{K,L}_0$
and all $x \in \TT^{K,L}_p$ with $d(0^{(p)},x)>L-p$.
(cf. \cite[Theorem 4.1]{Kl6} and \cite[Theorem 2.6]{Kel}). 
This gives the almost sure pure a.c. spectrum on any interval $(a,b)$ with closure
in $U_\lb$ as the cyclic spaces of the vectors $|x,j\rangle$ for those $x$ span $\ell^3(\TT^{K,L}\times\I)$. 
Writing $U_\lb$ as a countable union of such closed compact intervals $[a,b]$ one realizes by taking the intersection of the corresponding sets in the probability space that
with probability one the spectrum of $H_\lb$ is purely absolutely continuous in $U_\lb$.
\end{proof}

\renewcommand{\proofname}{Proof}

\vspace{.2cm}

The paper is structured as follows. In Section~\ref{sec-unper} the Green's functions $\Gamma^{(p)}_z$ are considered in
more detail and Proposition~\ref{prop-spectrum} is proved.
In Section~\ref{sec-rec} we consider the recursion of the forward Green's matrices that lead to a fixed point equation.
Then, in Section~\ref{sec-ban}, we introduce the important Banach spaces in which this fixed point equation has to be analyzed and
in Section~\ref{sec-frechet} we investigate the Frechet derivative.
Finally, in Section~\ref{sec-proofs} we conclude to obtain Theorem~\ref{th:main2}.
Appendices \ref{app-gamma}, \ref{app-FMF} and \ref{app-compact} state some general facts that are used along the way.


\section{The unperturbed Green's functions \label{sec-unper}}

The main goal of this section is to prove Proposition~\ref{prop-spectrum}. 
Some of the statements are consequences of more general considerations as 
done in \cite{Kel, KLW}. 

Recall that we denoted the Green's functions at the roots $0_p$ by $\Gamma^{(p)}_z$
for $\im(z)>0$ and for $z=E\in I_p^{K,L}$ it also denotes the limit for $\im(z)\downarrow 0$
if it exists (cf. \eqref{eq-def-Gamma}, \eqref{eq-def-Gamma-E}).
Let us also define the spectral measures $\nu^{(p)}$  by
\begin{equation}
 \int f(x)\,d\nu^{(p)}(x)\,=\, \langle 0^{(p)} | f(\Delta) | 0^{(p)} \rangle\;.
\end{equation}
As $z\mapsto \Gamma_z^{(p)}$ is the Stieltjes transform of the measure $\nu^{(p)}$ the 
absolutely continuous part is the closure of $I^{K,L}_p$.
Now for any vertex $x\in\TT^{K,L}_p$ we can take the path to the root and cut off the trees
connected to this path, which are all equivalent to one of the $\TT^{K,L}_q$.
Then, using an induction argument (induction over distance to root) as done in 
\cite[Prop. 2.9]{Kel} or alternatively using \cite[Lemma 2.2]{FLSSS} one finds
\begin{equation}
 \sigma_{ac}(\Delta)\,=\,\bigcup_{p=0}^L \overline{I^{K,L}_p}\,.
\end{equation}

Note that when cutting the connections to the root in $\TT^{K,L}_p$ then for $0<p<L$ we get $\TT^{K,L}_{p+1}$, for $p=L$ we get $\TT^{K,L}_0$ and for $p=0$ we get the union of $K$ times $\TT^{K,L}_0$ and one $\TT^{K,L}_1$.
Therefore, the standard recursion relations for the Green's functions that can be obtained from the resolvent identity
(cf. \cite{ASW,FHS,Kl3,KLW,Kel}) are given by
\begin{equation}\label{eq-rec-Greens}
 \Gamma^{(0)}_z = \frac{-1}{z+K\Gamma^{(0)}_z+\Gamma^{(1)}_z}\;,\quad
 \Gamma^{(p)}_z = \frac{-1}{z+\Gamma^{(p+1)}_z}\;,\quad
 \Gamma^{(L)}_z = \frac{-1}{z+\Gamma^{(0)}_z}\;,
\end{equation}
where $p=1,\ldots,L-1$.
By an hyperbolic contraction argument as in \cite{KLW,Kel} one obtains that for $\im(z)>0$, 
these equations combined with the restriction $\im(\Gamma^{(p)}_z)>0,\,p=0,1,\ldots,L$
determine the Green's functions uniquely.

The right hand sides of the last two equations in \eqref{eq-rec-Greens} can be seen as M\"obius actions $\smat{0 & -1 \\ 1 & z}\cdot \Gamma=\frac{-1}{\Gamma+z}$.
Defining the polynomials $c_0(z)=0,\,c_1(z)=1$ and iteratively 
$c_{k+1}(z)=z c_k(z)-c_{k-1}(z)$ one obtains by induction
\begin{align}\label{eq-T-powers}
 & \pmat{0 & -1 \\ 1 & z}^{k}\,=\,\pmat{-c_{k-1}(z) & -c_k(z) \\ c_k(z) & c_{k+1}(z)}\;,\quad\text{and} \\
 & c_k(z)=\frac{c_+(z)^k - c_-(z)^k}{c_+(z)-c_-(z)}\qtx{where} c_{\pm}(z)=\frac12\left(z\,\pm\,
 \sqrt{z^2 -4}\right)\;.
\end{align}
Here one can choose any of the two roots for $\sqrt{z^2-4}$, changing the root switches $c_+(z)$ and $c_-(z)$ but does not change
$c_k(z)$.
As the determinant of these matrices are always one, we also find
\begin{equation}\label{eq-c_k-det}
c_{k}^2(z)-c_{k-1}(z) c_{k+1}(z)\,=\,1 \;.
\end{equation}
It can be seen from the recursion relations defining $c_k(z)$ that $c_k(z)$ is a polynomial of degree $k-1$ with real coefficients and it is even if $k$ is odd and odd if $k$ is even.

Now, \eqref{eq-rec-Greens} and \eqref{eq-T-powers} lead to
\begin{equation}
 \Gamma^{(0)}_z \left[z+K\Gamma^{(0)}_z+ \pmat{-c_{L-1}(z) & -c_L(z) \\ c_L(z) & c_{L+1}(z)} \cdot \Gamma^{(0)}_z \right]\;+1\;=\;0
\end{equation}
which using $zc_L-c_{L-1}=c_{L+1}$ can be rewritten as
\begin{equation}
 K c_L(z) (\Gamma^{(0)}_z)^3 + (K+1) c_{L+1}(z)\, (\Gamma^{(0)}_z)^2+ z c_{L+1}(z)\, \Gamma^{(0)}_z +c_{L+1}(z)
 \,=\,0
\end{equation}
Using the uniqueness of solutions in the upper half plane we see that $I_0^{K,L}$ is given by the set of real energies $E$ such that the cubic equation with real coefficients
\begin{equation}\label{eq-cubic-eq}
K c_L(E)\,x^3 + (K+1) c_{L+1}(E)\,x^2+E c_{L+1}(E)\,x+ c_{L+1}(E)\,=\,0
\end{equation}
has a solution $x$ with a positive imaginary part and $\Gamma^{(0)}_E$ is that solution.

With this characterization we have now established everything to prove Proposition~\ref{prop-spectrum}.

\renewcommand{\proofname}{Proof of Proposition~\ref{prop-spectrum}}

\begin{proof}
Using equations \eqref{eq-rec-Greens} in the limits $\im(z)\to 0$ we see that 
\begin{equation}\label{eq-I_p-equal}
I^{K,L}_0 \subset I^{K,L}_1 \subset \cdots \subset I^{K,L}_L \subset I^{K,L}_0\;, \qtx{hence}
I^{K,L}_p=I^{K,L}_0.
\end{equation}
For the inclusion $I^{K,L}_0 \subset I^{K,L}_1$ note that the first equation of \eqref{eq-rec-Greens} gives $\Gamma^{(0)}_z=\frac{1}{2K} \left(-\Gamma^{(1)}_z - z + \sqrt{(\Gamma^{(1)}_z+z)^2 - 4K} \right)$ where we have to take the square root with the positive imaginary part (the other one will have a negative imaginary part). Hence for $E\in I^{K,L}_1$ we have a limit
for $\Gamma^{(0)}_{E+i\eta}$ for $\eta\downarrow 0$ with non-negative imaginary part. As 
$\im(\Gamma^{(1)}_E)>0$ we find $\im(\Gamma^{(0)}_E)>0$ for $E\in I^{K,L}_1$.

By the characterization of $I^{K,L}_0$ as in \eqref{eq-cubic-eq} one sees that $I^{K,L}_0$ is in fact the set of energies $E$ where the discriminant $D=D(E)$ of \eqref{eq-cubic-eq}
is negative,
\begin{align}
 D & =c_{L+1}^2\Big( 18K(K+1)Ec_L c_{L+1} +(K+1)^2 E^2 c_{L+1}^2-4K E^3 c_L c_{L+1} \notag \\
 & \qquad\qquad-27 K^2 c_L^2\,-4(K+1)^3 c_{L+1}^2 \Big)\;.\label{eq-discr}
\end{align}
If $c_{L+1}(E)=0$ then the term inside the parenthesis for $D$ is negative coming from the term $-27K^2 c_L^2(E)$, using \eqref{eq-c_k-det}, $c_L(E)\neq 0$. By induction one also gets that $c_{L+1}(E)$ is the characteristic polynomial 
$\det(E\one-\Delta_L)$ for the matrix $\Delta_L$ as in \eqref{eq-def-Delta-d}, so it has real roots and in a neighborhood
of these roots, $D<0$. Therefore, $I^{K,L}_0$ is not empty.
Moreover, as $D(E)$ is a polynomial in $E$, $I^{K,L}_0$ and $I^{K,L} \setminus \Ee_L$ are unions of finitely many open intervals.
As there is a solution formula for cubic equations we also find that $\Gamma^{(0)}_E$ and by \eqref{eq-rec-Greens}
all $\Gamma^{(p)}_E$ are analytic in $E \in I^{K,L}_0$. This finishes part (i).

As $I^{K,L}$ has only finitely many points removed from $I^{K,L}_0$, we have
\begin{equation}
 \overline{I^{K,L}}=\overline{I^{K,L}_0}=\bigcup_{p=0}^L \overline{I^{K,L}_p}\,=\,\sigma_{ac}(\Delta)\;
\end{equation}
giving part (ii).

To get part (iii) note that we find an open neighborhood
$\Oo$ of these zeros, such that for fixed $L$, all $E\in \Oo$ and all $K>0$, 
\begin{equation}
18 K(K+1) E c_L c_{L+1}+(K+1)^2 E^2 c_{L+1}^2 - 4K E^3 c_L c_{L+1} - 27 K^2 c_L^2\,<\,0\;.
\end{equation}
In particular, if $E\in \Oo$ and $c_{L+1}(E)\neq 0$ then for all $K$ we find $D(E)<0$.
In the compact set $[-E_0, E_0] \setminus \Oo$, $c^2_{L+1}(E)$ attains a minimum value bigger
than zero and $c_L$ and $c_{L+1}$ are bounded. The highest power in $K$ appearing inside the parenthesis on the right hand side of \eqref{eq-discr} is the negative term $-4(K+1)^3 c_{L+1}^2$. 
Therefore, we find $K_0=K_0(L,E_0)$, such that $D(E)<0$ for all
$E\in [-E_0,E_0]\setminus \Oo $ and $K>K_0$.
In this case we obtain $[-E_0,E_0] \subset \overline{I^{K,L}_0}$ finishing part (iii).

For part (iv) note that for $K=L=1$ using $c_1(E)=1,\,c_2(E)=E$ we find
$ D= 4E^4-27E^2=E^2(4E^2-27)$ and $\Ee_1=\{0\}$ giving
\begin{equation}
I^{1,1}=I^{1,1}_0=\left(-\frac{3\sqrt{3}}{2}\,;\,\frac{3\sqrt{3}}{2}\right) \,\setminus\,\{0\}\;. 
\end{equation}
\end{proof}

\renewcommand{\proofname}{Proof}

For our further analysis we will also need the following property.

\begin{prop}\label{prop-Gamma}

For $E\in I^{K,L}$ one has 
\begin{equation}\label{eq-gamma-rels}
K\,\left|\Gamma^{(0)}_E\right|^2\,+\, \left|\Gamma^{(0)}_E \Gamma^{(1)}_E \cdots \Gamma^{(L)}_E\right|^2\,=\,1\;.
\end{equation}
\end{prop}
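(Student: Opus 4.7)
The plan is to derive the identity by taking imaginary parts in the recursion \eqref{eq-rec-Greens} and exploiting the positivity of $\im \Gamma^{(p)}_E$ on $I^{K,L}$.

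First, rewrite \eqref{eq-rec-Greens} in the form
\begin{equation}
 z + K\Gamma^{(0)}_z + \Gamma^{(1)}_z = -\frac{1}{\Gamma^{(0)}_z},\qquad
 z + \Gamma^{(p+1)}_z = -\frac{1}{\Gamma^{(p)}_z},\qquad
 z + \Gamma^{(0)}_z = -\frac{1}{\Gamma^{(L)}_z}, \notag
\end{equation}
for $p = 1,\ldots,L-1$. Setting $z = E + i\eta$, letting $\eta \downarrow 0$ (which is legitimate on $I^{K,L} \subset I^{K,L}_0 = I^{K,L}_p$ by Proposition~\ref{prop-spectrum}(i)), and abbreviating $\gamma_p := \im \Gamma^{(p)}_E$, the imaginary parts become
\begin{equation}
 K\gamma_0 + \gamma_1 = \frac{\gamma_0}{|\Gamma^{(0)}_E|^2},\qquad
 \gamma_{p+1} = \frac{\gamma_p}{|\Gamma^{(p)}_E|^2},\qquad
 \gamma_0 = \frac{\gamma_L}{|\Gamma^{(L)}_E|^2}. \notag
\end{equation}

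Next I would iterate the middle relations to get $\gamma_L = \gamma_1/\bigl(|\Gamma^{(1)}_E|^2\cdots|\Gamma^{(L-1)}_E|^2\bigr)$ and then use the last relation to obtain
\begin{equation}
 \gamma_1 = \gamma_0\,|\Gamma^{(1)}_E|^2 \cdots |\Gamma^{(L)}_E|^2. \notag
\end{equation}
Substituting this into the first relation and dividing by $\gamma_0$, which is strictly positive by the definition of $I^{K,L}$, yields
\begin{equation}
 K + |\Gamma^{(1)}_E|^2 \cdots |\Gamma^{(L)}_E|^2 = \frac{1}{|\Gamma^{(0)}_E|^2}, \notag
\end{equation}
and multiplying through by $|\Gamma^{(0)}_E|^2$ produces \eqref{eq-gamma-rels}.

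There is essentially no obstacle beyond justifying the boundary limit $\eta \downarrow 0$ and the nonvanishing of $\gamma_0$, both of which are delivered by Proposition~\ref{prop-spectrum}(i); the rest is purely algebraic. (I note that this identity also follows from the more general statement for finite cone type trees proved in Appendix~\ref{app-gamma}, but for the specific substitution matrix $S^{K,L}$ the chain of recursions collapses so transparently that the direct argument above is the shortest route.)
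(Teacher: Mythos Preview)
Your proof is correct and uses essentially the same idea as the paper: both arguments take imaginary parts of the recursion relations \eqref{eq-rec-Greens} and divide by the strictly positive $\im\Gamma^{(0)}_E$. The paper routes through the general determinant identity of Appendix~\ref{app-gamma} (so that $\im\vec\Gamma_E$ is an eigenvector of $|\Gamma_E|^2 S^{K,L}$ with eigenvalue $1$), whereas you carry out the same computation directly for the chain structure of $S^{K,L}$ --- a difference you already note yourself.
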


\begin{proof}
Equation \eqref{eq-gamma-rels} can be rewritten as
\begin{equation}
\det\left[\one\,-\,
 \begin{pmatrix}
 \left|\Gamma^{(0)}_E\right|^2 &  \\ & \ddots & \\ & & \left|\Gamma^{(L)}_E\right|^2
 \end{pmatrix}
 S^{K,L}\, \right]\,=\,0
\end{equation}
which is an identity shown for the more general case of trees of finite cone type in 
Proposition~\ref{prop-gr} in Appendix~\ref{app-gamma}.
\end{proof}

\begin{remark}\label{rem-I_A}
As $I^{K,L}_A=\bigcap_{j=1}^m (I^{K,L} +a_j)$ we get for $E\in I^{K,L}_A$ that $E-a_j \in I^{K,L}$ for all eigenvalues $a_j$ of $A$. In particular $\Gamma^{(p)}_{E-a_j}$ exists and $\im(\Gamma^{(p)}_{E-a_j})>0$.
\end{remark}


\section{Recursion for matrix Green's functions} \label{sec-rec}

A key identity for the analysis of Schr\"odinger operators on trees and tree-strips are the well known identities like
\eqref{eq-rec-Greens} for forward matrix Green's functions
that can be obtained from the resolvent identity.
Let $\TT_x$ be some tree that has a vertex $x$ and let $\Nn(x)$ be the set of neighboring points in $\TT_x$. 
If $H_\lb$ is given as in \eqref{eq-H_lb} on $\ell^2 (\TT_x\times\I)$ with $\I=\{1,\ldots,m\}$ then
the Green's matrix $(G_{x,\lb,z})_{jk} = \langle x,j|(H_\lb-z)^{-1}|x,k\rangle$ satisfies
\begin{equation}\label{eq-G-rec1}
 G_{x,\lb,z}\,=\, - \left(\sum_{y\in\Nn(x)} G_{\lb,z}^{(y|x)} +z-A-\lambda V(x)\right)^{-1}\;.
\end{equation}
Here and in many equations below the upper index $(y|x)$ indicates that we look at the vertex $y$ and remove the branch of the tree that is emanating from $y$ and going through $x$.
This means we let $\TT_x^{(y|x)}$ denote the tree with vertex $y$ where the branch going from $y$ to $x$ is removed (i.e. the tree of vertices $x'$ satisfying  
$d(x,x')=d(x,y)+ d(y,x')$\, ).
Furthermore, $H_\lb^{(y|x)}$ is the operator $H_\lb$ restricted to $\ell^2(\TT_x^{(y|x)} \times \I)$ with Dirichlet boundary conditions and
$(G_{\lb,z}^{(y|x)})_{jk}=\langle y,j|(H_\lb^{(y|x)}-z)^{-1}|y,k\rangle$.
This equation is valid in any tree, in particular it is valid in subtrees when certain branches were cut.

Now if a tree $\widetilde \TT$ has an assigned root $0\in\widetilde \TT$ and we set $\TT_x:=\widetilde\TT^{(x|0)}$ to be the tree where we disconnect the branch at $x$ going through the root
then for $y\in \TT_x$ one finds $\TT_x^{(y|x)}=\TT_y$. The corresponding Green's matrices are $G_{x,\lb,z}=G^{(x|0)}_{\lb,z}$ and $G^{(y|x)}_{\lb,z}=G^{(y|0)}_{\lb,z}$ for $y\neq x, y\in\TT_x$ and they
only depend on the matrix potential on the branches at $x$ or $y$, respectively,
that go away from the root. Therefore we call them {\it forward} matrix Green's functions. 
Then,  \eqref{eq-G-rec1} becomes
\begin{equation}\label{eq-G-rec2}
 G^{(x|0)}_{\lb,z}\,=\,- \left(\sum_{y\in\Nn(x)} G^{(y|0)}_{\lb,z} +z-A-\lambda V(x)\right)^{-1}\
\end{equation}
where $\Nn(x)$ is the set of neighbors of $x$ in the tree $\TT_x=\widetilde\TT^{(x|0)}$, i.e. the set of forward neighbors (or children).

In the case studied in this paper we consider the trees $(\TT^{K,L}_p)_x=(\TT^{K,L}_p)^{(x|0^{(p)})}$ for $x\in\TT^{K,L}_p$. Now if $l(x)$ is the label of $x$, then, $(\TT^{K,L}_p)_x \cong \TT^{K,L}_p$ and the distribution of $G^{(x|0^{(p)})}_{\lb,z}$ (for $x\in\TT^{K,L}_p$) only depends on $\lb,\,z$ and the label $l(x)$.
Moreover, the different $G^{(y|0^{(p)})}_{\lb,z}$ occurring on the right hand side are independent and they are independent of $V(x)$. 
Therefore one might want to work with some averaged quantities. However, the occurring inverse on the right hand side of \eqref{eq-G-rec2}
prevents one from getting something useful by just applying the expectation to this equation. The key idea is now to represent the operation $G\mapsto -G^{-1}$ as a linear
operator in some function space. More precisely, as in \cite{KS, KS2, Sad} we associate to
symmetric $m\times m$ matrices $G$ with positive imaginary part the following functions:
Let $\Sym^+(m)$ denote the real, positive semi-definite matrices, i.e. $M\in\Sym^+(m) \Leftrightarrow M\in\Sym(m),\,M\geq 0$, and for
$\re(G)\in\Sym(m),\; \im(G)\in \Sym^+(m)$ we define the bounded functions
\begin{align}\label{eq-def-zeta}
\zeta_G&\,:\,\Sym^+(m)\to\CC\;,\;\ze_G(M) = e^{\frac{i}2 \Tr(GM)} \\ \label{eq-def-xia}
\xi_G&\,: \,(\Sym^+(m))^2\to\CC\;,\;\xi_G(M_+,M_-) = \zeta_G(M_+) \overline{\zeta}_G(M_-)=e^{\frac{i}2 \Tr(GM_+)-\frac{i}2\Tr(\overline{G}M_-)}
\end{align}
Note that if $\im(G)>0$ then $\zeta_G$, $\xi_G$ and all its derivatives are exponentially decaying.
A supersymmetric Fourier transform $T$ and $\Tt=T\otimes T$ as defined in \cite{KS, KS2, Sad} gives for $\im(G)>0$ that
$T\zeta_G = \zeta_{-G^{-1}}$ and $\Tt \xi_G=\xi_{-G^{-1}}$. 
For the convenience of the reader these operators and all important Banach spaces as used in previous works will be defined precisely
in the next section.

Using these linear operators, \eqref{eq-G-rec2} can be rewritten as
\begin{equation}\label{eq-G-rec3}
 \zeta_{G^{(x|0)}_{\lb,z}}=T \Big(\zeta_{z-A-\lb V(x)} \prod_{y\in\Nn(x)} \zeta_{G^{(y|0)}_{\lb,z}}\Big)\;,\:
 \xi_{G^{(x|0)}_{\lb,z}}=\Tt \Big(\xi_{z-A-\lb V(x)} \prod_{y\in\Nn(x)} \xi_{G^{(y|0)}_{\lb,z}}\Big)\;.
\end{equation}
For $p\in\{0,1,\ldots,L\}$ we take some $x\in\TT^{K,L}_q$ with label $l(x)=p$ let $0=0^{(q)}$ and define
\begin{equation}
 \zeta^{(p)}_{\lb,z} := \E(\zeta_{G^{(x|0)})_{\lb,z}})\;,\qquad
  \xi^{(p)}_{\lb,z} := \E(\zeta_{G^{(x|0)}_{\lb,z}})\;.
\end{equation}
Then, taking expectations in \eqref{eq-G-rec3} gives for $\im(z)>0$,
$1\leq p\leq L-1$
\begin{align} \label{eq-zeta-recursion}
  & \zeta^{(0)}_{\lb,z}=TB_{\lb,z} \left((\zeta^{(0)}_{\lb,z})^K \zeta^{(1)}_{\lb,z}\right) \;,\quad \zeta^{(p)}_{\lb,z}=TB_{\lb,z} \zeta^{(p+1)}_{\lb,z}\;,\quad \zeta_{\lb,z}^{(L)}=TB_{\lb,z} \zeta_{\lb,z}^{(0)} \;\\  \label{eq-xi-recursion}
  & \xi^{(0)}_{\lb,z}=\Tt\Bb_{\lb,z} \left((\xi^{(0)}_{\lb,z})^K \xi^{(1)}_{\lb,z}\right) \;,\quad \xi^{(p)}_{\lb,z}=\Tt\Bb_{\lb,z} \xi^{(p+1)}_{\lb,z}\;,\quad \xi_{\lb,z}^{(L)}=\Tt\Bb_{\lb,z} \xi_{\lb,z}^{(0)} \;
\end{align}
where $B_{\lb,z}$ or $\Bb_{\lb,z}$ are the multiplication operators defined by
\begin{align} \label{eq-def-B}
 B_{\lb,z} f(M) &= \E (\zeta_{z-A-\lb V(x)}(M)) f(M)\\
 \Bb_{\lb,z} g(M_+,M_-) &= \E(\xi_{z-A-\lb V(x)}(M_+,M_-))\; g(M_+,M_-)\;.
\end{align}
Recall that we assume without loss of generality that $A=\diag(a_1,\ldots,a_m)$ is diagonal.
Therefore, in the free case $\lambda=0$ one obtains
\begin{equation}
\ze^{(p)}_{0,z} = \ze_{A^{(p)}_z}\;,
\;\; \xi^{(p)}_{0,z} = \xi_{A^{(p)}_z}\;,\;\;\text{where}
\;\;
A^{(p)}_z := \diag(\Gamma^{(p)}_{z-a_1},\Gamma^{(p)}_{z-a_2},\ldots,\Gamma^{(p)}_{z-a_m})\;.
\label{eq-ze0}
\end{equation}
For  $E\in I^{K,L}_{A}$ the point-wise limits 
\begin{align}
& \ze^{(p)}_{0,E} {:=} \lim_{\eta \downarrow 0} \ze^{(p)}_{0,E+i\eta} 
=\ze_{A^{(p)}_E}\, , \label{eq-ze00} \qquad
\xi^{(p)}_{0,E} {:=} \lim_{\eta \downarrow 0} \xi^{(p)}_{0,E+i\eta} =
\xi_{A^{(p)}_E}
\end{align}
exist for $p\in\{0,\ldots,L\}$, where
\begin{equation}
 A_E^{(p)}=\lim_{\eta\downarrow 0} A_{E+i\eta}^{(p)}\;,\qquad
[A_E^{(p)}]_{jk}=\delta_{jk}\,\Gamma^{(p)}_{E-a_j} \label{eq-def-AE}
\end{equation}
are diagonal $m\times m$ matrices with strictly positive imaginary part.
The important point is to understand equations \eqref{eq-zeta-recursion} and \eqref{eq-xi-recursion} as fixed point equations in appropriate Banach spaces.


\section{The proper Banach spaces} \label{sec-ban} 
  
Let us first briefly introduce the important Banach spaces as in \cite{KS, KS2, Sad} and for the readers convenience we will
list all important notation and definitions from previous works in the next definition. In particular we will also give the precise definitions of the operators $T$ and $\Tt$
mentioned above.
All proofs and arguments are omitted as \cite[Section 3]{Sad} uses the exact same notations and has the statements in more detail.
We will also skip to mention the connection to supersymmetry and how these spaces naturally evolve from this formalism.
For a supersymmetric background see \cite[Appendix~B]{Sad} or \cite{KS}.
As above we set $\I=\{1,\ldots,m\}$.

\begin{defini} \label{def-spaces}
\noindent
\begin{enumerate}[{\rm (a)}]
\item $\PP(\I)=\{a : a\subset \I\}$ denotes the set of all subsets of $\I$
\item $\Pp$ denotes the set of pairs $(\bar a, a)$ of subsets of $\I$ with the same cardinality,
$ \Pp:=\{(\bar a, a)\,: \bar a, a \subset \I,\,|\bar a|=|a| \}\;$
\item $n=n(m)$ will be the smallest integer such that $n\geq\frac{m}2$
\item With $\bar\aaa=(\bar a_1,\ldots,\bar a_n),\, \aaa=(a_1,\dots, a_n)\,\in(\PP(\I))^n$ define
\begin{equation}
\Pp^n:=\{(\bar\aaa,\aaa) \in (\PP(\I))^n \times (\PP(\I))^{n}\,:\,(\bar a_l, a_l)\in \Pp \}\;.
\end{equation}
We also let $|\aaa|:=|a_1|+|a_2|+\ldots+|a_n|$ and $\aaa^\brc:=(a_1^c,\ldots, a_n^c)$ with $a_j^c=\I\setminus a_j$.
\item For functions on $(\Sym^+(m))$ let $\partial_{j,k}$ denote the derivative with respect to the $j,k$- entry of $M$,
$\partial_{j,k} f(M)= \frac{\partial}{\partial M_{j,k}} f(M)$ 
(by symmetry, $\partial_{j,k}=\partial_{k,j}$) and let $\tilde \partial_{j,k}= \frac12\partial_{j,k}$ for $j\neq k$ and $\tilde \partial_{j,j}=\partial_{j,j}$.
\item For $(\bar a, a) \in \Pp$ with $a=\{k_1,\ldots,k_c\},\;k_1 < k_2 \ldots < k_c;\;
\bar a =\{\bar k_1,\ldots,\bar k_c\},\;\bar k_1 < \bar k_2 < \ldots < \bar k_c$,
define
\begin{equation}
 \bpart_{\bar a, a}:=\begin{pmatrix}
       \tilde \partial_{\bar k_1,k_1} & \cdots & \tilde \partial_{\bar k_1,k_c}  \\
        \vdots & \ddots & \vdots \\
	\tilde \partial_{\bar k_c,k_1} & \cdots & \tilde \partial_{\bar k_c, k_c} \end{pmatrix}\;,\quad
	D_{\bar a,a}:=\det(\bpart_{\bar a, a})\;
\end{equation}
with the convention that $D_{\emptyset,\emptyset}$ is the identity  operator.
\item For $(\bar\aaa, \aaa)\in\Pp^n$, $\bar\aaa=(\bar a_1,\ldots,\bar a_1),\, \aaa=(a_1,\dots, a_n)$, let
$D_{\bar\aaa,\aaa}:=\prod_{\ell=1}^n D_{\bar a_\ell, a_\ell}\;$.
There is a function $\sgn(\bar\aaa,\aaa,\bar\bbb,\bbb,\bar\bbb',\bbb')
\in \{-1,0,1\}$ such that 
\begin{equation} \label{eq-Leibn0}
D_{\bar\aaa,\aaa}\,(fg) = 
\sum_{ (\bar\bbb,\bbb),(\bar\bbb',\bbb')\in\Pp^n}\;\sgn(\bar\aaa,\aaa,\bar\bbb,\bbb,\bar\bbb',\bbb')
(D_{\bar\bbb,\bbb}\, g)\,(D_{\bar\bbb',\bbb'}\, f)\;.
\end{equation}
\item For $f\in C^\infty(\Sym^+(m)),\;g\in C^\infty((\Sym^+(m))^2)$ and $p\geq 1$ we introduce the norms $\hn f\hn_p$ and $\hnn g \hnn_p$ by
\begin{equation} \label{eq-def-norms}
\hn f \hn_p^2\;: =\;
\sum_{(\bar\aaa,\aaa)\in\Pp^n}\; 4^{|\aaa|}\left[\int_{\RR^{m\times 2n}} \left|
 D_{\bar\aaa,\aaa} \;f\,(\bvp\bvp^\top)\right|^p d^{2mn}\bvp\right]^{2/p}
\end{equation}
where $\bvp$ denotes a $m\times 2n$ matrix, $\bvp\in\RR^{m\times 2n}$, and
\begin{align}
\hnn g\hnn_p^2 := &
\sum_{(\bar\aaa,\aaa),(\bar\bbb,\bbb) \in \Pp^n} 4^{|\aaa|+|\bbb|} \\
& \left[ \int_{(\RR^{m\times 2n})^2}
\left|  D_{\bar\aaa,\aaa}^{(+)} D_{\bar\bbb,\bbb}^{(-)} g(\bvp_+\bvp_+^\top,\bvp_-\bvp_-^\top)
\right|^p d^{2mn}\bvp_+ d^{2mn}\bvp_-\;\right]^{2/p}.
\end{align}
Here, $D_{\bar\aaa,\aaa}^{(\pm)}$ denotes the operator $D_{\bar\aaa,\aaa}$ with respect to the entry $M_\pm=\bvp_\pm\bvp_\pm^\top$.
Note that the map $\RR^{m\times 2n} \ni \bvp \mapsto \bvp\bvp^\top\in\Sym^+(m)$ is surjective as $n\geq 2m$.
We also define the corresponding norms $\hn f\hn_\infty$ and $\hnn g \hnn_\infty$ using the limit $p\to\infty$ which are given by the sums of the corresponding suprema.
\item Let $\Sym_\CC(m)$ denote the complex symmetric, $m\times m$ matrices and for $B\in\Sym_\CC(m)$ with strictly positive imaginary part
(i.e., $\im B > 0$), let $\PE(B)$ denote the vector space spanned by functions of the form
$f(M) = p(M) \zeta_B(M)$, where $p(M)$ is a polynomial in the entries of $M\in\Sym^+(m)$. Clearly, due to the exponential decay in
$\zeta_B$ one has $\hn f\hn_p<\infty$ for all $p\in[1,\infty]$.
\item Define $\PE(m)$ as the smallest vector space containing all vector spaces $\PE(B)$ for all $B\in\Sym_\CC(m)$ with $\im(B)>0$.
\item For $1\leq p\leq \infty$, let $\Ll^p$ be the completion of $\PE(m)$ with respect to the norm $\hn f \hn_p$. We furthermore set $\Hh=\Ll^2$, define $\hn f \hn_{2,p}:=\hn f\hn_2+\hn f\hn_p$ and let $\Hh_p$ be the completion of $\PE(m)$ w.r.t. to $\hn\cdot\hn_{2,p}$.
\item For $1\leq p\leq \infty$ let $\widehat \Ll^p$ be the completion 
of $\PE(m)\otimes\PE(m)$ w.r.t. $\hnn g \hnn_p$, set $\Kk=\widehat\Ll^2$ and
$\hnn g\hnn_{2,p}:=\hnn g \hnn_2+\hnn g \hnn_p$, 
and let $\Kk_p$ be the completion of $\PE(m)\otimes \PE(m)$ w.r.t. $\hnn g \hnn_{2,p}$.
In fact, $\Kk=\Hh\otimes \Hh$ as Hilbert space tensor product.
\item Let $\III\in\PP(\I)^n$  be given by $\III=(\I,\I,\ldots,\I)$.
\item The supersymmetric Fourier transform $T$ is given by
\begin{equation} \label{eq-def-T0}
Tf (\bvp'\bvp'^\top) := \frac{(-1)^{mn}}{\pi^{mn}} \, \int
e^{\pm i \Tr(\bvp' \bvp^\top)}\,D_{\III,\III} f(\bvp\bvp^\top)\, d^{2mn}\bvp\;.
\end{equation}
Note that the sign of $\pm$ does not matter by the symmetry $\bvp \to -\bvp$.
$\Tt=T\otimes T$ is given by
\begin{align}
& \Tt g(\bvp_+'{\bvp'}_+^\top, \bvp_-'{\bvp'}_-^\top) \\ \notag
& \quad =\frac1{\pi^{2mn}} \int e^{\pm i \Tr(\bvp_+'\bvp^\top_+ \pm \bvp_-' \bvp_-^\top)} 
D_{\III,\III}^+ D_{\III,\III}^- \,g(\bvp_+\bvp_+^\top,\bvp_-\bvp_-^\top)\,d^{2mn} \bvp_+\,d^{2mn} \bvp_-
\end{align}
where $D_{\III,\III}^\pm$ denotes the operator $D_{\III,\III}$ with respect to the entry $M_\pm=\bvp_\pm\bvp_\pm^\top$.
\end{enumerate}
\end{defini}
Roughly speaking, $\Ll^p$, $\widehat \Ll^p$ are the supersymmetric analogs
of $L^p$ spaces, $\Hh_p$ and $\Kk_p$ resemble $L^2 \cap L^p$ and the operators
$T$ and $\Tt$ have the role of the Fourier transform.
One might think that $\Hh_p=\Ll^2\cap\Ll^p$ as a set which is equivalent to saying that $\PE(m)$ is dense w.r.t. the $\hn\cdot\hn_{2,p}$ norm in $\Ll^2 \cap \Ll^p$.
Unfortunately, we were not able to prove this.
The spaces $\Ll^p,\, \widehat \Ll^p$ have not been used in former papers, but some
technical difficulties in this paper requires the use of $\Ll^\infty$ and $\Ll^1$ in Section~\ref{sec-proofs} for proving Theorem~\ref{th:main2}.

One finds (cf. \cite[eq. (B.22)]{Sad}, \cite[eq. (2.37)]{KS})
 \begin{equation}\label{eq-T-parts}
D_{\bar\aaa,\aaa} (Tf) = 
\tfrac{2^{mn}}{4^{|\aaa|}}\;\sgn(\aaa,\bar\aaa)\;\Ff(D_{\aaa^\brc,\bar\aaa^\brc}\,f) \qtx{for all} (\bar\aaa,\aaa) \in \Pp^n\;,
\end{equation}
where $\Ff f(\bvp\bvp^\top)$ denotes the Fourier transform w.r.t. to $\bvp$ in $\RR^{m\times 2n} \cong \RR^{2mn}$
and $\sgn(\aaa,\bar\aaa)\in\{-1,1\}$ is some sign.

Using the fact that the Fourier transform $\Ff f(\bvp\bvp^\top)$ maps $\PE(m)$ to $\PE(m)$ and continuously 
$L^1(d\bvp)$ to $L^\infty(d\bvp)$  and $L^2(d\bvp)$ to $L^2(d\bvp)$ we find the following
(cf. \cite[Lemma 2.6, Lemma 3.3]{Sad}).
\begin{lemma}\label{lem-op-T}
$ $
\begin{enumerate}[{\rm (i)}]
\item $\Hh$ is a Hilbert space and $T$ extends to a unitary operator on $\Hh$. It also defines a bounded linear operator from
$\Hh_1$ to $\Hh_\infty$ and from $\Ll^1$ to $\Ll^\infty$.
\item $\Kk$ is a Hilbert space and $\Tt$ is unitary on $\Kk$. It also defines a bounded linear operator from
$\Kk_1$ to $\Kk_\infty$ and $\widehat \Ll^1$ to $\widehat \Ll^\infty$.
\item For any $B\in\Sym_\CC(m),\,\im(B)>0$, $p\in[1,\infty]$ we find that $\PE(B)$ is dense in $\Hh$ and $\Hh_p$.
\item For any $B,C\in\Sym_\CC(m),\,\im(B)>0, \im(C)>0$, $p\in[1,\infty]$ we find that $\PE(B)\otimes\PE(C)$ is dense in $\Kk$ and $\Kk_p$.
\end{enumerate}
\end{lemma}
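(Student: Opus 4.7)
The plan is to read everything off the key identity \eqref{eq-T-parts}, which rewrites each derivative $D_{\bar\aaa,\aaa}(Tf)$ as $\tfrac{2^{mn}}{4^{|\aaa|}}\sgn(\aaa,\bar\aaa)\Ff(D_{\aaa^\brc,\bar\aaa^\brc}f)$, where $\Ff$ is the ordinary Fourier transform on $\RR^{m\times 2n}\cong\RR^{2mn}$. The crucial combinatorial observation is that $(\bar\aaa,\aaa)\mapsto(\aaa^\brc,\bar\aaa^\brc)$ is an involution of $\Pp^n$ with $|\aaa|+|\aaa^\brc|=mn$, so after reindexing by $\bbb:=\aaa^\brc$ the weight $4^{|\aaa|}$ becomes $4^{mn-|\bbb|}$ and combines with the squared prefactor $4^{mn}/16^{|\aaa|}$ to yield exactly the weight $4^{|\bbb|}$ on the reindexed side.

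For (i), the $\Hh=\Ll^2$ claim follows by applying Plancherel term by term (the $\pi^{-mn}$ normalization in \eqref{eq-def-T0} is chosen precisely so that the Fourier transform appearing in \eqref{eq-T-parts} is an $L^2$ isometry after the $2^{mn}$ factors are accounted for); this yields $\hn Tf\hn_2=\hn f\hn_2$ on the dense subspace $\PE(m)$, so $T$ extends to a unitary operator on $\Hh$. For the $\Ll^1\to\Ll^\infty$ and $\Hh_1\to\Hh_\infty$ bounds I replace Plancherel by the Hausdorff--Young estimate $\|\Ff g\|_\infty\leq C\|g\|_1$, obtaining a term-by-term bound $\|D_{\bar\aaa,\aaa}(Tf)\|_\infty\leq C\,2^{mn}4^{-|\aaa|}\|D_{\aaa^\brc,\bar\aaa^\brc}f\|_1$ and then squaring, summing, and reindexing as before. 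Part (ii) reduces to (i) via the tensor-product structure $\Tt=T\otimes T$ and $\Kk=\Hh\otimes\Hh$, the norms $\hnn\cdot\hnn_p$ factorizing over the $(+)$ and $(-)$ variables.

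The density claims (iii) and (iv) are the main obstacle. To approximate a generator $q(M)\zeta_{B'}(M)\in\PE(m)$ by elements of $\PE(B)$ in the $\hn\cdot\hn_{2,p}$ norm, I would write $\zeta_{B'}=\zeta_B\cdot\exp\!\left(\tfrac{i}{2}\Tr((B'-B)M)\right)$ and Taylor-expand the second exponential in the entries of $M$; each partial sum $q(M)s_N(M)\zeta_B(M)$ lies in $\PE(B)$. The remainder $\zeta_B(M)R_N(M)q(M)$ must then be shown to vanish in $\hn\cdot\hn_{2,p}$ as $N\to\infty$, and the mechanism is that $\zeta_B$ provides strict exponential decay on $\Sym^+(m)$ (since $\im B>0$) while $R_N$ grows only polynomially in $M$ with a $1/(N+1)!$ prefactor, so factorials beat polynomials uniformly against any fixed finite collection of multi-derivatives $D_{\bar\aaa,\aaa}$. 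The delicate point is handling all $(\bar\aaa,\aaa)\in\Pp^n$ at once; this is essentially the one place where the proof needs the full strength of the exponential decay of $\zeta_B$ rather than merely the integrability estimates. Part (iv) then follows by applying the same construction independently in each tensor factor, giving approximations in $\PE(B)\otimes\PE(C)$ that converge in $\hnn\cdot\hnn_{2,p}$.
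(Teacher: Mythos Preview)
Your treatment of (i) and (ii) is correct and matches the paper exactly: the paper also derives everything from \eqref{eq-T-parts} together with the $L^2\to L^2$ and $L^1\to L^\infty$ mapping properties of the ordinary Fourier transform (it simply cites \cite[Lemma~2.6, Lemma~3.3]{Sad} rather than writing it out).

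For (iii) and (iv), however, your remainder estimate has a real gap. You claim that the Taylor remainder $R_N$ of $\exp\bigl(\tfrac{i}{2}\Tr((B'-B)M)\bigr)$ ``grows only polynomially in $M$ with a $1/(N{+}1)!$ prefactor''. This is false in general: writing $z=\tfrac{i}{2}\Tr((B'-B)M)$, one has $\re z=-\tfrac12\Tr((\im B'-\im B)M)$, which is positive on part of $\Sym^+(m)$ whenever $\im B'\not\geq\im B$. The integral form of the remainder then only gives $|R_N(z)|\leq\tfrac{|z|^{N+1}}{(N+1)!}e^{\max(0,\re z)}$, and the exponential factor combines with $|\zeta_B|$ to produce at best the decay of $|\zeta_{B'}|$. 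Even so, the ``factorials beat polynomials'' slogan fails because the polynomial degree is $N{+}1$, not fixed: a Stirling computation shows that $\sup_M |q(M)|\,\tfrac{(C\Tr M)^{N+1}}{(N+1)!}\,e^{-c\Tr M/2}$ behaves like $(2C/c)^{N}$ up to polynomial corrections, and the $L^p$ norms behave similarly. So your argument converges only when $\|B'-B\|$ is small relative to the bottom eigenvalues of $\im B$ and $\im B'$.

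The fix is straightforward but must be stated: the local estimate shows $\overline{\PE(B)}=\overline{\PE(B')}$ whenever $B,B'$ are close, and since $\{B:\im B>0\}$ is path-connected one chains along a path to conclude $\overline{\PE(B)}=\overline{\PE(B')}$ for all such $B,B'$, hence $\overline{\PE(B)}\supset\PE(m)$. With this connectivity step added, your approach goes through; the paper itself does not spell out a proof and defers to \cite{Sad}.
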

From \eqref{eq-T-parts} and \cite[equ. (B.26)]{Sad} one obtains
\begin{equation}
 T^2= {\rm id},\quad \Tt^2={\rm id},\quad T\zeta_G = \zeta_{-G^{-1}},\quad
 \Tt \xi_G = \xi_{-G^{-1}}\,.
\end{equation}
We need to get back to $G$ and $|G|^2$ from the functions $\ze_G$ and $\xi_G$. Therefore we define
\begin{equation}\label{eq-def-bfD}
 \mathbf{D}:=\left( \tfrac{(-1)^{mn+j+k}}{2\pi^{mn}} D_{\I,\I}^{n-1} D_{\I\setminus\{k\},\I \setminus\{j\}}\right)_{j,k\in\I}\,
\end{equation}
which is a $m\times m$ matrix of differential operators.
Then for $G\in\Sym_\CC(m),\;\im(G)>0$ one finds
\begin{equation}\label{eq-G-inv}
 G^{-1}= i\,\int \mathbf{D}\,\zeta_G(\bvp\bvp^\top)\,d^{2mn}\,\bvp\,,\quad
 G=-i\,\int \mathbf{D}\,T\,\zeta_G(\bvp\bvp^\top)\,d^{2mn}\,\bvp\;.
\end{equation}
As $\xi_G(M_+,M_-) = \ze_G(M_+)\overline{\ze_G(M_-)}$,  Fubini leads to
\begin{equation}\label{eq-|G|}
 |G|^2=G^*G=\int \mathbf{D}^{(-)} \mathbf{D}^{(+)}\,\Tt\,\xi_G(\bvp_+\bvp_+^\top,\bvp_-\bvp_-^\top)\,d^{2mn}\,\bvp_+ d^{2mn}\,\bvp_-
\end{equation}
where $\mathbf{D}^{(\pm)}$ is the operator $\mathbf{D}$ with respect to the entries $M_\pm=\bvp_\pm \bvp^\top_\pm$.

Equation \eqref{eq-G-inv} can either be obtained using a super symmetric formalism or by realizing that
$D_{\I,\I} \zeta_G = (\frac{i}{2})^m \det(G) \zeta_G$ and $(-1)^{j+k} D_{\I\setminus\{k\},\I\setminus\{j\}} \zeta_G=
(\frac{i}{2})^{m-1} G^{j,k}$, where $G^{j,k}$ is the cofactor of the $j,k$ element of $G$.
Moreover, using the Gaussian integral identity as in \eqref{eq-Gausint}, one has
\begin{equation}
\int  \zeta_G(\bvp\bvp^\top)\,d^{2mn}\bvp \,=\, \frac{(2\pi)^{mn}}{(-i)^{mn} (\det(G))^{n}}\;.
\end{equation}
Combining all these facts with Kramer's rule, $(G^{-1})_{jk} = \frac{G^{j,k}}{\det(G)}$, one obtains the first equation in \eqref{eq-G-inv}. The second one follows from the first one and $T\zeta_G=\zeta_{-G^{-1}}$.

\vspace{.2cm}

Now we can turn back to the analysis of the recursion equations \eqref{eq-zeta-recursion} and \eqref{eq-xi-recursion}.
For simplified notations, let us introduce
\begin{equation}
 \vze_{\lb,z} = \pmat{
  \ze_{\lb,z}^{(0)} \\ \vdots \\ \zeta_{\lb,z}^{(L)} }
  \qtx{and} 
 \vxi_{\lb,z} =
 \begin{pmatrix}
  \xi_{\lb,z}^{(0)} \\ \vdots \\ \xi_{\lb,z}^{(L)}
 \end{pmatrix}\;.
\end{equation}

Using H\"older's inequality, Dominated Convergence and the exponential decay of the functions $\zeta_G$, $\xi_G$ for $\im G>0$ as well as for $\vze_{0,E}$ and $\vxi_{0,E}$ for $E\in I^{K,L}_A$, one obtains the following completely analogue to \cite[Proposition~4.1 and 4.2]{Sad}.

\begin{prop} \label{zeta} We have:
\begin{enumerate}[{\rm (i)}]
\item For $\eta=\im z\geq 0$ the operator $B_{\lb,z}$ is a bounded operator
on $\Hh$ and $\Hh_1$. The map 
\begin{equation}\label{def-F}
F\,:\,
\left(\lambda,E,\eta,f_0,f_1,\ldots,f_L\right) \mapsto\, T B_{\lambda,E+i\eta} 
((f_0^K f_1), f_2, f_3, \ldots , f_L, f_0)
\end{equation}
is a continuous map from 
$\RR\times\RR\times[0,\infty)\times \Hh_\infty\times\Hh^L$ to $\Hh_\infty\times\Hh^L$.
Here $T B_{\lb,z}$ applied to a vector of functions means that we apply it to every
function, $TB_{\lb,z}\left(\wt f_0, \ldots, \wt f_L\right)=
\left(TB_{\lb,z} \wt f_0,\ldots,TB_{\lb,z} \wt f_L\right)$.

Analogously, for $\eta=\im z\geq 0$ the operator $\cB_{\lb,z}$ is a bounded operator on $\Kk$ and $\Kk_1$
and the map
\begin{equation} \label{def-Q}
Q\,:\,(\lambda,E,\eta,g_0,g_1,\ldots,g_L)\mapsto \Tt\cB_{\lambda,E+i\eta} 
((g_0^K g_1), g_2, \ldots , g_L, g_0)
\end{equation} 
is a continuous map from $\RR\times\RR\times [0,\infty)\times \Kk_\infty \times \Kk^L$ to $\Kk_\infty\times\Kk^L$.
\item   $\vze_{\lb,z} \in {\Hh}^{L+1}_\infty \subset \Hh_\infty \times \Hh^L$, $\vxi_{\lb,z} \in \Kk_\infty^{L+1} \subset \Kk_\infty\times \Kk^L$
for all $\lb \in \RR$ and $z=E+i\eta$ with $\eta >0$.   
The maps $(\lb,E, \eta) \to \vze_{\lb,E +i\eta}$ and $(\lb,E, \eta) \to \vxi_{\lb,E +i\eta}$ are continuous from
$\RR \times \RR \times (0,\infty)$ to $ \Hh_\infty \times \Hh^L$ and to $\Kk_\infty \times \Kk^L$, respectively.
\item  If $E\,\in\, I^{K,L}_{A}$, then  $\vze_{0,E} \in (\PE(m))^{L+1} \subset \Hh_\infty\times\Hh^L$,
$\vxi_{0,E} \in \Kk_\infty\times \Kk^L$ and
\begin{equation}
\lim_{\eta \downarrow 0} \vze_{0,E+i\eta}\; = \;\vze_{0,E} \;\;\mbox{in}\;\; \Hh_\infty\times\Hh^L \;,\quad
\lim_{\eta \downarrow 0} \vxi_{0,E+i\eta}\; = \;\vxi_{0,E} \;\;\mbox{in}\;\; \Kk_\infty \times \Kk^L\;.   
\end{equation}

\item The equalities  \eqref{eq-zeta-recursion} and \eqref{eq-xi-recursion} 
can be rewritten as fixed point equations in 
$\Hh\times\Hh_\infty$ and $\Kk\times\Kk_\infty$, respectively, 
\begin{equation} \label{eq-fpz}
\vze_{\lb,z}  \; = \;F(\lb,E,\eta,\vze_{\lb,z})\;,\quad
\vxi_{\lb,z}  \; = \;Q(\lb,E,\eta,\vxi_{\lb,z})\;,
\end{equation}
valid for all $\lb \in \RR$ and $z=E+i\eta$ with $\eta >0$, and also valid for  $\lb=0$ and
 $z=E$ with $E\,\in\, I^{K,L}_A$. 
\end{enumerate}
\end{prop}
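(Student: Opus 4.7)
The plan is to proceed analogously to \cite[Propositions~4.1, 4.2]{Sad}, paying special attention to how the change of Banach space from $\Hh_\infty^{L+1}$ to $\Hh_\infty\times\Hh^L$ accommodates the line-segment labels $p=1,\ldots,L$ where only a single child appears in the recursion.

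For part (i), I first analyze the multiplier $\phi_{\lb,z}(M):=\E(\zeta_{z-A-\lb V(x)}(M))$. For $\im z\geq 0$ one has $|\zeta_{z-A-\lb V}(M)|=e^{-\frac12\Tr(\im(z)M)}\leq 1$ pointwise, and each derivative $D_{\bar a,a}\zeta_{z-A-\lb V}$ equals $\zeta_{z-A-\lb V}$ times a polynomial in the entries of $z-A-\lb V(x)$. Assumption (V) makes these polynomials uniformly bounded in $V$, so $\phi_{\lb,z}$ and all its derivatives are pointwise bounded uniformly on compact $(\lb,E,\eta)$-sets with $\eta\geq 0$. Hence $B_{\lb,z}$ is bounded on both $\Hh$ and $\Hh_1$, and $(\lb,E,\eta)\mapsto B_{\lb,z}$ is continuous in operator norm by dominated convergence. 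I then combine this with Lemma~\ref{lem-op-T}: on the first coordinate $f_0^Kf_1$, use $T:\Hh_1\to\Hh_\infty$ and the Leibniz rule \eqref{eq-Leibn0} to distribute $D_{\bar\aaa,\aaa}$ across the $K+1$ factors. Estimating by Hölder, I put $K-1$ factors of $D f_0$ in $L^\infty$ (controlled by $\hn f_0\hn_\infty$) and keep one derivative of $f_0$ and the derivative of $f_1$ in $L^2$, which yields $\hn f_0^Kf_1\hn_1\lesssim \hn f_0\hn_\infty^{K}\hn f_1\hn_2$ and an analogous bound for $\hn\cdot\hn_2$. Composing with $T$ gives a continuous map into $\Hh_\infty$. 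For the remaining $L$ coordinates only a single function appears, so I just invoke unitarity of $T$ on $\Hh$. The argument for $Q$ on $\Kk_\infty\times\Kk^L$ is identical, using $\Tt$.

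For part (ii), the forward Green's matrix $G^{(x|0)}_{\lb,z}$ satisfies $\im(G^{(x|0)}_{\lb,z})\geq \im(z)\,\one>0$ for $\eta>0$, uniformly in the random potential. Therefore $\zeta_{G^{(x|0)}_{\lb,z}}$ and all its derivatives are exponentially decaying in $\bvp\bvp^\top$ uniformly in $V$, so the expectations $\ze^{(p)}_{\lb,z}$ lie in $\Hh_\infty$ (and indeed $\Hh_p$ for all $p$), and dominated convergence delivers continuity of $(\lb,E,\eta)\mapsto\vze_{\lb,z}$ on $\eta>0$. For part (iii), at $\lb=0$ the distribution is a point mass, so $\ze^{(p)}_{0,z}=\zeta_{A^{(p)}_z}$; by Remark~\ref{rem-I_A} the matrix $A^{(p)}_E$ has strictly positive imaginary part for $E\in I^{K,L}_A$, hence $\ze_{A^{(p)}_E}\in\PE(A^{(p)}_E)\subset\Hh_\infty$ and by Proposition~\ref{prop-spectrum}(i) the convergence $A^{(p)}_{E+i\eta}\to A^{(p)}_E$ is real analytic, giving convergence in $\Hh_\infty$ (and in $\Hh$) by dominated convergence applied to the polynomial $\times$ Gaussian structure of $D_{\bar\aaa,\aaa}\zeta_{A^{(p)}_z}$.

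For part (iv) I apply $\E(\cdot)$ to \eqref{eq-G-rec3}. By the tree structure, the forward Green's matrices $G^{(y|0)}_{\lb,z}$ indexed by the children $y$ of $x$ depend only on the $V(\cdot)$ on disjoint subtrees, so they are mutually independent and independent of $V(x)$. Thus the expectation of the product inside $T$ factorizes: the factor $\zeta_{z-A-\lb V(x)}$ averages to the multiplier defining $B_{\lb,z}$, and each factor $\zeta_{G^{(y|0)}_{\lb,z}}$ averages to $\zeta_{\lb,z}^{(l(y))}$, with the children labels dictated by the rule $S^{K,L}$. Combined with the continuity/boundedness of $T$ from part (i), which justifies interchanging $\E$ and $T$ by Fubini using the uniform exponential decay of part (ii), this yields \eqref{eq-zeta-recursion} and thus $\vze_{\lb,z}=F(\lb,E,\eta,\vze_{\lb,z})$; the $\xi$-version is identical. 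The main technical obstacle is the Hölder-type estimate on the nonlinear term $f_0^Kf_1$: the combinatorics of Leibniz for the supersymmetric differential operators $D_{\bar\aaa,\aaa}$ must be organized so that exactly two factors land in $L^2$ and the remainder in $L^\infty$, which is the precise feature that forces the target space $\Hh_\infty\times\Hh^L$ rather than $\Hh^{L+1}_\infty$.
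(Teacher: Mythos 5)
Your proof follows the same route the paper itself indicates (H\"older estimates on the nonlinear term, dominated convergence, uniform exponential decay of $\zeta_G$ for $\im G>0$, and analogy to Propositions~4.1--4.2 of \cite{Sad}), and the overall argument is correct; in particular you have correctly identified that the $p=1,\ldots,L$ coordinates involve only a single factor under $TB_{\lambda,z}$, which is precisely why the target space must be $\Hh_\infty\times\Hh^L$ rather than $\Hh_\infty^{L+1}$. Two small slips worth fixing: first, $(\lambda,E,\eta)\mapsto B_{\lambda,z}$ is only \emph{strongly} continuous on $\Hh_1$ (not continuous in operator norm), since the phase $\Tr\big((z-z')M\big)$ is unbounded in $M$ so $\phi_{\lambda,z}\to\phi_{\lambda',z'}$ cannot be uniform; strong continuity together with uniform boundedness on compact parameter sets and the H\"older bound $\hn f_0^Kf_1\hn_{2,1}\lesssim\hn f_0\hn_{2,\infty}^K\hn f_1\hn_2$ is what actually yields joint continuity of $F$ (the paper records exactly this strong continuity at the start of Section~\ref{sec-proofs}). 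Second, the inequality $\im G^{(x|0)}_{\lambda,z}\geq\eta\,\one$ is an overstatement: one only has $\im G^{(x|0)}_{\lambda,z}\geq \eta\,(\|H^{(x|0)}_\lambda\|+|z|)^{-2}\,\one$, which under assumption (V) is still a lower bound uniform in the randomness on compact $(\lambda,E,\eta)$-sets, and that is all that is needed for the uniform exponential decay in part (ii).
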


\begin{remark} \label{rem-tech-diff}
One of the differences between this work and \cite{Sad} is that for fixed $(\lambda,E,\eta)$
the maps $F$ and $Q$ are operators on $\Hh_\infty\times\Hh^L$ and $\Kk_\infty\times\Kk^L$, respectively, but not on $\Hh_\infty^{L+1}$ or $\Kk_\infty^{L+1}$. 
We can not use the space $\Hh^{L+1}_\infty$ as for $\eta=0$ and $f_p\in\Hh_\infty$ we can not 
show that $B_{\lb,E} f_p$ is in $\Hh_1$ so that $T$ would map it back to
$\Hh_\infty$. We can only say that one lands in $\Hh$ after applying $T$. 
This in turn comes from the fact that there is only one factor
and not a product of more than one function after
$B_{\lb,z}$ in the last $L$ entries
which resembles the fact that vertices of label $1,\ldots,L$ do have only
one child in $\TT^{K,L}$.
Recall that in \cite{Sad} every vertex had to have at least two children.\\
The other difference is that we do not need the smaller spaces 
$\Hh_\infty^{(0)},\,\Kk_\infty^{(0)}$ introduced in \cite{Sad} to
avoid some further assumption (but one could work in the spaces $\Hh^{(0)}_\infty\times(\Hh^{(0)})^L$,
$\Kk^{(0)}_\infty\times(\Kk^{(0)})^L$ if one wanted to).
\end{remark}


\section{Spectrum of Frechet derivatives \label{sec-frechet}}


It is now time to introduce some more notations to properly describe the spectrum 
of the Frechet derivatives. These notations were also used in \cite{Sad}.
By $\Delta(m,\ZZ_+)$ we denote the set of upper triangular matrices with non-negative integer entries.
For $J\in\Delta(m,\ZZ_+)$ and $E\in  I^{K,L}_{A}$ we define
\begin{align}
\theta^{(p)}_{J,E} &:= \prod_{\substack{j,k\in\{1,\ldots,m\} \\j\leq k}} \left[\Gamma^{(p)}_{E-a_j} \Gamma^{(q)}_{E-a_k} \right]^{J_{jk}}\;
\in\,\CC\,,\;p=0,1,\ldots,L\,,\qtx{and} 
\label{eq-def-theta}
\\
\blb_{J,E}&:=\diag(\theta^{(0)}_{J,E},\ldots,\theta^{(L)}_{J,E})=
\pmat{
 \theta^{(0)}_{J,E} &  \\  & \ddots \\ & & \theta^{(L)}_{J,E} }
\label{eq-def-blb}
\end{align}

With the help of these matrices we will express the spectrum of the important Frechet derivatives.
The following lemma corresponds to \cite[Lemma 5.1 and 5.2]{Sad}.

\begin{lemma}\label{lem-DF}
We have:
\begin{enumerate}[{\rm (i)}]
\item The map $F$ as in \eqref{def-F}
is continuous and Frechet-differentiable w.r.t. $\vec{f}=(f_0,f_1,\ldots,f_L)\in\Hh_\infty\times\Hh^L$. 
For $\vec f \in  \Hh_\infty^{L+1}$ the
Frechet derivative $F_{\vec f}$
extends naturally to a bounded operator on $\Hh^{L+1}$
which we will also denote as $F_{\vec{f}}$.\\
Similarly, the map $Q$ is Frechet-differentiable w.r.t. $\vec{g}\in\Kk_\infty \times\Kk^L$. 
The derivative $Q_{\vec g}$ is a bounded linear operator on $\Kk_\infty\times\Kk^L$ and 
for $\vec g \in \Kk^{L+1}_\infty$ it extends naturally to a bounded operator on $\Kk^L$. 

\item For $E\in I^{K,L}_{A}$ let $C_E=F_{\vec{f}}(0,E,0,\vec{\ze}_{0,E})$ and
 $\Cc_E=Q_{\vec{g}}(0,E,0,\vxi_{0,E})$.
Then $C_E^{2(L+1)}$ is a compact operator on $\Hh_\infty \times \Hh^L$ and $\Hh^{L+1}$ and 
 $\Cc_E^{2(L+1)}$ is a compact operator on $\Kk_\infty\times\Kk^L$ and $\Kk^{L+1}$.

\item
The spectrum of $C_E$ as an operator on the Hilbert space  $\Hh^{L+1}$ is given by the eigenvalues 
of the matrices $\blb_{J,E} S^{K,L}$ for $J\in\Delta(m,\ZZ_+)$ and the accumulation point $0$.
Thus, denoting the spectrum of $C_E$ on $\Hh^{L+1}$ by $\sigma_{\Hh}(C_E)$ one obtains
\begin{equation} \label{eq-lbj}
\sigma_{\Hh}(C_E)=\bigcup_{J\in\Delta(m,\ZZ_+)} \sigma(\blb_{J,E} S^{K,L})\;\cup\;\{0\}\;.
\end{equation}
Similarly, denoting the spectrum of $\Cc_E$ on $\Kk^{L+1}$ by $\sigma_{\Kk}(\Cc_E)$ one finds
\begin{equation}\label{eq-spectrum-Cc_E}
 \sigma_{\Kk}(\Cc_E)=\bigcup_{J,J'\in\Delta(m,\ZZ_+)} \sigma(\blb_{J,E}\blb^*_{J',E}S^{K,L}) \cup\{0\}\,.
\end{equation}
Here $\blb_{J,E}$ are the matrices as defined in \eqref{eq-def-blb}.

\item The spectra of  $C_E$ and $\Cc_E$ as operators on 
$\Hh_\infty\times \Hh^L$ and $\Kk_\infty \times\Kk^L$, respectively, (denoted by $\sigma_{\Hh_\infty\times\Hh^L}(C_E)$ and
$\sigma_{\Kk_\infty\times\Kk^L}(\Cc_E)$), are the same as their spectra as operators on $\Hh^{L+1}$ and $\Kk^{L+1}$, respectively,  
\beq \label{eq-spectrum-CEinfty}
\sigma_{\Hh_\infty\times\Hh^L}(C_E) = \sigma_{\Hh}(C_E)\;,\quad
\sigma_{\Kk_\infty\times\Kk^L}(\Cc_E) = \sigma_{\Kk}(\Cc_E)\;.
\eeq
\end{enumerate}
\end{lemma}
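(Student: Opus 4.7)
The plan is to adapt \cite[Lemma 5.1, 5.2]{Sad} to the asymmetric Banach space setup $\Hh_\infty \times \Hh^L$ (resp.\ $\Kk_\infty \times \Kk^L$). For part (i), I would compute the Frechet derivative of $F$ componentwise. Only the zeroth component is nonlinear: it sends $\vec f \mapsto TB_{\lb,z}(f_0^K f_1)$, with derivative in direction $\vec h=(h_0,\ldots,h_L)$ equal to $TB_{\lb,z}\bigl(K f_0^{K-1} f_1 h_0 + f_0^K h_1\bigr)$; the remaining components act linearly as $\vec f\mapsto TB_{\lb,z}f_{p+1}$ (or $TB_{\lb,z}f_0$ when $p=L$), so their derivatives are themselves. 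Boundedness and continuity reduce via Proposition~\ref{zeta}(i) to the fact that $\Hh_\infty$ is a Banach algebra under pointwise products, which follows from Leibniz's rule applied to $\hn\cdot\hn_\infty$ and $\hn\cdot\hn_2$. When $\vec f\in\Hh_\infty^{L+1}$, the multipliers $Kf_0^{K-1}f_1$ and $f_0^K$ lie in $\Hh_\infty$, hence act boundedly on $\Hh$, and the derivative extends to a bounded operator on $\Hh^{L+1}$. The argument for $Q$ is identical with $\Tt,\cB_{\lb,z},\Kk$ in place of $T,B_{\lb,z},\Hh$.

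For part (ii), compactness of $C_E^{2(L+1)}$ is the step requiring the "additional work" noted in the introduction. Each entry of the matrix $C_E$ is a multiplication operator (by a fixed function in $\PE(m)$) followed by $T$. For the chain entries $p=1,\ldots,L$ this multiplier is a single $\zeta$-factor, so a single application does not return to $\Hh_\infty$ and compactness is not immediate. After $2(L+1)$ iterations, every composition path through the labels visits label $0$ at least twice, producing at least two genuine multi-factor products before a $T$, which provides the $L^1\to L^\infty$ improvement together with enough smoothness for the compactness criterion of Proposition~\ref{lem-op-compact} to apply; its proof relies on the identities in Appendix~\ref{app-FMF}. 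This is the main technical obstacle.

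For part (iii), I would exhibit explicit invariant finite-dimensional subspaces. For each $J\in\Delta(m,\ZZ_+)$, let $P_J(M):=\prod_{j\le k}M_{jk}^{J_{jk}}$, and let $\Vv_J\subset\Hh^{L+1}$ be the $(L+1)$-dimensional span of the vectors $(0,\ldots,P_J\zeta_{A^{(p)}_E},\ldots,0)$ with $P_J\zeta_{A^{(p)}_E}$ in the $p$-th slot, $p=0,\ldots,L$. The key calculation is that multiplication by $\zeta_{E-A}$ followed by $T$ carries $P_J\zeta_{A^{(p+1)}_E}$ to $\theta^{(p)}_{J,E}\,P_J\zeta_{A^{(p)}_E}$, using $\zeta_G\zeta_H=\zeta_{G+H}$, $T\zeta_G=\zeta_{-G^{-1}}$, the identity \eqref{eq-T-parts}, and the recursion \eqref{eq-rec-Greens} for $\Gamma^{(p)}_{E-a_j}$. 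Thus $\Vv_J$ is invariant under $C_E$ with matrix exactly $\blb_{J,E}S^{K,L}$; the factor $K$ arises from the $K$-fold occurrence of $\zeta^{(0)}$ in \eqref{eq-zeta-recursion}. By Lemma~\ref{lem-op-T}(iii), $\bigcup_J\Vv_J$ spans a dense subspace of $\Hh^{L+1}$; combined with compactness from (ii), this yields \eqref{eq-lbj}. For $\Cc_E$ one uses pairs $(J,J')$ via $\Kk=\Hh\otimes\Hh$, obtaining the matrices $\blb_{J,E}\blb_{J',E}^*S^{K,L}$. Finally for part (iv), all these eigenvectors lie in $\PE(m)\subset\Hh_\infty$, so every non-zero eigenvalue of $C_E$ on $\Hh^{L+1}$ is attained on $\Hh_\infty\times\Hh^L$; conversely the iteration argument of (ii) yields compactness of $C_E^{2(L+1)}$ on the smaller space, making its non-zero spectrum pure point, and via the continuous inclusion $\Hh_\infty\times\Hh^L\hookrightarrow\Hh^{L+1}$ each eigenvalue there is also an eigenvalue on $\Hh^{L+1}$. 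The argument for $\Cc_E$ is parallel.
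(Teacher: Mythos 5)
Your overall strategy matches the paper's, but there are two concrete problems.

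\textbf{Part (ii).} You invoke Proposition~\ref{lem-op-compact} but never check its hypothesis. Tracing through $C_E^{2(L+1)}$, the terms that appear are of the form $M_i\,TB\,M_j$ or $M_i\,(TB)^{L+1}\,M_j$, where $B=\Mm(\zeta_{E-A})$. For the second type, Proposition~\ref{lem-op-compact}~(ii) requires the matrix $B=E-A$ to lie in $\Ss(m,L)$, equivalently that the $mj\times mj$ tri-diagonal block matrices $\Aa_1,\ldots,\Aa_L$ of \eqref{eq-def-Aa_j} are all invertible. This holds precisely because $E\in I^{K,L}_A$ forces $E-a_j\notin\Ee_L$ for every eigenvalue $a_j$ of $A$. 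This verification is not cosmetic: the removal of $\Ee_L$ from $I^{K,L}_0$ exists exactly to make this step work, so omitting it leaves the central new difficulty of the paper (compared to \cite{Sad}) unaddressed. Your phrase ``together with enough smoothness for the compactness criterion to apply'' acknowledges the issue without resolving it.

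\textbf{Part (iii).} Your claim that $\Vv_J$ (the $(L+1)$-dimensional span of the vectors with $P_J\zeta_{A^{(p)}_E}$ in the $p$-th slot) is invariant under $C_E$, with matrix exactly $\blb_{J,E}S^{K,L}$, is false. Multiplication by $\zeta_{E-A}$ followed by $T$ does not send $P_J\zeta_{A_E^{(p+1)}}$ to $\theta^{(p)}_{J,E}P_J\zeta_{A_E^{(p)}}$ on the nose; already in the scalar case $m=1$ one computes $T(M^2\zeta_G)=\bigl(G^{-4}M^2+4iG^{-3}M\bigr)\zeta_{-G^{-1}}$, so lower-degree correction terms appear once $\deg P_J\ge 2$. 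The correct structure, reflected in \eqref{eq-C_E-2}, is
\begin{equation}
C_E\bigl(P_J\,\boldsymbol{\ze}_{0,E}\,\vec v\bigr)=P_J\,\boldsymbol{\ze}_{0,E}\,\blb_{J,E}S^{K,L}\vec v+\boldsymbol{\ze}_{0,E}\,p_{J,E}\,\vec v,\notag
\end{equation}
with $p_{J,E}$ a matrix of polynomials of \emph{strictly smaller} degree than $P_J$. Hence the individual $\Vv_J$ are not invariant; only the cumulative filtered subspaces $\VV_c=\mathrm{span}\{f_{J,k}:\|J\|_1\le c\}$ are, and on these $C_E$ is block triangular with diagonal blocks $\blb_{J,E}S^{K,L}$. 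Combined with compactness from (ii) and density of $\bigcup_c\VV_c$, this triangular structure (via \cite[Prop.~A.1]{Sad}) is what yields \eqref{eq-lbj}. Your conclusion for (iv) then goes through, since the eigenvectors realizing each eigenvalue of $\blb_{J,E}S^{K,L}$ can be taken inside some $\VV_c\subset\PE(m)^{L+1}\subset\Hh_\infty^{L+1}$, but the route to it must pass through the filtration, not the single-degree subspaces.

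Part (i) and the remaining reasoning in (iv) are essentially correct and follow the paper's approach, modulo minor imprecision (the decisive step in (i) is that a product of an $\Hh_\infty$ and an $\Hh$ factor lands in $\Hh_1$, so that $T$ maps it back to $\Hh_\infty$; the ``Banach algebra'' framing obscures this).
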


\noindent{\bf Proof.} For the proof we will mostly just consider the function $F$ and operator $C_E$. The corresponding statements for $Q$ and $\Cc_E$ are proved analogously.

(i) The derivative $F_{\vec{f}}$ can be written
as a $(L+1)\times(L+1)$ matrix of operators and we obtain formally
\begin{equation}\label{eq-der-F0}
F_{\vec{f}}
=\pmat{  KTB_{\lb,z} \Mm(f_0^{K-1} f_1) & TB_{\lambda,z}\Mm(f_0^K) & \\
  0 & 0 & TB_{\lambda,z} & \\ \vdots & & & \ddots \\ 0 & & & & TB_{\lb,z} \\ TB_{\lb,z} & 0 & 0 & \cdots & 0 }\,
\end{equation}
where $\Mm(f)$ denotes the multiplication operator that multiplies a function with $f$.
For $f_0, g_0 \in \Hh_\infty$ and $f_1, g_1 \in \Hh$ one finds 
$B_{\lb,z} f_0^{K-1} f_1 g_0,\,B_{\lb,z} f_0^K g_1 \in \Hh_1$ 
and by Lemma~\ref{lem-op-T} $F_{\vec f}$ defines
a bounded linear operator on $\Hh_\infty\times \Hh^L$. Thus, $F$ is Frechet-differentiable.
Similarly, if $f_1, f_2 \in \Hh_\infty$, then $F_{\vec{f}}$ defines also a bounded linear operator on $\Hh^{L+1}$.

To get (ii) note that $\vze_{0,E} \in (\PE(m))^2$,
and $B_{0,E} = \Mm(e^{\frac i2\Tr((E-A)M)})$. 
For simplicity, let us simply write
$B$ for $B_{0,E}$, $M_1$ for $\Mm((\ze_{0,E}^{(0)})^{K-1} \ze_{0,E}^{(1)})$ and $M_2$ for $\Mm((\ze_{0,E}^{(0)})^K)$, then
\begin{align}
 C_E &= KTB\pmat{M_1 & \\ & \nul} \,+\,TB
\pmat{ 0 & M_0 \\  \vdots & & 1 \\ 0 & & & \ddots \\
& & & & 1 \\ 1 & 0 & \cdots & & 0} \,.
\end{align}
Now taking $C_E^{L+1}$, each of the occurring non-zero terms has at least one factor
$M_0$ or $M_1$ in it. Therefore, each term occurring in $C_E^{2(L+1)}$ has at least two
terms $M_j$, $j=0,1$ with some $TB$ in between. Considering the structure of $C_E$, these terms
in between $M_j$'s either come from $(C_E)_{00} (C_E)_{0j}=TBM_1TBM_j$, $j=0,1$, or
$(C_E)_{01} (C_E)_{12} (C_E)_{23}\cdots (C_E)_{L-1,L} (C_E)_{L,0} (C_E)_{0,j}=TB M_1 (TB)^{L+1} M_j$ for $j=0,1$. Hence, each appearing term in $C_E^{2(L+1)}$ includes
either a term $M_i TB M_j$ or $M_i (TB)^{L+1} M_j$.
We claim that we can use Proposition~\ref{lem-op-compact}~(ii) to obtain that these operators are compact. As $B=\Mm(\zeta_{E-A})$ this means that
we need to check that the $L$ matrices $\Aa_1,\ldots,\Aa_L$ are invertible, where
$\Aa_j$ is a $mj \times mj$ matrix with a tri-diagonal block structure of $m\times m$ blocks
given by
\begin{equation}\label{eq-def-Aa_j}
\Aa_j=\pmat{A-E & \one \\ \one & \ddots & \ddots \\ & \ddots & \ddots & \one \\ & & \one & A-E}\;,
\qtx{where} \Aa_1=A-E\,.
\end{equation}
For $E\in I^{K,L}_A$ one finds for each eigenvalue $a_j$ of $A$ that $E-a_j \in I^{K,L}$.
In particular, $E-a_j \not \in \Ee_L$ as defined in \eqref{eq-def-EL} which yields invertibility of $\Aa_1,\ldots,\Aa_L$ 
as can be easily seen when assuming that $A$ is diagonal.
Similar considerations can be done for the operator $\Cc_E$ using Proposition~\ref{lem-op-compact}~(iii).

Part (iii) is the exact same calculation as in \cite{Sad}.
For the reader's convenience let us point out the main ideas.
Define the $(L+1)\times (L+1)$ matrix $\boldsymbol{\ze}_{0,E}=\diag(\ze^{(0)}_{0,E}\,,\ldots,\, \ze^{(L)}_{0,E})$,  let $D\in\Sym(m)$ and $\vec{v}\in\RR^{L+1}$ and start with the identity
\begin{equation}
 C_E\, (\zeta_{tD}\, \boldsymbol{\ze}_{0,E} \vec{v})  = 
 \smat{\ze_{B_0} \\ & \ze_{B_1} \\ & & \ddots \\ & & &\ze_{B_L}} S^{K,L} \vec{v} \label{eq-C_E-0}
\end{equation}
where for small $t$, 
\begin{align} 
B_0 &=-\left( E-A+tD+K A_E^{(0)}+A_E^{(1)} \right)^{-1} &=
A^{(0)}_E + \sum_{k=1}^\infty t^k (A_E^{(0)} D)^k A_E^{(0)}\\
B_p &=-\left(E-A+tD+A^{(p+1)}_E\right)^{-1}&=A^{(p)}_E + \sum_{k=1}^\infty t^k (A_E^{(p)} D)^k A_E^{(p)}\,
\end{align}
with $1\leq p \leq L$ and the convention $A_E^{(L+1)}=A_E^{(0)}$.
Here we use $\ze_A \ze_B = \ze_{A+B}$ and $T\ze_A=\ze_{-A^{-1}}$ as well as the recursion
equations satisfied by the free Green's function for $\lb=0$.
This gives 
\begin{equation}
 C_E\, (\zeta_{tD}\, \boldsymbol{\ze}_{0,E}\vec{v})\,=\,\boldsymbol{\ze}_{0,E} \prod_{k=1}^\infty\diag\left(\big[\zeta_{t^k (A_E^{(p)} D)^k A_E^{(p)}}\big]_{p=0,\ldots,L}\right ) S^{K,L}\vec{v}\,.
\end{equation}

A further expansion of the exponential functions $\zeta_{tD}$ and $\zeta_{t^k\ldots}$ in powers of $t$ and varying the matrix $D$ leads to
\begin{align}
& C_E\,([\Tr(DM)]^k \boldsymbol{\ze}_{0,E}(M) \vec{v})\,= \label{eq-C_E-1}\\
& \qquad \boldsymbol{\ze}_{0,E}(M) \diag\left(\big[(\Tr(A_E^{(p)} D A_E^{(p)}M))^k\big]_{p=0,\ldots, L}\,\right)\,S^{K,L} \vec{v} \,+\,
P_{k,D,E}(M)\,\vec{v}\notag
\end{align}
where $P_{k,D,E}(M)$ is a matrix of polynomials of degree less than $k$.
Mapping the map $P(M)=[\Tr(D M)]^k$ to $\hat P(M)=[\Tr(A^{(p)}_E D A_E^{(p)} M)]^k$ defines a linear map on the set of homogeneous polynomials in entries of $M$ of order $k$. Using
the diagonal structure of $A_E^{(p)}$ one realizes that these linear maps (for all $k$) are also represented by
$P_J(M) \mapsto \theta^{(p)}_{J,E} P_J(M)$ for $J\in\Delta(m,\ZZ_+)$, where $P_J(M)=\prod_{j,k} (M_{jk})^{J_{jk}}$. Thus,
\begin{equation} \label{eq-C_E-2}
C_E (P_J\boldsymbol{\ze}_{0,E}\vec{v})\,=\,P_J\boldsymbol{\ze}_{0,E} \blb_{J,E}\, S\,\vec{v}\,+\,\boldsymbol{\ze}_{0,E}\,p_{J,E}\,\vec{v}
\end{equation}
where $p_{J,E}$ is a matrix of polynomials of degree less than the one of $P_J$.

Let $f_{J,k}=P_J \boldsymbol{\zeta}_{E,0} \vec{e}_k$ with $(\vec{e}_k)_{k=0,\ldots,L}$ being the standard basis in $\RR^{L+1}$. Using the functions $f_{J,k}$ ordered in some way with increasing degree of the polynomial $P_J$, 
the operator $C_E$ can be represented as an infinite block triangular matrix, where the $(L+1)\times(L+1)$ blocks along the diagonal are given by the matrices $\blb_{J,E} S^{K,L}$.
Using the fact that $C_E^{2(L+1)}$ is compact and the density of the span of these functions in $\Hh^{L+1}$, \cite[Proposition A.1]{Sad} immediately implies \eqref{eq-lbj}.

For the operator $\Cc_E$ we start with a similar calculation as \eqref{eq-C_E-0}, replacing $\ze_{tD}$ by  $\zeta_{tD} \otimes \overline\zeta_{tD'} (M_+,M_-) = \ze_{tD}(M_+) \overline\ze_{tD'}(M_-)$\
and $\boldsymbol{\ze}_{0,E}$ by $\boldsymbol{\xi}_{0,E}(M_+,M_-)=\boldsymbol{\ze}_{0,E}(M_+)\overline{\boldsymbol{\ze}}_{0,E}(M_-) $,
then one obtains similar to \eqref{eq-C_E-1}
\begin{align}
 & \Cc_E\,([\Tr(DM_+-\overline{D'}M_-)]^k \boldsymbol{\xi}_{0,E}(M_+,M_-) \vec{v})\,=\, \\
 & \quad \boldsymbol{\xi}_{0,E} \diag\left([(g^{(p)}(M_+,M_-))^k\,]_{p=0,\ldots,L} \right)\,S^{K,L}\,\vec{v}\,+\, \widetilde P_{k,D,E}(M_+,M_-)\,\vec{v}\notag
\end{align}
where $g^{(p)}(M_+,M_-)=\Tr(A_E^{(p)}DA_E^{(p)}M_+ - \overline{A_E^{(p)}D'A_E^{(p)}} M_-)$, 
and $\widetilde P_{k,D,E}(M_+,M_-)$ is a matrix of polynomial of degree less than $k$ in the entries of
$M_+$ and $M_-$. Using $P_J\otimes P_{J'} (M_+,M_-)=P_J(M_+) P_{J'}(M_-)$ this leads to
\begin{align}
 \Cc_E\,\left( P_J\otimes P_{J'}\,\boldsymbol{\xi}_{0,E}\,\vec{v}\right)\,=\,P_J\otimes P_{J'}\,\boldsymbol{\xi}_{0,E}\,\blb_{J,E} \blb_{J',E}^*\,S^{K,L}\,\vec{v}\,+\,
 \boldsymbol{\xi}_{0,E}\,\widetilde{p}_{J,J',E}\,\vec{v}
\end{align}
where $\widetilde p_{J,J',E}(M_+,M_-)$ is a matrix of polynomials of degree less than the one of $P_J(M_+)P_{J'}(M_-)=P_J \otimes P_{J'}(M_+,M_-)$.
Now we follow the same line of arguments as for the spectrum of $C_E$.

For (iv) note that
$\sigma_{\Hh_\infty\times\Hh^L}(C_E)\subset 
 \sigma_{\Hh}(C_E)$ by compactness of
$C_E^{2(L+1)}$ in $\Hh_\infty\times \Hh^L\,\subset\,\Hh^{L+1}$. 
Equality follows as one finds eigenfunctions corresponding to the eigenvalues of $\blb_{J,E} S^{K,L}$ in
$\Hh_\infty^{L+1}$ by considering the finite dimensional subspaces $\VV_c$
spanned by $f_{J,k}$ with $\|J\|_1\leq c$ (where $\|J\|_1=\sum_{j,k} |J_{j,k}|$)
that are left invariant by $C_E$.

\hfill $\Box$

\vspace{.2cm}

The following result will ensure that we can use the Implicit Function Theorem.
\begin{lemma}\label{lem-not-1}
For $E\in I^{K,L}_A$ and any $J,J' \in \Delta(m,\ZZ_+)$ we find
\begin{equation}
\det(\one-\blb_{J,E} \blb_{J',E}^* S^{K,L}) \neq 0.
\end{equation}
This means, the matrices $\blb_{J,E} \blb_{J',E}^* S$ 
do not have an eigenvalue 1.
In particular, noting $\blb_{\nul,E}=\one$, this implies
\beq \label{eq-spectrum-CE}
1 \notin \sigma_{\Hh}(C_E) \qtx{and}
1 \notin \sigma_{\Kk}(\Cc_E)\;.
\eeq
\end{lemma}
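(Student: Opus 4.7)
The plan is to reduce the lemma to an algebraic non-vanishing statement and then combine Proposition~\ref{prop-Gamma} with a case analysis on the total multi-index weight.

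First I would compute $\det(\one - \blb_{J,E}\blb_{J',E}^* S^{K,L})$ explicitly. Writing $d_p := \theta^{(p)}_{J,E}\overline{\theta^{(p)}_{J',E}}$ and exploiting the sparsity of $S^{K,L}$ (nonzero entries only at $(0,0)$, $(0,1)$, the first superdiagonal, and $(L,0)$), the matrix $\one - \blb_{J,E}\blb_{J',E}^*S^{K,L}$ is bidiagonal apart from the $(0,0)$, $(0,1)$ and $(L,0)$ entries. A cofactor expansion along the first column, in which the two relevant minors are triangular with explicit diagonals, should give the clean identity
\[
\det\bigl(\one - \blb_{J,E}\blb_{J',E}^* S^{K,L}\bigr) \;=\; 1 - K d_0 - \prod_{p=0}^L d_p .
\]
Setting $\gamma_j := \Gamma^{(0)}_{E-a_j}$, $\delta_j := \prod_{p=0}^L \Gamma^{(p)}_{E-a_j}$, and the integer weights $n_j(J) := 2J_{jj} + \sum_{k\neq j} J_{\min(j,k),\max(j,k)}$ so that $\theta^{(p)}_{J,E} = \prod_j (\Gamma^{(p)}_{E-a_j})^{n_j(J)}$, and writing $n_j := n_j(J)$, $n'_j := n_j(J')$, the lemma then reduces to
\[
K \prod_j \gamma_j^{n_j}\bar\gamma_j^{n'_j} + \prod_j \delta_j^{n_j}\bar\delta_j^{n'_j} \;\neq\; 1 .
\]

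The crucial input is Proposition~\ref{prop-Gamma} applied to each shifted energy $E-a_j\in I^{K,L}$ (valid by Remark~\ref{rem-I_A}): it yields $K|\gamma_j|^2 + |\delta_j|^2 = 1$, so $(K^{1/2}|\gamma_j|,|\delta_j|)$ is a unit vector in $\RR^2$ for every $j$, and moreover $\im\gamma_j>0$ strictly. I would split into cases on $N := \sum_j (n_j+n'_j)$, a nonnegative even integer since $\sum_j n_j(J) = 2\sum_{j\le k}J_{jk}$. The case $N=0$ is trivial: $J=J'=\nul$ gives the left-hand side equal to $K+1\ge 2$. For $N\ge 4$, combining the triangle inequality with weighted AM-GM and the reparameterization $\zeta_j := K|\gamma_j|^2\in(0,1)$, $N_j := n_j+n'_j$, $\bar\zeta := \sum_j N_j\zeta_j/N\in(0,1)$, I would obtain
\[
\Bigl|K \prod_j \gamma_j^{n_j}\bar\gamma_j^{n'_j} + \prod_j \delta_j^{n_j}\bar\delta_j^{n'_j}\Bigr|
\;\le\; K^{1-N/2}\bar\zeta^{N/2} + (1-\bar\zeta)^{N/2} \;=:\; h(\bar\zeta).
\]
For $N/2>1$, $h$ is strictly convex on $[0,1]$ with $h(0)=1$ and $h(1)=K^{1-N/2}\le 1$, hence $h(\bar\zeta)<1$ on the open interval $(0,1)$, giving the desired strict inequality.

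The delicate case, and the main obstacle, is $N=2$: here $N/2=1$, the bound above degenerates to $\le 1$ with equality attainable, and magnitude arguments alone cannot conclude. The parity of $\sum_j n_j(J)$ forces one of $J,J'$ to vanish and the other to consist of a single nonzero entry, so after complex conjugation if needed the equation reduces to
\[
K\gamma_{j_1}\gamma_{j_2} + \delta_{j_1}\delta_{j_2} = 1 ,
\]
with either $j_1=j_2$ (from $J_{jj}=1$) or $j_1<j_2$ (from $J_{j_1 j_2}=1$). The idea is to extract phase information by taking the real part; using $\re(z)\le|z|$ together with the two-dimensional Cauchy-Schwarz for the two unit vectors $(K^{1/2}|\gamma_{j_i}|,|\delta_{j_i}|)$ gives
\[
1 \;=\; K\re(\gamma_{j_1}\gamma_{j_2}) + \re(\delta_{j_1}\delta_{j_2}) \;\le\; K|\gamma_{j_1}||\gamma_{j_2}| + |\delta_{j_1}||\delta_{j_2}| \;\le\; 1 ,
\]
so equality must hold throughout; in particular $\gamma_{j_1}\gamma_{j_2}$ is real and strictly positive, i.e.\ $\arg\gamma_{j_1}+\arg\gamma_{j_2}\equiv 0\pmod{2\pi}$. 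But each $\arg\gamma_{j_i}\in(0,\pi)$ strictly (since $\im\gamma_{j_i}>0$), so the sum lies in the open interval $(0,2\pi)$, contradicting the congruence. The final consequences $1\notin\sigma_\Hh(C_E)$ and $1\notin\sigma_\Kk(\Cc_E)$ then follow directly from Lemma~\ref{lem-DF}(iii) (specializing $J'=\nul$, for which $\blb_{\nul,E}=\one$) and from \eqref{eq-spectrum-Cc_E}, respectively.
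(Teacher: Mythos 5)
Your proof is correct and follows the same overall strategy as the paper: reduce the determinant to $1 - K\theta^{(0)}_{J,E}\overline{\theta^{(0)}_{J',E}} - \prod_p \theta^{(p)}_{J,E}\overline{\theta^{(p)}_{J',E}}$, invoke Proposition~\ref{prop-Gamma} at each shifted energy $E-a_j$ via Remark~\ref{rem-I_A}, and split on the total degree $N=2(\|J\|_1+\|J'\|_1)$. The two non-trivial cases are handled with somewhat different tactics, though. For $N\ge 4$ the paper factors a single pair $\Gamma^{(p)}_{E-b_1}\Gamma^{(p)}_{E-b_2}$ out of the product, bounds the remaining factor $X^{(p)}$ strictly below $1$ (using $|\Gamma^{(0)}_{E-b}|<1$ and $\prod_p|\Gamma^{(p)}_{E-b}|<1$), and applies Cauchy--Schwarz to the extracted pair; your route via weighted AM--GM plus strict convexity of $t\mapsto K^{1-N/2}t^{N/2}+(1-t)^{N/2}$ achieves the same strict bound and is a valid, if slightly heavier, alternative. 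For $N=2$ the paper's argument is essentially that $\theta^{(0)}_{J,E}$ is a product of two numbers with strictly positive imaginary parts and hence cannot be a positive real, so the reverse triangle inequality is strict; your real-part/Cauchy--Schwarz equality analysis makes that phase obstruction fully explicit and is, if anything, cleaner, with the argument-sum lying in the open interval $(0,2\pi)$ providing the contradiction. Both versions rely on the same two ingredients ($K\ge 1$ and Proposition~\ref{prop-Gamma}), so the content is the same; the deduction of \eqref{eq-spectrum-CE} from \eqref{eq-lbj} and \eqref{eq-spectrum-Cc_E} is also handled correctly.
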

\begin{proof}
For $J,J' \in\Delta(m,\ZZ_+)$ and $E\in I^{K,L}_A$ define
\begin{align}
f(J,J',E)=
\det (\one-\blb_{J,E} \blb_{J',E}^* S^{K,L}) = \notag
1-K \theta^{(0)}_{J,E} \left(\theta^{(0)}_{J',E}\right)^*-
\prod_{p=0}^L \theta^{(p)}_{J,E} \left(\theta^{(p)}_{J',E}\right)^*\;.
\end{align}
For $J=J'=0$ we have $f(\nul,\nul,E)=\det (\one-S^{K,L})=-K \neq 0$.
Now let $\|J\|_1$ denote the norm given by the sum of the absolute values of all entries of $J$.
Next, we consider the case $\|J\|_1+\|J'\|_1 = 1$, i.e. one of these matrices is zero and the other has one entry.
Both cases are completely analogous so let us just consider $J'=0$, $\|J\|_1=1$.
Then by \eqref{eq-def-theta} one has
\begin{equation}\label{eq-J1-1}
 \theta^{(q)}_{J,E}=\Gamma^{(q)}_{E-b_1} \Gamma^{(q)}_{E-b_2}
\end{equation}
for some $b_1, b_2 \in\{a_1,\ldots, a_m\}$.
Using \eqref{eq-gamma-rels} in Proposition~\ref{prop-Gamma} and the Cauchy-Schwarz inequality we find
\begin{align}\label{eq-J1-2}
 K \left|\theta^{(0)}_{J,E}\right|+\prod_{p=0}^L \left|\theta^{(p)}_{J,E}\right| &= 
 K \left|\Gamma^{(0)}_{E-b_1}\right|\left|\Gamma^{(0)}_{E-b_2}\right|+
 \left(\prod_{p=0}^L \left|\Gamma^{(p)}_{E-b_1}\right|\right)
 \left(\prod_{p=0}^L\left|\Gamma^{(p)}_{E-b_2}\right|\right)
 \notag  \\
 &\leq
 \prod_{k=1}^2 \sqrt{K \left|\Gamma^{(0)}_{E-b_k}\right|^2+
 \prod_{p=0}^L \left|\Gamma^{(p)}_{E-b_k}\right|^2} 
 = 1\;.
\end{align}
Since $\theta^{(0)}_{J,E}$ 
is the product of two factors with positive imaginary part, it can not be a positive real number,
hence
$| f(J,\nul,E)|>1-|\theta^{(0)}_{J,E}|-\prod_{p=0}^L|\theta^{(p)}_{J,E}| \geq 0$, 
so $f(J,\nul,E)$ can not be zero in this case. 

Finally, consider $\|J\|_1 + \|J'\|_1 \geq 2$. We may assume without loss of generality that $J\neq \nul$.
Then
\begin{equation}
\theta^{(p)}_{J,E} \left(\theta^{(p)}_{J',E}\right)^* =
\Gamma^{(p)}_{E-b_1} \Gamma^{(p)}_{E-b_2} \cdot\,X^{(p)}
\end{equation}
where $X^{(p)}$ itself is a product of an even number of factors (at least 2) $\Gamma^{(p)}_{E-b}$ or complex conjugates.
By \eqref{eq-gamma-rels} and the fact that for $E\in I^{K,L}_A$ and  $b\in \{a_1,\ldots,a_m\}$ none of the 
imaginary parts of $\Gamma^{(p)}_{E-b}$ can be zero, we find
$|\Gamma^{(0)}_{E-b}|<1$ and $\prod_{p=0}^L|\Gamma^{(p)}_{E-b}|<1$ leading to
$|X^{(0)}|<1$ and $\prod_{p=0}^L|X^{(p)}|< 1$.
Using this and Cauchy-Schwartz as in \eqref{eq-J1-1} we find
\begin{equation}
 K \left|\theta^{(0)}_{J,E} \left(\theta^{(0)}_{J',E}\right)^*\right| + 
 \prod_{p=0}^L\left| \theta^{(p)}_{J,E} \left(\theta^{(p)}_{J',E}\right)^* \right|
 \,<\,1
\end{equation}
which immediately implies $|f(J,J',E)|>0$.
Hence, in any case, $f(J,J',E)$ will not be zero.
\end{proof}


   
\section{Conclusions \label{sec-proofs}}


The most important ingredient for the proof of Theorem~\ref{th:main2} is the following.

\begin{prop}  \label{prop-extend-1}
There exists an open set $U_1\subset \RR^2$ with $\{0\}\times I^{K,L}_A \subset U_1$,
 such that the maps 
\begin{align}
&(\lb, E,\eta) \in U_1 \times (0, \infty) \;\mapsto\;
 \vze_{\lb,E+i\eta} \,\in \Hh_\infty\times\Hh^L  \label{xizeze} \\
&(\lb, E,\eta) \in U_1 \times (0, \infty) \;\mapsto\;
 \vxi_{\lb,E+i\eta} \,\in \Kk_\infty\times \Kk^L  \label{xizexi}
\end{align}
 have  continuous extensions to maps from $ U_1 \times [0, \infty)$ to $\Hh_\infty\times\Hh^L$
 and $\Kk_\infty\times \Kk^L$, respectively,  that
 satisfy \eqref{eq-fpz}. 
\end{prop}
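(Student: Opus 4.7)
The plan is to apply the Implicit Function Theorem (IFT) to the map
$$
\Phi(\lambda, E, \eta, \vec{f}) := \vec{f} - F(\lambda, E, \eta, \vec{f})
$$
on $\Hh_\infty \times \Hh^L$, and analogously to $\Psi(\lambda, E, \eta, \vec{g}) := \vec{g} - Q(\lambda, E, \eta, \vec{g})$ on $\Kk_\infty \times \Kk^L$, at the base points $(0, E_0, 0, \vze_{0,E_0})$ and $(0, E_0, 0, \vxi_{0,E_0})$ for each $E_0 \in I^{K,L}_A$. Proposition~\ref{zeta}(iv) supplies the base-point identities $\Phi = 0$, $\Psi = 0$. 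Lemma~\ref{lem-DF}(i) gives $C^1$-regularity of $\Phi$ in $\vec{f}$ with partial derivative $I - C_{E_0}$ at the base point. Invertibility of $I - C_{E_0}$ on $\Hh_\infty \times \Hh^L$ is then verified via three ingredients: Lemma~\ref{lem-DF}(ii) makes $C_{E_0}$ a Riesz operator (so $I - C_{E_0}$ is Fredholm of index zero), Lemma~\ref{lem-DF}(iv) identifies its spectrum on $\Hh_\infty \times \Hh^L$ with that on $\Hh^{L+1}$, and Lemma~\ref{lem-not-1} ensures $1 \notin \sigma_\Hh(C_{E_0})$. The same chain of facts yields invertibility of $I - \Cc_{E_0}$ on $\Kk_\infty \times \Kk^L$.

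The IFT then produces, for each $E_0$, a product neighborhood $W_{E_0} \times [0, \delta_{E_0}) \subset \RR^2 \times [0, \infty)$ of $(0, E_0, 0)$, a radius $\varepsilon_{E_0} > 0$, and a unique continuous map $\vec{f}_{E_0}\colon W_{E_0} \times [0, \delta_{E_0}) \to \Hh_\infty \times \Hh^L$ with values in the closed ball of radius $\varepsilon_{E_0}$ around $\vze_{0,E_0}$, solving $\Phi = 0$ with $\vec{f}_{E_0}(0, E_0, 0) = \vze_{0,E_0}$; an analogous $\vec{g}_{E_0}$ is obtained on the $\Kk$-side.

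The main obstacle is to identify $\vec{f}_{E_0}(\lambda, E, \eta)$ with $\vze_{\lambda, E+i\eta}$ for $\eta > 0$, which is what turns $\vec{f}_{E_0}$ into a genuine continuous extension of \eqref{xizeze}. I would proceed in two steps. First, on the slice $\lambda = 0$: the explicit formulas \eqref{eq-ze0}--\eqref{eq-ze00}, together with the continuous extension of $\Gamma^{(p)}_z$ to $I^{K,L}_0$ given by Proposition~\ref{prop-spectrum}(i), yield joint continuity of $(E, \eta) \mapsto \vze_{0, E+i\eta}$ on a neighborhood of $(E_0, 0)$ in $\RR \times [0, \infty)$, strengthening the pointwise statement in Proposition~\ref{zeta}(iii). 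Hence, after shrinking $W_{E_0}$ and $\delta_{E_0}$, the function $\vze_{0, E+i\eta}$ lies in the IFT ball and local uniqueness forces $\vec{f}_{E_0}(0, E, \eta) = \vze_{0, E+i\eta}$ for $\eta > 0$. Second, fix $(E, \eta)$ with $\eta > 0$ small: the continuous path $\lambda \mapsto \vze_{\lambda, E+i\eta}$ from Proposition~\ref{zeta}(ii) solves $\Phi = 0$ by Proposition~\ref{zeta}(iv), and at $\lambda = 0$ sits in the IFT ball; a connectedness argument in $\lambda$ shows it remains in the ball over a possibly smaller $W_{E_0}$, where local uniqueness identifies it with $\vec{f}_{E_0}$.

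Finally, set $U_1 := \bigcup_{E_0 \in I^{K,L}_A} W_{E_0}$, and for $(\lambda, E) \in U_1$ define the extension at $\eta = 0$ by $\vec{f}_{E_0}(\lambda, E, 0)$, choosing any $E_0$ with $(\lambda, E) \in W_{E_0}$. This is independent of the choice of $E_0$ because the various $\vec{f}_{E_0}$ agree with $\vze_{\lambda, E+i\eta}$ for small $\eta > 0$ and hence coincide at $\eta = 0$ by continuity. Pieced together with the continuous $(\lambda, E, \eta) \mapsto \vze_{\lambda, E+i\eta}$ on $U_1 \times (0, \infty)$ from Proposition~\ref{zeta}(ii), this yields the desired continuous extension on $U_1 \times [0, \infty)$; the fixed-point equation \eqref{eq-fpz} persists at $\eta = 0$ by continuity of $F$ (Proposition~\ref{zeta}(i)). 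The argument for $\vxi$ is identical, with $Q$, $\Cc_E$, and $\Kk_\infty \times \Kk^L$ replacing $F$, $C_E$, and $\Hh_\infty \times \Hh^L$.
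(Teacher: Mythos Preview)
Your proof is correct and follows essentially the same approach as the paper: apply the Implicit Function Theorem to $\vec f - F$ (resp.\ $\vec g - Q$) at the free base points, using Lemma~\ref{lem-DF} and Lemma~\ref{lem-not-1} to verify invertibility of $I-C_E$ (resp.\ $I-\Cc_E$), and then identify the implicit solution with $\vze_{\lambda,E+i\eta}$ (resp.\ $\vxi_{\lambda,E+i\eta}$) via the uniqueness clause and the continuity statements of Proposition~\ref{zeta}. The paper compresses the identification step into a single sentence, whereas you spell out the two-stage ($\lambda=0$ first, then connectedness in $\lambda$) argument; both amount to the same thing.
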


\begin{proof}
By Lemma~\ref{lem-DF} and Lemma~\ref{lem-not-1} we can use the Implicit Function Theorem on Banach Spaces 
as stated in \cite[Appendix~B]{Kl6} for the functions
$\hat F(\lb,E,\eta,\vec f)= F(\lb, E,\eta,\vec f)-\vec f$ and
$\hat Q(\lb,E,\eta,\vec g)= Q(\lb,E,\eta,\vec g)-\vec g$ at the points $(0,E,0,\vze_{0,E})$
and $(0,E,0,\vxi_{0,E})$ with $E\in I^{K,L}_A$.
Uniqueness of the continuous implicit function and the continuity properties of
$\vze_{\lb,E+i\eta} $ and $\vxi_{\lb,E+i\eta}$ as stated in
Proposition~\ref{zeta} give the continuous extensions.
\end{proof}

For $\eta=\im(z)>0$ let us define general averaged quantities $\ze_{\lb,z}^{(x|y)}=\E\, \ze_{G_{\lb,z}^{(x|y)}}$ and 
$\xi_{\lb,z}^{(x|y)}=\E\, \xi_{G_{\lb,z}^{(x|y)}}$ where as in Section~\ref{sec-rec} the upper index $(x|y)$ for $x,y \in \TT^{K,L}_p$ indicates that we consider the Green's function at $x$
for the operator $H_\lb^{(x|y)}$ which is the restriction of $H_\lb$ to the subtree  $(\TT^{K,L}_p)^{(x|y)}$ that is obtained by removing the branch at $x$ going through $y$. 
If $x$ is a child or descendant of $y$ then $\ze_{\lb,z}^{(x|y)}=\ze_{\lb,z}^{(l(x))}$ and $\xi_{\lb,z}^{(x|y)}=\xi_{\lb,z}^{(l(x))}$, where $l(x)$ is the label of $x$.
But if $y$ is a descendant of $x$ then we get different quantities.
In the following arguments it will often be used implicitly that $B_{\lb,z}$ and
$\Bb_{\lb,z}$ is a strongly continuous family of operators on any space $\Ll^r, \,\Hh_r$ and
$\widehat \Ll^r,\,\Kk_r$, respectively, for $r\in[1,\infty)$ which follows from the Leibniz rule \eqref{eq-Leibn0}, boundedness and Dominated Convergence.

\begin{prop}\label{prop-extend-2}
 There is an open set $U_2\subset \RR^2$, $\{0\}\times I^{K,L}_A\subset U_2$, such that for all 
 $p\in\{1,\ldots,L\}$, $x\in\TT^{K,L}_p$ with $d(0^{(p)},x)\leq L-p$ and $y$ being the unique child of $x$, one has that the maps
 \begin{equation}\label{eq-pointw-conv}
  (\lambda,E,\eta,M) \mapsto \zeta_{\lb,E+i\eta}^{(x|y)}(M) \qtx{and} 
  (\lambda,E,\eta,M_+,M_-) \mapsto \xi_{\lb,E+i\eta}^{(x|y)}(M_+,M_-)
 \end{equation}
 extend continuously as maps from $U_2\times[0,\infty)\times \Sym^+(m)$ and
 $U_2\times[0,\infty)\times (\Sym^+(m))^2$ to $\CC$, respectively.
 Moreover, $\hn \ze_{\lb,z}^{(x|y)} \hn_\infty$ and
 $\hnn\xi_{\lb,z}^{(x|y)}\hnn_\infty$ are uniformly bounded 
 on compact subsets of $U\times[0,\infty)$.
\end{prop}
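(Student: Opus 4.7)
The plan rests on a geometric observation. For $p\in\{1,\ldots,L\}$ and $x\in\TT^{K,L}_p$ with $d(0^{(p)},x)\leq L-p$, the label of $x$ lies in $\{p,p+1,\ldots,L\}$ and is in particular nonzero, so $x$ has a unique child $y$. The branch at $x$ through $y$ is therefore the entire infinite descendant subtree of $x$, and the surviving subtree $(\TT^{K,L}_p)^{(x|y)}$ reduces to the finite path $\{0^{(p)},\ldots,x\}$ of length $d+1\leq L$. Hence $H^{(x|y)}_\lambda$ is a \emph{finite} symmetric matrix acting on $\CC^{m(d+1)}$, and $G^{(x|y)}_{\lambda,z}$ is the $x$-diagonal $m\times m$ block of its resolvent, a rational function of $\lambda$, $z$, and the $d+1$ independent potentials along the segment.

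For $\lambda=0$ this operator decomposes as $\Delta_{d+1}\otimes\one+\one\otimes A$, with spectrum $\{e+a_j:e\in\sigma(\Delta_{d+1}),\,j=1,\ldots,m\}$. Since $d+1\leq L$ we have $\sigma(\Delta_{d+1})\subset\Ee_L$, and for every $E\in I^{K,L}_A$ we have $E-a_j\in I^{K,L}=I^{K,L}_0\setminus\Ee_L$, so $E\notin\sigma(H^{(x|y)}_0)$. Because $\mathrm{supp}\,\nu$ is compact, standard finite-dimensional perturbation theory (a Neumann series around $(H^{(x|y)}_0-(E+i\eta))^{-1}$) yields a neighborhood $U_2$ of $\{0\}\times I^{K,L}_A$ in $\RR^2$ and an $\eta_0>0$ such that $E+i\eta$ remains in the resolvent set of $H^{(x|y)}_\lambda$ with a resolvent bound uniform over all $(\lambda,E)\in U_2$, $\eta\in[0,\eta_0]$, and every deterministic choice of potentials in the compact support of $\nu^{\otimes(d+1)}$. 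Combined with the trivial bound $\|(H^{(x|y)}_\lambda-(E+i\eta))^{-1}\|\leq 1/\eta$ for $\eta\geq\eta_0$, the matrix Green's function $G^{(x|y)}_{\lambda,E+i\eta}$ is then bounded and jointly continuous in $(\lambda,E,\eta,V)$ on every compact subset of $U_2\times[0,\infty)\times\mathrm{supp}\,\nu^{\otimes(d+1)}$, with $\im G^{(x|y)}_{\lambda,E+i\eta}\geq 0$.

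Both assertions of the proposition now follow from dominated convergence. Pointwise continuity of $\zeta^{(x|y)}_{\lambda,E+i\eta}(M)=\E\exp\bigl(\tfrac{i}{2}\Tr(G^{(x|y)}_{\lambda,E+i\eta}M)\bigr)$, and of the analogous $\xi^{(x|y)}_{\lambda,E+i\eta}$, follows from the uniform bounds $|\zeta_G(M)|\leq 1$, $|\xi_G(M_+,M_-)|\leq 1$ valid whenever $\im G\geq 0$ and $M,M_\pm\in\Sym^+(m)$. For the norm bound, a direct computation yields
\begin{equation*}
D_{\bar\aaa,\aaa}\,\zeta_G(M)=P_{\bar\aaa,\aaa}(G)\,\zeta_G(M),
\end{equation*}
where $P_{\bar\aaa,\aaa}(G)$ is a fixed polynomial of degree $|\aaa|$ in the entries of $G$, independent of $M$. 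Taking the modulus, using $|\zeta_G|\leq 1$, passing the expectation inside, and summing the finite collection of indices in $\Pp^n$ gives a uniform bound on $\hn\zeta^{(x|y)}_{\lambda,E+i\eta}\hn_\infty$ on compact subsets of $U_2\times[0,\infty)$. Writing $\xi_G(M_+,M_-)=\zeta_G(M_+)\overline{\zeta_G(M_-)}$ and applying $D^{(+)}_{\bar\aaa,\aaa}D^{(-)}_{\bar\bbb,\bbb}$ factor by factor yields the corresponding bound on $\hnn\xi^{(x|y)}_{\lambda,E+i\eta}\hnn_\infty$.

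The only substantive step is the structural reduction in the first paragraph: once one recognizes that $d(0^{(p)},x)\leq L-p$ forces $(\TT^{K,L}_p)^{(x|y)}$ to be a finite segment of at most $L$ vertices, and that the exclusion of $\Ee_L$ from $I^{K,L}$ was tailored precisely to keep $E-a_j$ off the spectrum of every such segment, the problem collapses to finite-dimensional resolvent continuity combined with dominated convergence.
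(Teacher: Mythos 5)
Your proof is correct and follows essentially the same route as the paper: identify $(\TT^{K,L}_p)^{(x|y)}$ as a line of at most $L$ vertices, use the removal of $\Ee_L$ from $I^{K,L}$ together with assumption (V) to get uniform finite-dimensional resolvent bounds, apply dominated convergence for pointwise continuity, and bound the $\hn\cdot\hn_\infty$ and $\hnn\cdot\hnn_\infty$ norms via the fact that $D_{\bar\aaa,\aaa}\zeta_G$ equals products of minors of $G$ times $\zeta_G$. You spell out the Neumann series argument and the minor computation more explicitly than the paper does, but the underlying ideas and structure coincide.
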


\begin{proof}
 First note that $d(0^{(p)},x)\leq L-p$ means that $x$ is in the starting line segment of $\TT^{K,L}_p$ and hence $(\TT^{K,L}_p)^{(x|y)}$ is a finite line with $d(0^{(p)},x)$ edges and
 $j=d(0^{(p)},x)+1\leq L$ vertices. In fact, $H^{(x|y)}_{0}-E$ is given by the matrix $\Aa_j$
 as in \eqref{eq-def-Aa_j} and using $E\in I^{K,L}_A$ which implies 
 $E-a_j\not\in\Ee_L$ for any eigenvalue of $a_j$ one obtains that
 $\zeta_{0,E}^{(x|y)}$ and $\xi_{0,E}^{(x|y)}$ exist.
 Using assumption (V), the boundedness of the distribution of $V(x)$, one obtains
 existence of $\zeta_{\lb,E}^{(x|y)}$ and $\xi_{\lb,E}^{(x|y)}$ 
 for $(\lb,E)$ in an open neighborhood of $\{0\}\times I^{K,L}_A$.
  Point wise continuity of the maps follows immediately.
 Note that the infinity norm of $\ze_G$ and $\xi_G$ are bounded by $1$ and the derivatives appearing in the $\hn \cdot \hn_\infty$ and $\hnn \cdot \hnn_\infty$ norms lead to
 multiplication by determinants of minors of $G$. Therefore, using assumption (V) again
 we obtain the uniform bounds of the $\hn\cdot\hn_\infty$ and $\hnn\cdot\hnn_\infty$ norm on compact subsets of $U\times[0,\infty)$.
\end{proof}

\begin{prop}\label{prop-xize}
Let $U=U_1\cap U_2$ with $U_1$ and $U_2$ as in Propositions~\ref{prop-extend-1} and \ref{prop-extend-2}. Clearly, $U\subset\RR^2$ is open and $\{0\} \times I^{K,L}_A \subset U$.
For all $x\in\TT^{K,L}_0$ and all $x\in\TT^{K,L}_p$ with $d(x,0^{(p)})>L-p$
and all children $y$ of $x$ there is $r(x,y)\in\{2,\infty\}$ such that the maps
\begin{align}
 & (\lb,E,\eta)\in\RR\times\RR\times(0,\infty)\;\mapsto\;\ze^{(x|y)}_{\lb,E+i\eta} \in \Ll^{r(x,y)} \;,
\label{eq-cext1}\\
& (\lb,E,\eta)\in\RR\times\RR\times(0,\infty)\;\mapsto\;\xi^{(x|y)}_{\lb,E+i\eta} \in 
\widehat{\Ll}^{r(x,y)}\;, \label{eq-cext2}
\end{align}
have continuous extensions to maps from $(\lb,E,\eta)\in U\times[0,\infty)$ to
$\Ll^{r(x,y)}$ and $\widehat\Ll^{r(x,y)}$, respectively.
Moreover, for $l(x)=0$ and $l(y)=1$ as well as for $l(x)\neq 0$
we have $r(x,y)=2$ and hence $\Ll^{r(x,y)}=\Ll^2=\Hh$.
For $l(x)=0=l(y)$ both, $r(x,y)=2$ and $r(x,y)=\infty$ are possible.
Here, $\Ll^r$ and $\widehat \Ll^r$ denote the spaces as defined in Definition~\ref{def-spaces}.
\end{prop}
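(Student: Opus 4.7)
The plan is an induction on $d(x, 0^{(p)})$, where $p$ is chosen with $x \in \TT^{K,L}_p$. The engine is the backward resolvent recursion
\[
\ze^{(x|y)}_{\lb,z} \;=\; TB_{\lb,z}\!\left(\prod_{y' \in \Nn(x)\setminus\{y\}} \ze^{(y'|x)}_{\lb,z}\right),
\]
and its analogue for $\xi^{(x|y)}$. Each factor on the right is of one of two types: for $y'$ a child of $x$ other than $y$ one has $\ze^{(y'|x)} = \ze^{(l(y'))}_{\lb,z}$, a forward Green's function already extended continuously by Proposition~\ref{prop-extend-1} in $\Hh_\infty$ (if $l(y')=0$) or in $\Hh$ (if $l(y') \geq 1$); the only genuinely new ingredient is the backward factor $\ze^{(w|x)}_{\lb,z}$ at the parent $w$ of $x$. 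In the base case this parent factor is either absent (when $x = 0^{(0)}$) or the parent sits in the starting line segment of $\TT^{K,L}_p$ and is handled by Proposition~\ref{prop-extend-2}; in the inductive step, $w$ is either again in the line segment (Proposition~\ref{prop-extend-2}) or strictly closer to $0^{(p)}$ in the branching region (inductive hypothesis). Either way the extension of $\ze^{(w|x)}_{\lb,z}$ needed to feed the recursion is in hand with uniform control.

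The value $r(x,y)$ is read off the product. If $l(x) \neq 0$ then $|\Nn(x)|=2$ and only the parent factor survives; since $T$ is unitary on $\Hh$ and $B_{\lb,z}$ is bounded there, the output is in $\Hh = \Ll^2$, so $r(x,y)=2$. If $l(x)=0$ and $l(y)=1$, the product consists of $K$ copies of $\ze^{(0)} \in \Hh_\infty$ together with the parent factor; combining via the Leibniz rule \eqref{eq-Leibn0} and H\"older, with the $\hn\cdot\hn_\infty$-bounds on $\ze^{(0)}$ absorbing all but one derivative, one places the product in $\Hh$ and sets $r(x,y)=2$. If $l(x)=l(y)=0$, the unique label-$1$ child of $x$ contributes an additional factor $\ze^{(1)} \in \Hh$, producing at least two genuine $\Hh$-factors in the product (the second possibly coming from the parent if it is in $\Hh$); a Leibniz--H\"older estimate then delivers the product in $\Ll^1$, and $T\colon \Ll^1 \to \Ll^\infty$ of Lemma~\ref{lem-op-T} yields $r(x,y)=\infty$. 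In the degenerate subcase $x=0^{(0)}$ with $K=1$ only one $\Hh$-factor is available and we settle for $r(x,y)=2$, which accounts for the ``both possible'' clause.

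Continuity of the extension at $\eta=0$ follows from strong continuity of $B_{\lb,E+i\eta}$ in $(\lb,E,\eta)$ as a multiplication operator on the relevant $\Ll^r$ spaces (by dominated convergence and assumption~(V), using compact support of $\nu$ for uniform domination), boundedness of $T$ between those spaces (Lemma~\ref{lem-op-T}), and continuity of the inputs from Proposition~\ref{prop-extend-1}, Proposition~\ref{prop-extend-2}, or the inductive hypothesis. The $\xi$-statement follows by the same induction with $Q$, $\Tt$, $\cB_{\lb,z}$, $\Kk$, $\widehat\Ll^r$ replacing $F$, $T$, $B_{\lb,z}$, $\Hh$, $\Ll^r$. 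The main obstacle I anticipate is the Leibniz-rule bookkeeping for products of factors in mismatched spaces --- some in $\Hh_\infty$, the parent potentially only in $\Hh$ via the inductive hypothesis at $r=2$ --- together with verifying that all estimates are uniform on compact subsets of $U \times [0,\infty)$, so that continuous dependence really passes through multiplication and $TB_{\lb,z}$ into the claimed space.
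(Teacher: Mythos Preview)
Your approach is essentially the paper's: induction on the distance from the root, driven by the backward recursion and fed by Propositions~\ref{prop-extend-1} and \ref{prop-extend-2}. One point needs correction. In the case $l(x)=l(y)=0$ the product on the right of the recursion is $(\ze^{(0)}_{\lb,z})^{K-1}\,\ze^{(1)}_{\lb,z}\,\ze^{(x_0|x)}_{\lb,z}$, and the only guaranteed $\Ll^2$-factor among the forward pieces is $\ze^{(1)}_{\lb,z}$. Whether the full product lands in $\Ll^1$ or merely in $\Ll^2$ therefore depends on the inductive value $r(x_0,x)$ of the parent factor: if $r(x_0,x)=2$ the product is in $\Ll^1$ and one takes $r(x,y)=\infty$, while if $r(x_0,x)=\infty$ the product is only in $\Ll^2$ and one must take $r(x,y)=2$. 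This alternation along chains of label-$0$ vertices is the actual mechanism behind the ``both $r=2$ and $r=\infty$ are possible'' clause, not the degenerate root subcase $x=0^{(0)}$ with $K=1$ that you cite (there the parent factor is simply absent and one gets $r=2$). Your parenthetical ``the second possibly coming from the parent if it is in $\Hh$'' shows you sensed the issue, but you then asserted $r(x,y)=\infty$ unconditionally, which fails precisely when the parent is only in $\Ll^\infty$. Once you split this case into the two subcases according to $r(x_0,x)$, the induction closes exactly as in the paper.
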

For such continuous extensions of maps from $(\lb,E+i\eta)$  that extend as functions from 
$U\times[0,\infty)$ to $\Ll^r$ we will use the notion that such a family of functions extends continuously in $\Ll^r$.
\begin{proof}
All arguments will implicitly use some specific version of the Green's matrix recursion \eqref{eq-G-rec1} in the form as in \eqref{eq-zeta-recursion}.
We will also implicitly use Proposition~\ref{prop-extend-1} and H\"older's inequalities.

Note that $\Hh_\infty\subset\Ll^2 \cap \Ll^\infty$, thus $\ze_{\lb,z}^{(0)}$ extends continuously in $\Ll^2$ and $\Ll^\infty$, where $\ze_{\lb,z}^{(p)}$ extends continuously in $\Ll^2=\Hh$.
The proof will be done by induction over the distance from the root.
For the start on $\TT^{K,L}_0$ we have to consider the root $0^{(0)}$ and on
$\TT^{K,L}_p$ we have to consider the vertex $1_0^{(p)}$ as defined in Theorem~\ref{th:main} which is the closest vertex to the root of label $0$ and characterized by
$d(0^{(p)},1_0^{(p)})=L+1-p$.
Let $y$ be a child of $0^{(0)}$, then using the general recursion relation
\eqref{eq-G-rec1} in the form as \eqref{eq-G-rec3} and taking expectations leads to
$$
\ze^{(0^{(0)}|y)}_{\lb,z}=TB_{\lb,z} \big( (\ze^{(0)}_{\lb,z})^{K-1} \ze^{(l(y))}_{\lb,z} \big)\,
$$
which by Proposition~\ref{prop-extend-1} gives the continuous extension in $\Hh=\Ll^{2}$ 
(even $\Hh_\infty$ if $K\geq 2$), thus $r(0^{(0)},y)=2$.
Similar, letting $x$ be the parent of $1_0^{(p)}$ and $y$ a child, then
$$
\ze^{(1_0^{(p)}|y)}= T B_{\lb,z}\big( \ze_{\lb,z}^{(x|1_0^{(p)})}  
(\ze_{\lb,z}^{(0)})^{K-1} \ze_{\lb,z}^{(l(y))} \big)
$$
Using Propositions~\ref{prop-extend-1} and \ref{prop-extend-2} and Dominated Convergence one obtains that the product after the operators $T B_{\lb,z}$ on the right hand side extend continuously in $\Ll^2$ and hence the whole term does too. 
In particular, $r(1_0^{(p)},y)=2$.

For the induction step, let $x$ be a descendant of $0^{(0)}$
or $1_0^{(p)}$ for some $p\in\{1,\ldots,L\}$.
Let $x_0$ be the parent and $y$ some child of $x$. We have several cases:\\
Case 1: $l(x) = q \neq 1$, then $l(x_0)=q-1$ and $l(y)=q+1$ or $l(y)=0$ if $q=L$.
We have by induction assumption that $r(x_0,x)=2$ and so $\ze^{(x_0|x)}_{\lb,z}$ extends continuously in
$\Ll^2$. Hence, $\ze_{\lb,z}^{(x|y)}=TB_{\lb,z} \ze_{\lb,z}^{(x_0|x)}$ does as well
and $r(x,y)=2$.\\
Case 2: $l(x)=0$ and $l(y)=1$, then $\ze_{\lb,z}^{(x|y)}=TB_{\lb,z} \big( 
(\ze^{(0)}_{\lb,z})^K \ze^{(x_0|x)}_{\lb,z} \big)$.
By induction assumption $\ze_{\lb,z}^{(x_0|x)}$ either
extends in $\Ll^2$ or $\Ll^\infty$. As $K\geq 1$ we get an extension in $\Ll^2$ in either case, so $r(x,y)=2$.\\
Case 3: $l(x)=0$ and $l(y)=0$, then
$\ze_{\lb,z}^{(x|y)} = TB_{\lb,z} \big( (\ze_{\lb,z}^{(0)})^{K-1} 
\ze_{\lb,z}^{(1)} \ze_{\lb,z}^{(x_0|x)}\big)$.
If $\ze_{\lb,z}^{(x_0|x)}$ extends continuously in $\Ll^2$, then the product after $TB_{\lb,z}$ extends continuously in $\Ll^1$ and hence $\ze_{\lb,z}^{(x|y)}$ extends continuously in $\Ll^\infty$. If $\ze_{\lb,z}^{(x_0|x)}$ extends continuously in $\Ll^\infty$ then we
obtain a continuous extension of $\ze_{\lb,z}^{(x|y)}$ in $\Ll^2$.

All arguments for the functions $\xi_{\lb,z}^{(x|y)}$ are completely analogue.
\end{proof}

\begin{proof}[Proof of Theorem~\ref{th:main2}]
Using \eqref{eq-G-inv}, the recursion relation \eqref{eq-G-rec1} and $T^2={\rm id}$
one obtains
\begin{align}
\E\big(G^{[x]}_{\lb}\, (z)\big) &= -i \int \mathbf{D}\,T\,\E\, \ze_{G^{[x]}_\lb(z)}(\bvp\bvp^\top)\,d^{2mn}\bvp \notag \\
\label{eq-EG-x-D}
 &= -i\int {\mathbf D}B_{\lb,z} \prod_{y:d(x,y)=1} \ze^{(y|x)}_{\lb,z}(\bvp\bvp^\top)\,
d^{2mn} \bvp\;
\end{align}
and similarly, based on \eqref{eq-|G|} one obtains
\begin{align}
\E\left(\left| G^{[x]}_{\lb}\, (z)\right|^2\right) &= 
\int {\mathbf D}^{(-)} {\mathbf D}^{(+)} \Bb_{\lb,z} \!\!\!\prod_{y:d(x,y)=1}\!\!\! \xi^{(y|x)}_{\lb,z}(\bvp_+\bvp_+^\top, \bvp_-\bvp_-^\top)
\,d^{2mn} \bvp_+\,d^{2mn} \bvp_- \;, \label{eq-EGG-x-D}
\end{align}
where ${\mathbf D}$ is defined by \eqref{eq-def-bfD},
${\mathbf D}^{(\pm)}$ represent the matrix-operator ${\mathbf D}$ acting with respect to $M_\pm=\bvp_\pm\bvp_\pm^\top$ and
${\mathbf D}^{(-)}{\mathbf D}^{(+)}$ has to be understood as a matrix product.

Using Propositions~\ref{prop-extend-1} and \ref{prop-xize} one obtains for $x\in \TT^{K,L}_0$ or
$d(x,0^{(p)})>L-p$ that the products of the $\ze$'s on the right hand side of \eqref{eq-EG-x-D} 
have 2 factors that extend continuously in $\Ll^2$ and if $l(x)=0$ some additional bunch of factors that extend continuously in $\Ll^\infty$. Therefore, the product extends continuously in $\Ll^1$. Hence, when applying $\mathbf{D}$, each entry of the matrix
extends continuously in $L^1(d^{2mn}\bvp)$.
Therefore, the map $(\lb,E,\eta) \mapsto \E(G^{[x]}_{\lb}(z))\in \Sym_\CC(m)$ extends continuously to a map from $U\times[0,\infty)$ to $\Sym_\CC(m)$ with $U$ as in Proposition~\ref{prop-xize}.
By similar arguments the same is true for $(\lb,E,\eta) \mapsto \E(|G^{[x]}_{\lb}(z)|^2)\in
\Sym^+(m)$. This proves Theorem~\ref{th:main2}.
\end{proof}

\appendix

\section{An identity for the unperturbed Green's functions on trees of finite cone type\label{app-gamma}}

Recall that associated to an $s\times s$ substitution matrix $S\in{\rm Mat}(s,\ZZ_+)$ with non-negative integer entries are
the following $s$ rooted trees of finite cone type, denoted by $\TT_r, r=1,\ldots,s$.
Each vertex has a label, the root of the tree $\TT_r$ has label $r$, any vertex of label $p$ has $S_{pq}$ children of label $q$.
Denoting by $\Delta$ the adjacency operator on the forest $\bigcup_r \TT^{(r)}$ and by $0^{(r)}\in\TT_r$ the root of the
tree $\TT_r$, we define for $\im(z)>0$ the Green's functions
$$
\Gamma^{(r)}_{z} := \langle 0^{(r)}\,|\, (\Delta-z)^{-1}\,|\,0^{(r)} \rangle\;.
$$
We define the set
$$
\Sigma=\{E\in\RR\,:\,\Gamma^{(r)}_E:=\lim_{\eta\downarrow 0} \Gamma^{(r)}_{E+i\eta}\,\text{exists for all $r$ and}
\im(\Gamma^{(q)}_E) > 0\; \text{for some $q$}\}
$$

\begin{prop}\label{prop-gr}
Let
$\Gamma_E=\diag(\Gamma_E^{(1)},\ldots,\Gamma_E^{(s)})$ denote the diagonal $s\times s$ matrix with the
Green's functions along the diagonal.
Then one has for $E\in\Sigma$
\begin{equation}\label{prop-gr-identity}
 \det (\one-|\Gamma_E|^2 S ) \,=\, 0\;.
\end{equation}
\end{prop}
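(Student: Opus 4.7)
The plan is to exploit the standard forward Green's-function recursion on a tree of finite cone type and extract an identity from comparing the recursion with its complex conjugate at real energies $E\in\Sigma$.

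First I would recall that, by the resolvent identity applied at the root of $\TT_p$ (cf.\ \eqref{eq-G-rec1} in the scalar case $m=1$, $A=0$, $V=0$), one has for $\im z>0$
\begin{equation}\label{eq-plan-rec}
\Gamma^{(p)}_z\,=\,\frac{-1}{z+\sum_{q=1}^s S_{pq}\Gamma^{(q)}_z}\,,
\end{equation}
where we used the fact that cutting a branch at a child of label $q$ of the root of $\TT_p$ yields a tree isomorphic to $\TT_q$, and that the number of such children is $S_{pq}$. The definition of $\Sigma$ says that the limits $\eta\downarrow 0$ exist for all $r$, hence \eqref{eq-plan-rec} passes to the limit and holds for $z=E\in\Sigma$. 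Moreover $\Gamma^{(p)}_E\neq 0$ for every $p$, since otherwise the left-hand side of $\Gamma^{(p)}_E(E+\sum_q S_{pq}\Gamma^{(q)}_E)=-1$ would vanish.

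Next I would rewrite \eqref{eq-plan-rec} as $z+\sum_q S_{pq}\Gamma^{(q)}_z=-1/\Gamma^{(p)}_z$, take its complex conjugate at $z=E\in\RR$, and subtract the two identities. This gives
\begin{equation}
\sum_{q=1}^s S_{pq}\bigl(\Gamma^{(q)}_E-\overline{\Gamma^{(q)}_E}\bigr)\,=\,
\frac{-1}{\Gamma^{(p)}_E}+\frac{1}{\overline{\Gamma^{(p)}_E}}\,=\,\frac{\Gamma^{(p)}_E-\overline{\Gamma^{(p)}_E}}{|\Gamma^{(p)}_E|^2}\,,
\end{equation}
which, dividing by $2i$, becomes
\begin{equation}\label{eq-plan-imag}
\im\bigl(\Gamma^{(p)}_E\bigr)\,=\,|\Gamma^{(p)}_E|^2\,\sum_{q=1}^s S_{pq}\,\im\bigl(\Gamma^{(q)}_E\bigr)\qquad (p=1,\ldots,s).
\end{equation}

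Setting $v=(\im\Gamma^{(1)}_E,\ldots,\im\Gamma^{(s)}_E)^\top\in\RR^s$, the system \eqref{eq-plan-imag} reads $v=|\Gamma_E|^2 Sv$, i.e.\ $(\one-|\Gamma_E|^2 S)v=0$. By the very definition of $\Sigma$, at least one component of $v$ is strictly positive, so $v\neq 0$, forcing $\det(\one-|\Gamma_E|^2 S)=0$, which is \eqref{prop-gr-identity}.

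The argument is essentially a one-line computation once the forward recursion is in hand; the only point that needs care is justifying \eqref{eq-plan-rec} at the boundary value $z=E\in\Sigma$ (so that conjugation is allowed and denominators do not vanish), but this is built into the definition of $\Sigma$ together with the observation that $\Gamma^{(p)}_E\neq 0$.
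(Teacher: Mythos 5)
Your proof is correct and follows essentially the same route as the paper: derive the forward Green's-function recursion at the roots, manipulate it to extract $\im(\Gamma^{(p)}_E)=|\Gamma^{(p)}_E|^2\sum_q S_{pq}\im(\Gamma^{(q)}_E)$, and observe that the nonzero vector of imaginary parts is then a $1$-eigenvector of $|\Gamma_E|^2 S$. The paper reaches the same key identity by multiplying $-1/\Gamma^{(p)}_E=E+\sum_q S_{pq}\Gamma^{(q)}_E$ by $|\Gamma^{(p)}_E|^2$ and taking imaginary parts, which is the same algebra as your conjugate-and-subtract step; you also correctly note the need to check $\Gamma^{(p)}_E\neq 0$ when passing the recursion to the boundary.
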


\begin{proof}
 The recursion relation for the Green's functions is given by
 $$
 \Gamma_E^{(p)} = - \left(E+\sum_{q=1}^s S_{pq} \Gamma^{(q)}_E \right)^{-1}\;.
 $$
 Multiplying by $(\Gamma_E^{(p)})^*$ and some algebra leads to
 $$
 \left|\Gamma_E^{(p)}\right|^2 \sum_{q=1}^s S_{pq} \Gamma^{(q)}_E = -E \left|\Gamma_E^{(p)} \right|^2 
 - \left(\Gamma^{(p)}_E\right)^*\,.
 $$
 Taking imaginary parts gives
 $$
 \left|\Gamma_E^{(p)}\right|^2 \sum_{q=1}^s S_{pq} \im(\Gamma^{(q)}_E) = \im(\Gamma^{(p)}_E)
 $$
 Defining the vector $\vec\Gamma_E=(\Gamma^{(1)}_E,\ldots,\Gamma^{(s)}_E)^\top$ these equations can be read as
 $$
 |\Gamma_E|^2 S \,\im(\vec\Gamma_E) = \im(\vec\Gamma_E)\;
 $$
 and for $E\in\Sigma$, $\im(\vec\Gamma)$ is not the zero vector. Hence, $|\Gamma_E|^2 S$ has an eigenvalue $1$ which proves \eqref{prop-gr-identity}.
\end{proof}


\section{Gaussian integrals and the Fourier transform \label{app-FMF}}

The following identities are used at various parts in the article.

\begin{lemma}\label{lem-Gausint}
Let $D$ be an invertible, symmetric $k\times k$ matrix with positive definite real part,
i.e. $D=D^\top,\, \re(D)> 0$.
Then, for any complex vector $v\in\CC^k$ one has the Gaussian integral
\begin{equation}
\label{eq-Gausint}
\int_{\RR^k} e^{-\frac12 (x+v)\cdot D (x+v)}\,d^k x\,=\,\frac{(2\pi)^{k/2}}{\sqrt{\det (D)}} \;.
\end{equation}
Some care needs to be taken to select the correct branch of $\sqrt{\det(D)}$.
If $D=A+iB$ where $A>0$ is the real part, then we write
$D=\sqrt{A} (1+iA^{-1/2} B A^{-1/2}) \sqrt{A}$ 
where $\sqrt{A}$ has the same eigenspaces as $A$ and the corresponding eigenvalues are the 
positive square roots of the eigenvalues of $A$. Furthermore, 
$A^{-1/2} B A^{-1/2}$ is diagonalizable by a real orthogonal matrix. This diagonalizes
$1+iA^{-1/2} B A^{-1/2}$ as well and the eigenvalues
have all real part $1$.
Hence, we may define $\sqrt{1+iA^{-1/2} B A^{-1/2}}$ by taking the same eigenspaces and the 
principal branch of the square roots of the eigenvalues.
Then \eqref{eq-Gausint} is correct with 
$\sqrt{\det(D)}=\det(\sqrt{A}) \det(\sqrt{1+iA^{-1/2} B A^{-1/2}})$.
\end{lemma}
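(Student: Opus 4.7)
The plan is to reduce to the standard real Gaussian integral and then extend by analytic continuation, with special care for the branch of $\sqrt{\det(D)}$.

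First I would handle the base case where $v=0$ and $D$ is a real symmetric positive definite matrix. Diagonalizing $D=O^\top \Lambda O$ with $O\in{\rm O}(k)$ and $\Lambda=\diag(\lambda_1,\ldots,\lambda_k)$, $\lambda_j>0$, the substitution $y=Ox$ has unit Jacobian and factorizes the integral into a product of one-dimensional Gaussians $\int_\RR e^{-\lambda_j y_j^2/2}\,dy_j=\sqrt{2\pi/\lambda_j}$. This gives \eqref{eq-Gausint} with the positive real square root $\sqrt{\det D}=\prod_j\sqrt{\lambda_j}$. Next I would allow $v\in\CC^k$ while keeping $D$ real symmetric positive definite. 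The integrand is an entire function of each $x_j$ and decays uniformly on horizontal strips thanks to $\re(D)>0$, so Cauchy's theorem lets me shift each contour $x_j\to x_j-v_j$ component by component, recovering the $v=0$ value.

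Now I would extend to complex symmetric $D$ with $\re(D)>0$. The domain $\Omega=\{D\in\Sym_\CC(k):\re(D)>0\}$ is open and convex, hence connected, in the space of complex symmetric matrices. The left-hand side of \eqref{eq-Gausint} depends holomorphically on the entries of $D$: dominated convergence with the bound $|e^{-\frac12(x+v)\cdot D(x+v)}|\leq e^{-\frac12(x+\re v)\cdot \re(D)(x+\re v)}\cdot C(v,\im D)$ legitimizes differentiation under the integral sign and the Cauchy--Riemann equations on each entry. For the right-hand side to be holomorphic I need a globally consistent branch of $\sqrt{\det D}$ on $\Omega$, which is exactly what the factorization in the statement provides: writing $D=\sqrt{A}\,(1+iA^{-1/2}BA^{-1/2})\sqrt{A}$ with $A=\re(D)>0$ and $B=\im(D)$, the eigenvalues of $1+iA^{-1/2}BA^{-1/2}$ lie on the vertical line $\re(\zeta)=1$, so the principal branch of $\sqrt{\,\cdot\,}$ applies unambiguously, and $\sqrt{A}$ is the unique positive definite square root. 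Since $A$ and $A^{-1/2}BA^{-1/2}$ depend smoothly and holomorphically (in an appropriate real/imaginary sense) on the entries of $D$, the prescription $\sqrt{\det D}:=\det(\sqrt A)\det\sqrt{1+iA^{-1/2}BA^{-1/2}}$ is continuous on $\Omega$ and agrees with the standard positive square root when $B=0$. Standard arguments (it is a continuous square root of the holomorphic function $\det D$ on a connected open set where $\det D\neq 0$) then show the prescription is holomorphic.

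With both sides established as holomorphic functions on $\Omega$ that coincide on the totally real subset $\Omega\cap \Sym(k)$ (where the first two steps apply), the identity theorem on the connected domain $\Omega$ yields equality throughout $\Omega$. The main obstacle I anticipate is not the integral computation itself but the bookkeeping of the square-root branch: one must verify that the factorization is well-defined for every $D\in\Omega$ (which needs $A>0$ and that $A^{-1/2}BA^{-1/2}$ is real symmetric, both immediate), that it produces a continuous, hence holomorphic, function of $D$ on the connected set $\Omega$, and that it reduces to the positive square root on the real locus so that the two holomorphic extensions can be matched without sign ambiguity.
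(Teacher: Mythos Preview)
Your argument is correct, but the paper takes a shorter and more elementary route. Rather than proving the real case first and then invoking analytic continuation in the entries of $D$, the paper observes that the single change of variables $y=O\sqrt{A}\,x$ (with $O$ real orthogonal diagonalizing $A^{-1/2}BA^{-1/2}$) already turns $x^\top D x$ into a diagonal quadratic form $\sum_j (1+i\mu_j)\,y_j^2$, so the integral factors directly into one-dimensional complex Gaussians $\int_\RR e^{-z(t+c)^2}\,dt=\sqrt{\pi/z}$ with $\re z>0$ and the principal branch; the complex shift $v$ is absorbed into the complex constant $c$ at the one-dimensional level. This bypasses the machinery you introduce (holomorphy of both sides, the identity theorem on a totally real set, and the argument that a continuous square root of the holomorphic nonvanishing $\det D$ is itself holomorphic). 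Your approach, on the other hand, has the advantage of cleanly separating the computational content (real Gaussians) from the extension step, and your remark that any continuous branch of $\sqrt{\det D}$ on the connected domain $\{\re D>0\}$ is automatically holomorphic is a tidy way to justify the branch choice without tracking it through the diagonalization.
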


\begin{proof}
In one dimension one has the well known integral formula
\begin{equation}\label{eq-Gaus1d}
\int_{-\infty}^\infty e^{-z(x+c)^2} \,dx = \frac{\sqrt{\pi}}{\sqrt{z}}
\end{equation}
for $\re(z)>0$, where the square root is the principal branch and $c$ is any fixed complex number.
Now if $D=A+iB$, then use a basis change
$y = O \sqrt{A}x$, where $O$ is a real orthogonal matrix such that
$O A^{-1/2}BA^{-1/2} O^\top$ is diagonal. This leads to a Gaussian integral with a diagonal matrix
and then \eqref{eq-Gausint} follows from \eqref{eq-Gaus1d}.
\end{proof}

For functions $f(x)$ on $\RR^k$ and a $k\times k$ matrix $D$ we define
$\Mm(D)$ and $\Cc(D)$ to be the multiplication and convolution operator by $e^{\frac12 i x\cdot Dx}$, i.e.
$$
(\Mm(D) f) (x) = e^{\frac12 i x\cdot Dx} f(x)\;,\quad
(\Cc(D) f)(x) = \int e^{\frac12 i (x-y)\cdot D(x-y)} f(y)\,d^k y\;.
$$
For $D$ invertible we also define $\Ss(D)$ by
$(\Ss(D) f)(x)=f(Dx)$ which is a change of variables and defines a bounded operator on any $L^p$ space.

\begin{lemma}\label{lem-FMF}
Let $\Ff$ denote the Fourier transform on $\RR^k$, and let $D$ be a symmetric, 
invertible $k\times k$ matrix with positive semi-definite
imaginary part $\im(D)\geq 0$.

\begin{enumerate}[{\rm (i)}]
\item Then as a  map from $L^1(\RR^k)\cap L^{2}(\RR^k)$ to $L^2(\RR^k)$ one has
\begin{equation}\label{eq-FMF}
\Ff^*\Mm(D)\Ff
\,=\, 
\frac{\Cc(-D^{-1})} {(2\pi)^{k/2}\sqrt{\det(-iD)}}\;
\end{equation}
where $\sqrt{\det(-iD)}$ is selected as in Lemma~\ref{lem-Gausint}
(note that $\re(-iD)\geq 0$).

\item If $D$ is real, i.e. $\im(D)=0$, this can be re-written as
\begin{equation}\label{eq-FMF2}
 \Ff^* \Mm(D) \Ff\,=\,
 \frac{1}{\sqrt{\det(-iD)}}\,
 \Mm(-D^{-1})\,\Ss(D^{-1}) \,\Ff\,\Mm(-D^{-1})
\end{equation}
Equation \eqref{eq-FMF2} is valid in operator sense on $L^2(\RR^k)$.

\item For a real invertible, symmetric matrix $D$ define $D_1:=-D^{-1}$ and iteratively
define $D_j:=-(D+D_{j-1})^{-1}$ as long as the inverses exist, 
i.e. $D_2=-(D-D^{-1})^{-1}$, $D_3=-(D-(D-D^{-1})^{-1})^{-1}$, and so on.
Assume that the first $L$ matrices $D_1,\ldots, D_{L}$, exist. Then, one has 
as operators on $L^2(\RR^k)$
\begin{equation}\label{eq-FMF3}
\big(\Ff \Mm(D)\,\big)^{L+1}\,=\,
\left[\prod_{j=1}^{L} \frac{\Mm(D_j)\Ss(D_j)}{\sqrt{\det(i D^{-1}_j)}}\right]\,\Ff\,\Mm(D+D_{L})
\end{equation}
Note that all $D_j$ are invertible and
therefore these are indeed bounded operators.
\end{enumerate}
\end{lemma}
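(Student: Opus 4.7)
For part (i), I would take $f\in L^1(\RR^k)\cap L^2(\RR^k)$ and write
\[
(\Ff^*\Mm(D)\Ff f)(x) = (2\pi)^{-k}\int\!\int e^{\pm i(x-y)\cdot\xi+\frac{i}{2}\xi\cdot D\xi}\,d\xi\,f(y)\,dy .
\]
The inner $\xi$-integral is Gaussian; completing the square gives
\[
\tfrac{i}{2}\xi\cdot D\xi\pm i(x-y)\cdot\xi=\tfrac{i}{2}\bigl(\xi\pm D^{-1}(x-y)\bigr)\cdot D\bigl(\xi\pm D^{-1}(x-y)\bigr)-\tfrac{i}{2}(x-y)\cdot D^{-1}(x-y),
\]
and Lemma~\ref{lem-Gausint} applied to $-iD$ (whose real part is $\im(D)\geq 0$) evaluates the translated Gaussian to $(2\pi)^{k/2}/\sqrt{\det(-iD)}$. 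Collecting factors produces exactly the convolution by $\tfrac{1}{(2\pi)^{k/2}\sqrt{\det(-iD)}}\,e^{-\frac{i}{2}(x-y)\cdot D^{-1}(x-y)}$, i.e.\ the right-hand side of~(i). When $\im(D)>0$ the Gaussian is absolutely convergent and Fubini applies directly; when $\im(D)=0$ the integral is only oscillatory, and I would regularize by $D+i\varepsilon\one$ for $\varepsilon>0$, apply the absolutely convergent case, then let $\varepsilon\downarrow 0$ using $L^2$-continuity of $\Mm(D+i\varepsilon\one)\Ff f$ on the operator side and dominated convergence (exploiting $f\in L^1$) on the kernel side.

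For part (ii), assuming $D$ real, I would factor the quadratic form in the convolution kernel obtained in (i) as
\[
-\tfrac{i}{2}(x-y)\cdot D^{-1}(x-y)=-\tfrac{i}{2}\,x\cdot D^{-1}x+i\,x\cdot D^{-1}y-\tfrac{i}{2}\,y\cdot D^{-1}y .
\]
The outer factors $e^{-\frac{i}{2}x\cdot D^{-1}x}$ and $e^{-\frac{i}{2}y\cdot D^{-1}y}$ are the multiplications $\Mm(-D^{-1})$ acting in $x$ and $y$ respectively, while the cross factor $e^{i\,x\cdot D^{-1}y}$ is the Fourier kernel evaluated at $D^{-1}x$ rather than $x$, i.e.\ exactly the composition $\Ss(D^{-1})\Ff$ (up to the $\pm$ sign ambiguity already flagged by the paper for $T$). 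Grouping the three factors reproduces the right-hand side of (ii), with the prefactor $1/\sqrt{\det(-iD)}$ inherited from (i); the identity extends from $L^1\cap L^2$ to all of $L^2$ by density, unitarity of $\Ff$, and boundedness of $\Ss(D^{-1})$ on $L^2$.

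Part (iii) is then a straightforward induction on $L$, the base case $L=0$ (empty product, $D_0:=0$) being trivial. For the inductive step I would write $(\Ff\Mm(D))^{L+2}=(\Ff\Mm(D))^{L+1}\,\Ff\Mm(D)$, substitute the inductive hypothesis, and focus on the two neighbouring copies of $\Ff$ separated by $\Mm(D+D_L)$. Setting $\wt D:=D+D_L$ and applying (ii) to $\Ff\Mm(\wt D)\Ff$ converts this central block into $\tfrac{1}{\sqrt{\det(-i\wt D)}}\Mm(-\wt D^{-1})\Ss(\wt D^{-1})\Ff\Mm(-\wt D^{-1})$. By definition $D_{L+1}=-\wt D^{-1}$, so $\Mm(-\wt D^{-1})=\Mm(D_{L+1})$, $-\wt D^{-1}+D=D+D_{L+1}$, $\det(-i\wt D)=\det(iD_{L+1}^{-1})$, and $\Ss(\wt D^{-1})=\Ss(-D_{L+1})$ (which coincides with $\Ss(D_{L+1})$ once the $\pm$ convention is fixed consistently); the inductive formula with $L$ replaced by $L+1$ drops out. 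The invertibility of $D_1,\ldots,D_L$ assumed in the hypothesis is precisely what is needed to apply (ii) at each step.

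The only genuinely delicate point I anticipate is the analytic one in (i) when $\im(D)=0$, which is the case actually invoked later in the paper: the integrals there are oscillatory rather than absolutely convergent, so Fubini must be justified through the $i\varepsilon$-regularization and passage to the limit outlined above. The remainder is careful bookkeeping of sign and normalization conventions for $\Ff$, $\Ss$, and the square-root branches of $\sqrt{\det(\cdot)}$ fixed by Lemma~\ref{lem-Gausint}.
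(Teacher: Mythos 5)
Your proposal is correct and follows essentially the same route as the paper: complete the square under $\Ff^*\Mm(D)\Ff$ to invoke Lemma~\ref{lem-Gausint}, regularize by $D+i\varepsilon$ and pass to the $L^2$ limit when $\im(D)=0$, factor the convolution kernel to produce $\Mm(-D^{-1})\Ss(\cdot)\Ff\Mm(-D^{-1})$, and then iterate (your induction is just the paper's ``iterate and use $\Mm(A)\Mm(B)=\Mm(A+B)$'' written out). The only cosmetic divergence is your explicit tracking of the $\Ff$ versus $\Ff^*$ and $\Ss(\pm D^{-1})$ sign conventions, which you correctly flag as convention rather than substance.
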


\begin{proof}
First assume $\im(D)>0$, then for $f\in L^1(\RR^k)$ and any $y\in \RR^k$, 
the map $(x,w)\mapsto e^{-iy\cdot x}e^{\frac12 i x \cdot D x} e^{ix\cdot w} f(w)$ is in $L^1(\RR^{2k})$ and
one finds
\begin{align}
 & (2\pi)^k\Ff^* \Mm(D) \Ff f)(y)=
 \int e^{-iy\cdot x}e^{\frac12 i x \cdot D x} e^{ix\cdot w} f(w)\,d^kw \;d^kx \notag \\
 & \quad =\;\int\left[\int e^{\frac12[x + D^{-1} (w-y)]\,\cdot\,iD [ x + D^{-1}(w-y)]}\,d^kx\right]\;
 e^{-\frac{i}2 (y-w) \cdot D^{-1} (y-w)}\,f(w)\,d^kw \notag \\
 & \qquad = \; \frac{(2\pi)^{k/2}}{\sqrt{\det(-iD)}} \;\;\Cc(-D^{-1})\, f\, (y)\;.
\end{align}
Now, if $\im(D)$ is only positive semi-definite, we approach $D$ by $D+i\epsilon$ and let 
$f\in L^1(\RR^k)\cap L^2(\RR^k)$. Then as $\epsilon\downarrow 0$, the right hand side converges point wise
(for fixed $y$). As the $L^2$ norm is uniformly bounded by $\|f\|_2$, Dominated Convergence shows convergence in
$L^2(\RR^k)$. As the operators $\Mm(e^{\frac12 i x \cdot (D+i\epsilon) x})$ converge
for $\epsilon\downarrow 0$ in the strong operator topology, we also get convergence on the left hand side in $L^2(\RR^k)$.
For part (ii) and \eqref{eq-FMF2} note that
$$
\int e^{-\frac{i}2 (y-w) \cdot D^{-1} (y-w)}\,f(w)\,d^kw\,=\,
e^{-\frac{i}2 y \cdot D^{-1} y}\int e^{i D^{-1}y\cdot w}
e^{-\frac{i}2 w \cdot D^{-1} w}f(w) d^k w\;.
$$
As the left hand side and right hand side of \eqref{eq-FMF2} are compositions of
bounded operators on $L^2(\RR^k)$ the validity for functions in the dense subset $L^1(\RR^k)\cap L^2(\RR^k)$ implies the validity on $L^2(\RR^k)$.

For (iii) note that \eqref{eq-FMF2} also implies
$$
\Ff \Mm(D) \Ff = \frac{\Mm(-D^{-1}) \Ss(-D^{-1})}{\sqrt{\det(-iD)}}  \Ff \Mm(-D^{-1})=
\frac{\Mm(D_1) \Ss(D_1)}{\sqrt{\det(iD_1^{-1})}}  \Ff \Mm(D_1)\;.
$$
Now iteration and using $\Mm(A) \Mm(B)=\Mm(A+B)$ yields \eqref{eq-FMF3}.
\end{proof}

\begin{remark} \label{rem-FMF}
 For part (iii) one can reformulate the condition that the iteratively defined 
 $k\times k$ matrices $D_j$ exist for $j=1,\ldots,L$. 
 Note, $D_2^{-1}=-D+D^{-1}$ is the Schur complement w.r.t. the first upper block of
 the block matrix $\smat{- D & \one \\ \one & -D}$. Inductively, one obtains that
 $D_j$ is the inverse of the Schur complement of the upper left $k\times k$ block of a $jk\times jk$ matrix $\Dd_j$, This matrix has a tri-diagonal block structure given by $k\times k$ blocks with $-D$ along the diagonal and identity matrices $\one$ on the side diagonals, i.e.
 \begin{equation} \label{eq-Dd-j}
 \Dd_j:=\pmat{- D & \one &  \\ \one & \ddots & \ddots   \\  & \ddots & \ddots & \one \\ & & \one & -D}\,, \qtx{where} \Dd_1 = -D\;.
 \end{equation}
 Now, for a matrix $Y=\smat{U&V\\ W& X}$ with $X$ being invertible one has that the invertibility of $Y$ and the invertibility of the Schur complement $U-VX^{-1}W$ are equivalent.
 Therefore, one obtains by induction that the existence of all the matrices $D_1,\ldots, D_{L}$ is equivalent to the invertibility of all the matrices $\Dd_1,\ldots, \Dd_{L}$.
\end{remark}

\section{Compact operators involving Fourier transforms} \label{app-compact}

In the analysis of the Frechet derivative it is important that s certain 
power is a compact operator. In this work we need a little bit more general results compared to previous work as \cite{Sad} to prove that.
As above, for functions $f(M),\,g(M_+,M_-)$ and $h(x)$, $\Mm(f(M)),\,\Mm(g(M_+,M_-))$ and $\Mm(h(x))$ will denote the corresponding multiplication
operators. Recall $\zeta_B(M)=e^{\frac i2 \Tr(BM)}$, \, 
$\xi_B(M_+,M_-)=e^{\frac i2 \Tr(BM_+-\overline{B} M_-)} $.

For a real, symmetric $k\times k$ matrix $D$ define the $jk\times jk$ matrix $\Dd_j=\Dd_j(D)$ 
as a tri-diagonal $k\times k$ block matrix with $-D$ along the diagonal and the unit matrix $\one$ along the side diagonal, as in \eqref{eq-Dd-j}.
We denote the set of real, symmetric $k\times k$ matrices $D$, where $\Dd_1, \Dd_2\,\ldots, \Dd_L$ are invertible, by $\Ss(k,L)$.

\begin{prop}\label{lem-op-compact}
$ $
\begin{enumerate}[{\rm (i)}]
\item Let $h_1,\, h_2$ be exponentially decaying, continuous functions on $\RR^k$,
let $\Ff$ denote the Fourier transform on $\RR^k$ and let $D\in \Ss(k,L)$.
Then,
\begin{equation} \label{op-base}
 \Mm(h_1) \Ff \Mm(h_2)\,\qtx{and} \Mm(h_1) \left( \Ff \Mm(e^{\frac12 i x\cdot Dx}) \right)^{L+1} \Mm(h_2)
\end{equation}
are compact operators from $L^2(\RR^k)$ to $L^p(\RR^k)$ for any $p\in [1,\infty]$.
\item For $f_1,f_2\,\in \PE(m)$ and $B\in\Ss(m,L)$ the operators
\begin{equation}\label{op-Hh}
\Mm(f_1) T \Mm(f_2)\,,\qtx{and}  \Mm(f_1) \left(T \Mm(\zeta_B)\right)^{L+1}  \Mm(f_2)
\end{equation}
are compact operators from $\Hh_p$ to $\Hh_q$ for any $p,q\in[1,\infty]$.
(Note that the case where one Banach space is $\Hh$ is included as $\Hh=\Hh_2$ as a set, only the norm differs 
technically by a factor $2$).
\item 
For $g_1,g_2\in\PE(m)\otimes \PE(m)$, $B\in\Ss(m,L)$, the operators
\begin{equation}\label{op-Kk}
\Mm(g_1) \Tt \Mm(g_2)\,,\qtx{and}  \Mm(g_1) \left(\Tt \Mm(\xi_B) \right)^{L+1} \Mm(g_2)
\end{equation}
are compact operators from $\Kk_p$ to $\Kk_q$ for any $p,q \in[1,\infty]$.
\end{enumerate}
\end{prop}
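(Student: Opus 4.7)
The plan is to establish (i) first, by a direct Fr\'echet--Kolmogorov argument combined with Lemma~\ref{lem-FMF}(iii), and then reduce parts (ii) and (iii) to (i) via the pull-through identity \eqref{eq-T-parts}, which converts derivatives of $Tf$ into Fourier transforms, in the $\bvp$ variable, of derivatives of $f$.

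For the first operator in (i), I would view $\Mm(h_1)\Ff\Mm(h_2)$ as an integral operator with kernel $K(x,y)=c\,h_1(x)\,e^{-ix\cdot y}\,h_2(y)$. Boundedness $L^2\to L^p$ is immediate from the pointwise bound $|(Kf)(x)|\leq |h_1(x)|\,\|h_2\|_2\,\|f\|_2$ together with $h_1\in L^p$. Compactness follows from the Fr\'echet--Kolmogorov criterion applied to the image of the $L^2$ unit ball: tightness comes from exponential decay of $h_1$, and equicontinuity from splitting $(Kf)(x+t)-(Kf)(x)$ into $(h_1(x+t)-h_1(x))\Ff(h_2 f)(x+t)$ and $h_1(x)(\Ff(h_2 f)(x+t)-\Ff(h_2 f)(x))$; the first term has $L^p$-norm bounded by $\|h_1(\cdot+t)-h_1(\cdot)\|_p\,\|h_2\|_2\,\|f\|_2$ (translation continuity in $L^p$) and the second by $\|h_1\|_p\,\|(e^{-it\cdot y}-1)h_2\|_2\,\|f\|_2$ (sup bound on the Fourier transform), both vanishing uniformly in $\|f\|_2\leq 1$ as $t\to 0$. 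For the second operator, Remark~\ref{rem-FMF} with $D\in\Ss(k,L)$ gives invertibility of $D_1,\ldots,D_L$, so Lemma~\ref{lem-FMF}(iii) rewrites $(\Ff\Mm(e^{\frac{i}{2}x\cdot Dx}))^{L+1}$ as unimodular multiplications $\Mm(D_j)$ and linear changes of variable $\Ss(D_j)$ acting on a single closing factor $\Ff\Mm(e^{\frac{i}{2}x\cdot(D+D_L)x})$. Using $\Ss(A)\Ff=|\det A|^{-1}\Ff\Ss(A^{-\top})$ and that products $\Mm(D_j)\Ss(D_k)$ can be reorganized as unimodular multiplications composed with changes of variable, the entire operator takes the form $\Mm(\tilde h_1)\,\Ff\,\Mm(\tilde h_2)\,\Ss(\tilde A)$ with $\tilde h_1,\tilde h_2$ still exponentially decaying (unimodular factors preserve $|h_j|$) and $\Ss(\tilde A)$ bounded on $L^2$; compactness $L^2\to L^p$ then follows from the first case.

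For (ii), the key observation is that for $f=p\,\zeta_B\in\PE(m)$ the pull-back $f(\bvp\bvp^\top)$ is Gaussian-decaying in $\bvp\in\RR^{m\times 2n}$, since $\Tr(B\bvp\bvp^\top)$ splits as a sum of $2n$ quadratic forms $x^\top B x$ on $\RR^m$ (one per column of $\bvp$), each with positive definite imaginary part inherited from $\im B>0$. Applying the Leibniz rule \eqref{eq-Leibn0} to $\Mm(f_1) T\Mm(f_2) g$ and then the pull-through \eqref{eq-T-parts}, every derivative $D_{\bar\aaa,\aaa}\bigl(\Mm(f_1) T\Mm(f_2)g\bigr)(\bvp'\bvp'^\top)$ expands into a finite linear combination of terms of the form $\Mm(\phi_1)\,\Ff\,\Mm(\phi_2)\,\bigl((D_{\bar\ccc,\ccc}g)(\bvp\bvp^\top)\bigr)$, where $\phi_1,\phi_2$ are derivatives of $f_1,f_2$ evaluated at $\bvp'\bvp'^\top$ or $\bvp\bvp^\top$---still Gaussian-decaying, because $\PE(m)$ is closed under every $D_{\bar\bbb,\bbb}$ (each derivative of $p\,\zeta_B$ is again of the form (polynomial)$\,\times\,\zeta_B$). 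Each such term is, in the $\bvp$-representation, an operator of the type covered by (i), hence compact from the appropriate $L^r(d\bvp)$-component of $\Hh_p$ to the $L^{r'}(d\bvp')$-component of $\Hh_q$; summing finitely many compact operators together with the bounded projections $g\mapsto D_{\bar\ccc,\ccc}g$ yields compactness of $\Mm(f_1) T\Mm(f_2)\colon\Hh_p\to\Hh_q$. For the second operator of (ii), deduce $B\otimes\one_{2n}\in\Ss(2mn,L)$ from $B\in\Ss(m,L)$ via $\Dd_j(B\otimes\one_{2n})=\Dd_j(B)\otimes\one_{2n}$, then use Lemma~\ref{lem-FMF}(iii) to collapse the $(L+1)$-fold Fourier chain into a single Fourier transform flanked by unimodular multiplications and changes of variable, and conclude as in the previous case.

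Part (iii) follows by the same scheme in the doubled coordinates $(\bvp_+,\bvp_-)\in\RR^{2mn}\times\RR^{2mn}$, using $\Tt=T\otimes T$ and the two-variable analog of \eqref{eq-T-parts}, with (i) applied on $\RR^{2\cdot 2mn}$. The main obstacle I expect is the bookkeeping in (ii)--(iii): after unfolding every derivative through iterated Leibniz and \eqref{eq-T-parts}, one must verify that each of the finitely many resulting terms genuinely has Gaussian multipliers on both sides of a single (possibly Lemma~\ref{lem-FMF}(iii)-collapsed) Fourier transform, so that (i) applies uniformly across all of them. Closure of $\PE(m)$ under the differential operators $D_{\bar\aaa,\aaa}$ is precisely what makes this control go through.
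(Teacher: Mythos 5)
Your proposal is correct and follows essentially the same route as the paper: prove part (i) by a compactness criterion and collapse the $(L+1)$-fold Fourier chain with Lemma~\ref{lem-FMF}(iii) via Remark~\ref{rem-FMF}, then lift to (ii)--(iii) through \eqref{eq-T-parts} and the Leibniz rule \eqref{eq-Leibn0} together with the observation that $B\in\Ss(m,L)$ forces the block-diagonal $2mn\times 2mn$ matrix built from $B$ to lie in $\Ss(2mn,L)$. The only cosmetic difference is in part (i): the paper first treats compactly supported $h_1,h_2$ (Lipschitz equicontinuity of $\Ff(h_2 f)$ plus Arzel\`a--Ascoli gives $L^\infty$-compactness on the support of $h_1$) and then passes to exponentially decaying $h_j$ by approximating in operator norm, whereas you apply the Fr\'echet--Kolmogorov criterion directly to the exponentially decaying case; both are standard and equivalent.
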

\begin{proof}
 For (i) let us first assume that $h_1$ and $h_2$ are compactly supported and consider
 $\Mm(h_1)\Ff\Mm(h_2)$. Let $\Kk$ be the compact support of $h_2$.
 There exists a constant $C_\Kk$ such that for all $x\in \Kk$ and all $y \in\RR^k$ we have
 $|e^{ix\cdot y}-e^{ix\cdot y'}| \leq C_\Kk |y-y'|$.
 Therefore 
 \begin{align*}
|(\Ff \Mm(h_2) f)(y)-(\Ff \Mm(h_2)f)(y')| &\leq 
(2\pi)^{-k/2} C_K |y-y'|\,\| h_2 f\|_1 \\
& \leq (2\pi)^{-k/2} C_K \|h_2\|_2\,\|f\|_2\,|y-y'|\;
 \end{align*}
 and hence $\Mm(h_1) \Ff \Mm(h_2)$ maps a $L^2$ bounded sequence of functions into a sequence of equi-continuous functions,
 supported on the compact support of $h_1$. By the theorem of Arzela Ascoli we obtain a convergent subsequence in $L^\infty$ and
 hence in any $L^p$ norm.\\
 If $h_1$ and $h_2$ are continuous and exponentially decaying, then we can approach them in $\|\cdot\|_\infty$ norm
 by compactly supported continuous functions $h_{1,n},\, h_{2,n}$.
 Then $\Mm(h_{1,n}) \Ff \Mm(h_{2,n})$ approaches $\Mm(h_1) \Ff \Mm(h_2)$ in $L^2 \to L^p$ operator norm for any $p\in[1,\infty]$
(here, consider $\Mm(h_2)$ as map from $L^2$ to $L^1$, $\Ff$ as map from $L^1$ to $L^\infty$ and $\Mm(h_1)$ as map from $L^\infty$ to
$L^p$).
 
For the second operator in \eqref{op-base} note that by Remark~\ref{rem-FMF} for $D\in \Ss(k,L)$ Lemma~\ref{lem-FMF}~(iii) applies. By \eqref{eq-FMF3} one finds
after commuting the multiplication and shift operators on the left hand side that
$\Mm(h_1) \left(\Ff \Mm(e^{i x\cdot D x})\right)^{L+1} \Ff \Mm(h_2) =  \Ss_A \Mm(\hat h_1) \Ff \Mm(\hat h_2)$ where
$\hat h_1$ and $\hat h_2$ are exponentially decaying functions
and  $A$ is the product of the $D_j$ as in \ref{lem-FMF}~(iii).
Therefore, by the previous statement, this defines a compact operator from $L^2$ to any $L^p$, $p\in[1,\infty]$.

 Using \eqref{eq-T-parts} and the Leibniz-rule \eqref{eq-Leibn0} the statements (ii) and (iii)
 immediately follow from (i). 
 For the connection of the matrix $B$ in (ii) with $D$ as used in (i), note that the operator $T$ involves a Fourier transform on $\RR^{m\times 2n} \cong \RR^{2mn}$, so $k=2mn$,
 and combining the column vectors of $\bvp\in\RR^{m\times 2n}$ to one large vector $\varphi\in \RR^{2mn}$ vector,
 $\Tr(\bvp\bvp^\top B) = \Tr(\bvp^\top B \bvp)$ can be written as $\bvp\cdot D\bvp$ where $D$ is a $2mn\times 2mn$ matrix which is block-diagonal with the repeated $m\times m$ block $B$ along the diagonal, $D=\smat{B \\ & \cdot \\ & & B}$. Then $B\in\Ss(m,L)$ implies
 $D\in\Ss(2mn,L)$ so we can use part (i).
  Starting with an $\Hh$ bounded sequence
 one can subsequently construct a subsequence converging in all $L^2$ and $L^p$ norms involved in the definition of $\Hh_p$.
 Similar considerations can be made to obtain part (iii).
\end{proof}


\end{document}